\documentclass{lmcs} %%% last changed 2014-08-20

\keywords{Differential Logical Relations, Quantales, Quasi-Metrics, Partial Metrics}

\usepackage{hyperref}
\usepackage{enumitem}
\usepackage{amsmath}
\usepackage{amssymb}
\usepackage{mathabx}
\usepackage{bbm}
\usepackage{cmll}
\usepackage{mathtools}
\usepackage{prftree}
\usepackage{stmaryrd}
\SetSymbolFont{stmry}{bold}{U}{stmry}{m}{n}
\usepackage{tikz}

\newcommand{\real}{\mathsf{Real}}
\newcommand{\fst}{\mathsf{fst}}
\newcommand{\snd}{\mathsf{snd}}
\newcommand{\lam}[3]{\lambda #1\,{:}\,#2.\,#3}
\newcommand{\pair}[2]{\langle #1,#2 \rangle}
\newcommand{\type}{\mathbf{Type}}
\newcommand{\LL}{\Lambda_{\real}}
\newcommand{\prt}[1]{#1^{\bullet}}
\newcommand{\term}[1]{\mathbf{Term}_{#1}}
\newcommand{\psem}[1]{\llparenthesis\hspace{0.5pt} #1\hspace{0.5pt}\rrparenthesis}
\newcommand{\sem}[1]{\llbracket #1 \rrbracket}
\newcommand{\semp}[2]{\llbracket #1 \rrbracket_{#2}}
\newcommand{\fsem}[1]{\scalebox{0.6}[1]{\{} \hspace{-1.9pt} | #1 | \hspace{-1.9pt} \scalebox{0.6}[1]{\}}}
\newcommand{\fsemp}[3]{\scalebox{0.6}[1]{\{} \hspace{-1.9pt} | #1 | \hspace{-1.9pt} \scalebox{0.6}[1]{\}}_{#2,\hspace{0.75pt}#3}}
\newcommand{\qqm}{\mathbf{Qqm}}
\newcommand{\add}{\mathbin{::}}
\newcommand{\qt}[1]{\mathcal{Q}_{#1}}
\newcommand{\uset}[1]{|#1|}
\newcommand{\dfn}[1]{\lVert #1 \rVert}

\newcommand{\itl}[1]{#1}

\begin{document}

% If the title is longer than 55 characters, then
% specify a shorter running title as the optional
% argument to \title. The running title should be
% roughyl at most 55 characters:

\title[On The Metric Nature of (Differential)
Logical Relations] {On The Metric Nature
  \texorpdfstring{\\}{} of (Differential) Logical
  Relations}

% \titlecomment{{\lsuper*}OPTIONAL comment
%   concerning the title, \eg, if a variant or an
%   extended abstract of the paper has appeared
%   elsewhere.} \thanks{thanks, optional.} %optional

% affiliations are numbered automatically with a,
% b, c (see below) use the optional argument to
% indicate the affiliation(s) of each author omit
% the argument if there is only one author, or
% only one affiliation
\author[U.~{Dal Lago}]{Ugo {Dal
    Lago}\lmcsorcid{0000-0001-9200-070X}}[a,b]
\author[N.~Hoshino]{Naohiko
  Hoshino\lmcsorcid{0000-0003-2647-0310}}[c]
\author[P.~Pistone]{Paolo
  Pistone\lmcsorcid{0000-0003-4250-9051}}[d]

% affiliation 1 (automatically numbered a)
\address{University of Bologna, Italy} %optional
% write emails for all authors having that
% affiliation
\email{ugo.dallago@unibo.it} %optional

% affiliation 2 (automatically numbered b)
\address{INRIA Sophia Antipolis, France} %optional
% \email{name2@email2} %optional

% affiliation 3 (automatically numbered c)
\address{Sojo University, Japan} %optional
\email{nhoshino@cis.sojo-u.ac.jp} %optional

% affiliation 3 (automatically numbered d)
\address{Universit\'e Claude Bernard Lyon 1,
  France} %optional
\email{paolo.pistone@ens-lyon.fr} %optional

%% etc.

%% required for running head on odd and even
%% pages, use suitable abbreviations in case of
%% long titles and many authors:

%%%%%%%%%%%%%%%%%%%%%%%%%%%%%%%%%%%%%%%%%%%%%%%%%%%%%%%%%%%%%%%%%%%%

%% the abstract has to PRECEDE the
%% command \maketitle: be sure not to issue
%% the \maketitle command twice!

\begin{abstract}
  Differential logical relations are methods to measure distances between higher-order programs where distances between functional programs are themselves \emph{functions}, relating errors in inputs with errors in outputs. This way, differential logical relations provide a more fine-grained and contextual information of program distances. This paper aims to clarify the metric nature of differential logical relations. We introduce the notion of quasi-quasi-metrics and observe that the cartesian closed category of quasi-quasi-metric spaces reflects the construction of differential logical relations in the literature. The cartesian closed structure induces a fundamental lemma, which can be seen as a compositional reasoning principle for program distances. Furthermore, we investigate the quasi-quasi-metric spaces arising from the interpretation of types, and we prove that they satisfy variants of the strong transitivity condition and indistancy condition, as well as a weak form of the symmetry condition. In the last part of this paper, we introduce a notion of differential prelogical relations arising as a quantitative counterpart of the framework of prelogical relations. Roughly speaking, differential prelogical relations are quasi-quasi-metrics on the collection of programs. The poset of differential prelogical relations has the finest differential prelogical relation presented as a formal quantitative equational theory, while the poset lacks a coarsest differential prelogical relation. The absence of a coarsest differential prelogical relation contrasts with the situations of typed lambda calculi, where the contextual equivalences serve as the coarsest program equivalences.
\end{abstract}

\maketitle

% BEGIN introduction.tex

% !TEX root = main.tex

\section{Introduction}

Program equivalence is a crucial concept in program semantics, and ensures that different implementations of a program produce \emph{exactly} the same results under the same conditions, i.e., in any environment. This concept is fundamental in program verification, code optimization, and for enabling reliable refactoring: by proving that two programs are equivalent, developers and compiler designers can confidently replace one program with the other, knowing that the behavior and outcomes will remain consistent. In this respect, guaranteeing that the underlying notion of program equality is a \emph{congruence} is of paramount importance.

In the research communities mentioned above, however, it is known that comparing programs through a notion of equivalence without providing the possibility of measuring the \emph{distance} between non-equivalent programs makes it impossible to validate many interesting and useful program transformations \cite{Mittal2016}. All this has generated interest around the concepts of program metrics and more generally around the study of techniques through which to quantitatively compare non-equivalent programs, so as, e.g., to validate those program transformations which do not introduce too much of an error \cite{Reed2010, Plotk}.

What corresponds, in a quantitative context, to the concept of congruence? Once differences are measured by some (pseudo-)metric, a natural answer to this question is to require that any language construct does not increase distances, that is, that they are \emph{non-expansive}. Along with this, the standard properties of (pseudo-)metrics, like the triangle inequality $ d(x, z) + d(z, y) \geq d(x, y)$, provide general principles that are very useful in metric reasoning, replacing standard qualitative principles (e.g., in this case, transitivity $\mathrm{eq}(x,  z) \land \mathrm{eq}(z, y) \vdash \mathrm{eq}(x, y)$).

Still, as already observed on many occasions \cite{DGY19,DLG21}, the restriction to non-expansive language constructs with respect to some purely numerical metric turns out to be too severe in practice. On the one hand, the literature focusing on higher-order languages has mostly restricted its attention to linear or graded languages \cite{Reed2010, Gaboardi2017}, due to well-known difficulties in constructing metric models for full ``simply-typed'' languages \cite{Honsell2022}. On the other hand, even if one restricts to a linear language, the usual metrics defined over functional types are hardly useful in practice, as they assign distances to functions via a comparison of their values in the worst case. For instance, as shown in \cite{DGY19}, the distance between the identity function on $\mathbb{R}$ and the sine function is typically $\infty$ since their values grow arbitrarily far from each other in the worst case, even though their behaviour is very close near $0$.

The \emph{differential logical relations} \cite{DGY19,DLG21,PistoneLICS,DG22} have been introduced as a solution to the aforementioned problems. In this setting, which natively works for unrestricted higher-order languages, the distance between two programs is not necessarily a single number. For instance, two programs of functional type are far apart according to a function itself, which measures how the \emph{error} in the output depends on the \emph{error} in the input, but also on the \emph{value} of the input itself. This way, the notion of distance becomes sufficiently expressive of local differences, at the same time guaranteeing the possibility of compositional reasoning. This paradigm also scales to languages with duplication, recursion \cite{DLG21}, and works even in the presence of effects \cite{DG22}. A similar idea targeting a graded type theory can be found in a recent work \cite{SB26}.

In the literature on program metrics, it has become common to consider \emph{quantale}-valued metrics \cite{Hofmann2014, Stubbe2014}. This means that, as for the differential logical relations, the distance between two programs belongs to some suitable algebra of ``quantities'', i.e., a quantale. This has led to the study of different classes of quantale-valued metrics, each characterized by a particular formulation of the reflexivity condition and the triangle inequality. Among these, \emph{quasi-metrics} \cite{Goubault-Larrecq_2013} and \emph{partial metrics} \cite{matthews, Stubbe2018} have been explored for the study of domains, even for higher-order languages \cite{Geoffroy2020, maestracci2025}. While the first obeys the usual triangle inequality, or the transitivity condition, the second obeys the \emph{strong} transitivity condition $d(x, y) + d(y, z) \leq d(x, z) + d(y , y)$. The latter also takes into account the replacement of the standard reflexivity condition $d(x, x) = 0$ by the \emph{left quasi-reflexivity} condition $d(x, x) \leq d(x, y)$ and the \emph{right quasi-reflexivity} condition $d(y, y) \leq d(x, y)$, implying that a point need not be at distance zero from itself.

A natural question is thus: do the distances between programs that are obtained via differential logical relations constitute some form of quantale-valued metric? In particular, what forms of reflexivity and transitivity do these relations support? The original paper \cite{DGY19} defined symmetric differential logical relations and gave a very weak form of triangle inequality. Subsequent works, relating to the more natural asymmetric case, have either ignored the metric question \cite{DLG21,DG22} or shown that the distances produced must violate \emph{both} the reflexivity of quasi-metric and the strong transitivity of partial metrics \cite{Geoffroy2020,PistoneLICS}.

This paper provides an answer to the above question, building a bridge between current methods for higher-order program distances and the well-established literature on quantale-valued metrics. More specifically, we introduce the notion of \emph{quasi-quasi-metrics}, obtained by weakening the notion of partial metrics. The first ``quasi'' in the name ``quasi-quasi-metric'' comes from left \emph{quasi-}reflexivity, and the second ``quasi'' means the absence of symmetry. The category $\qqm$ of quasi-quasi-metric spaces is cartesian closed, and the cartesian closed structure reflects the construction of differential logical relations. This implies that the construction of differential logical relations preserves the structure of quasi-quasi-metric spaces. We introduce a simply typed lambda calculus $\LL$, and we observe that a fundamental lemma for $\LL$ follows from the cartesian closed structure of $\qqm$. Moreover, we show that quasi-quasi-metric spaces arising from the interpretation of types satisfy, in fact, a variant of the strong transitivity condition and a weak form of the symmetry condition.
% This result does not contradict the ``no-go'' results in \cite{Geoffroy2020,PistoneLICS} because our proof essentially depends on quasi-reflexivity as well as \emph{monotonicity} of morphisms between quasi-quasi-metric spaces: failure of the strong transitivity is deduced without taking quasi-reflexivity into account in \cite{Geoffroy2020}, and monotonicity is not assumed in \cite{PistoneLICS}.

The category of quasi-quasi-metric spaces is not the only way to compositionally measure distances between programs. In the last part of this paper, we introduce a notion of differential prelogical relation as a type-indexed family of quasi-quasi-metrics satisfying the fundamental lemma. This approach extends the framework of prelogical relations \cite{HD02} to the differential setting. Intuitively speaking, differential prelogical relations are ``quasi-quasi-metrics'' on the simply typed lambda calculus $\LL$. We equip the set of differential prelogical relations with a naturally defined partial order, and we investigate the existence of a least and a greatest differential prelogical relation. This exploration is motivated by the fact that the set of program equivalences for a given language does indeed form a complete lattice, with the coarsest one given by the contextual equivalence, and the finest one derived from a syntactic equational theory. A similar pattern emerges here: the poset of differential prelogical relations has arbitrary non-empty meets and the least element presented as a quantitative equational theory \cite{Plotk, Dahlqvist2023}. Conversely, the poset lacks a greatest differential prelogical relation, suggesting that there is no natural notion of ``contextual quasi-quasi-metric'' on $\LL$. This situation contrasts with that of the lattices of program equivalences, and of the lattices of program metrics for linear programming languages \cite{dallagoFSCD23}.

\paragraph{Outline of this Paper}

\begin{itemize}
\item
  In Section~\ref{sec:from-logic-relat}, we recall logical relations and how differential logical relations in the literature emerge as a quantitative counterpart of logical relations. Furthermore, we investigate how (quasi-)reflexivity and transitivity, including their quantitative counterparts, are preserved by the construction of logical relations and differential logical relations.
\item
  In Section~\ref{sec:quasi-quasi-metric}, based on the observations given in Section~\ref{sec:from-logic-relat}, we introduce a new class of quantale-valued metrics, called quasi-quasi-metrics, and examine the relationship between quasi-quasi-metrics and well-established notions of quasi-metrics and partial quasi-metrics \cite{KUNZI2006}. There are various ways to generalize strong transitivity to the quantitative setting, and among them, we focus on a variant of strong transitivity that we call left strong transitivity.
\item
  In Section~\ref{sec:category-quasi-quasi}, we give the cartesian closed category $\qqm$ of quasi-quasi-metric spaces, observing that the cartesian closed structure serves as the categorical analogue of differential logical relations. In particular, $\qqm$ is a model of the simply typed lambda calculus $\LL$ (Section~\ref{sec:target-language-its}), and the fundamental lemma (Theorem~\ref{thm:fundamental-lemma}) follows from the interpretation of $\LL$ in the cartesian closed category $\qqm$. Moreover, we show that quasi-quasi-metrics arising from the interpretation of $\LL$ satisfy the left strong transitivity condition and a weak form of the symmetry condition.
\item
  Finally, in section~\ref{sec:towards-lattice-mmms}, we introduce a notion of differential prelogical relations. The set of differential prelogical relations forms a poset with arbitrary non-empty meets, and in particular, the poset has the least differential prelogical relation. The least differential prelogical relation, which is the ``finest metric'' on $\LL$, coincides with the one derived from a quantitative equational theory on $\LL$. Furthermore, we observe that the poset lacks a greatest differential prelogical relation. The absence of a greatest differential prelogical relation implies that there is no ``contextual metric'' on $\LL$.
\end{itemize}

\section{From Logical Relations to Differential Logical Relations}
\label{sec:from-logic-relat}

In this section, we informally recall how differential logical relations in the literature emerge as a quantitative counterpart of standard logical relations, at the same time highlighting how the construction of (differential) logical relations preserves the (quantitative counterparts of) properties such as reflexivity, quasi-reflexivity, and transitivity. % , symmetry and transitivity.

\subsection{Logical Relations}
\label{sec:logical-relations}

The theory of logical relations is well-known and has various applications to establish \emph{qualitative} properties of type systems, like e.g.~termination \cite{Girard1989}, bisimulation \cite{Sangiorgi2007}, adequacy \cite{Plotkin77} or parametricity \cite{Plotkin1993, Hermida2014}. The idea is to start from some basic binary relations $\rho_{\itl{o}} \subseteq \itl{o} \times \itl{o}$ over ground types $\itl{o}$, and
\emph{lift} the relations $\rho_{o}$ to a family of binary relations $\rho_{A} \subseteq \itl{A} \times \itl{A}$ by induction on $\itl{A}$, where $\itl{A}$ varies over all simple types.
(Here and in what follows in this section, we confuse types and lambda terms with their set theoretic interpretations.) Indeed, one may consider logical relations for recursive \cite{Dreyer2009}, polymorphic \cite{Reynolds1983, Plotkin1993} or monadic \cite{GL2002} types as well, but we here limit our discussion to simple types.

The lifting is inductively given as follows:
\begin{align}
  (\itl{t},\itl{s}) \in
  \rho_{\itl{A}\times \itl{B}}
  &\  \iff  \
    (\mathrm{fst}(\itl{t}), \mathrm{fst}(\itl{s}))
    \in \rho_{\itl{A}} \ \text{and} \
    (\mathrm{snd}(\itl{t}), \mathrm{snd}(\itl{s}))
    \in \rho_{\itl{B}},
    \tag{$\times$}\label{eq:land} \\
  (\itl{t} , \itl{s})
  \in \rho_{\itl{A}\Rightarrow \itl{B}}
  &\  \iff  \
    \text{for all } 
    (\itl{p}, \itl{q}) \in \rho_{\itl{A}}, \,
    \text{we have }
    (\itl{t} \, \itl{p},
    \itl{s} \, \itl{q})
    \in \rho_{\itl{B}}
    \tag{${\Rightarrow}$}\label{eq:to}.
\end{align}
In order to derive some properties of type systems from logical relations, one typically wishes to establish the so-called \emph{fundamental lemma}, stating that every well-typed term $x : \itl{A} \vdash \itl{t} : \itl{B}$ \emph{preserves logical relations}. This means that the family of logical relations $\rho_{\itl{A}}$ defined as above satisfies the following implication
\begin{equation*}
  (\itl{p}, \itl{q}) \in \rho_{\itl{A}}
  \implies
  (\itl{t}[\itl{p}/\itl{x}] ,
  \itl{t}[\itl{q}/\itl{x}])
  \in \rho_{\itl{B}}
  \tag{Fundamental Lemma}
\end{equation*}
for all $\itl{p}, \itl{q} \in \itl{A}$. Notice that the fundamental lemma is equivalent to the instance of reflexivity: the above implication holds if and only if $(\lam{\itl{x}}{\itl{A}}{\itl{t}}, \lam{\itl{x}}{\itl{A}}{\itl{t}}) \in \rho_{\itl{A} \Rightarrow \itl{B}}$ holds for arbitrary term $\itl{x} : \itl{A} \vdash \itl{t} : \itl{B}$.

Of particular interest are the \emph{equivalence} relations, that is, the reflexive, transitive, and symmetric relations, and the \emph{preorders}, that is, the reflexive and transitive relations. As is well-known, both are fundamental structures for reasoning about program properties. We here consider preorders, as we are interested in asymmetric (differential) logical relations in this paper (see Remark \ref{rem:symmetry}). A fundamental observation is that the construction of logical relations preserves preorders: if $\rho_{\itl{A}}$ and $\rho_{\itl{B}}$ are reflexive and transitive, then we can show that $\rho_{\itl{A}\times \itl{B}}$ and $\rho_{\itl{A}\Rightarrow \itl{B}}$ are reflexive and transitive as well whenever the fundamental lemma holds. The case of the function space crucially relies on the fact that all terms of type $\itl{A} \Rightarrow \itl{B}$ must preserve relations (or, in other words, that the fundamental lemma holds): as we observed above, reflexivity $(\itl{t},\itl{t})\in \rho_{\itl{A} \Rightarrow \itl{B}}$ is equivalent to the fact that the function $\itl{t}$ is relation-preserving; transitivity, instead, follows from relation-preservation, reflexivity of $\rho_{\itl{A}}$ and transitivity of $\rho_{\itl{B}}$. In the next section, we explore quantitative counterparts of logical relations and preorders.

\begin{rem}\label{rem:symmetry}
  While differential logical relations were symmetric in the original definition \cite{DGY19}, symmetry was abandoned in all subsequent works. The first reason is that several interesting notions of program difference, like e.g.~those arising from \emph{incremental computing} \cite{DLG21, Giarrusso2014, AP2019}, are not symmetric. A second reason is that the cartesian closure is problematic in the presence of both (quasi-)reflexivity and symmetry \cite{PistoneLICS}.
\end{rem}

\subsection{Differential Logical Relations}
\label{sec:diff-logic-relat-intro}

We now discuss what happens when extending logical relations to a quantitative setting. Rather than considering binary relations $r \subseteq X \times X$ expressing that a certain property holds for two elements $x$ and $y$ in $X$, we consider \emph{ternary} relations $r \subseteq X \times Q \times X$. A triple $(x, a, y) \in r$ indicates that we have a discrepancy between $x$ and $y$ to \emph{a certain extent}, quantified by $a \in Q$. Here $Q$ is a \emph{quantale}, an ordered monoid that captures several properties of quantities as expressed by e.g.~non-negative real numbers. A leading example is a ternary relation $r \subseteq \mathbb{R} \times [0,+\infty] \times \mathbb{R}$ given by
\begin{equation*}
  (x, a, y) \in r
  \iff
  |x - y| \leq a.
\end{equation*}
It is easy to see that $r$ corresponds to the Euclidean distance between real numbers.

Given quantales $\mathcal{Q}_{o}$ and ternary relations $\rho_{\itl{o}} \subseteq \itl{o} \times \mathcal{Q}_{\itl{o}} \times \itl{o}$ for ground types $o$, just like the construction of standard logical relations, we can define a family of ternary relations $\rho_{A} \subseteq A \times \mathcal{Q}_{A} \times A$ for each type $A$ by \emph{lifting} the ternary relations $\rho_{o}$. For the lifting, we first inductively define quantales $\mathcal{Q}_{A \times B}$ and $\mathcal{Q}_{\itl{A} \Rightarrow \itl{B}}$ as follows:
\begin{align*}
  \mathcal{Q}_{\itl{A} \times \itl{B}}
  &= \mathcal{Q}_{\itl{A}}
    \times \mathcal{Q}_{\itl{B}},
  \\
  \mathcal{Q}_{\itl{A} \Rightarrow \itl{B}}
  &= (\itl{A} \times \mathcal{Q}_{\itl{A}})
    \Rightarrow \mathcal{Q}_{\itl{B}}.
\end{align*}
% where $(\itl{A}, \mathcal{Q}_{\itl{A}}) \rightarrowtriangle \mathcal{Q}_{\itl{B}}$ is the set of functions $f$ from $\itl{A} \times \mathcal{Q}_{\itl{A}}$ to $\mathcal{Q}_{\itl{A}}$ such that $f(x, -)$ is monotone for all $x \in \itl{A}$.
We equip $\mathcal{Q}_{\itl{A} \times \itl{B}}$ with the componentwise order and $\mathcal{Q}_{\itl{A} \Rightarrow \itl{B}}$ with the pointwise order. The definition of $\mathcal{Q}_{\itl{A} \times \itl{B}}$ means that we measure discrepancies between elements of $\itl{A} \times \itl{B}$ componentwise. Here, we cannot adopt the Euclidean distance or the Manhattan distance because $\mathcal{Q}_{\itl{A}}$ and $\mathcal{Q}_{\itl{B}}$ may be quantales consisting of higher-order functions as seen in the case of $\mathcal{Q}_{\itl{A} \Rightarrow \itl{B}}$. As for the function types $A \Rightarrow B$, each $d \in \mathcal{Q}_{\itl{A} \Rightarrow \itl{B}}$ represents a discrepancy between functions $t$ and $s$ from $\itl{A}$ to $\itl{B}$. The idea is as follows: for any $\itl{p}, \itl{q} \in \itl{A}$ and any $a \in \mathcal{Q}_{\itl{A}}$, if $\itl{q}$ is in the disk with center $\itl{p}$ and radius $a$, then $\itl{s} \, \itl{q}$ is in the disk with center $\itl{t} \, \itl{p}$ and radius $d(\itl{p}, a)$. In other words, $d$ characterizes similarity between each point of $\itl{t}$ and local behavior of $\itl{s}$. Figure~\ref{fig:discrepancy} illustrates how $d$ bounds the distance between $\itl{t}$ and $\itl{s}$ for the case $\itl{A} = \itl{B} = \mathbb{R}$.
% The idea of how $d$ represents a discrepancy between $t, s \colon A \to B$ is as follows: for any $\itl{p}, \itl{q} \in \itl{A}$ and any $a \in \mathcal{Q}_{\itl{A}}$, if $\itl{q}$ is in the disk with center $\itl{p}$ and radius $a$, then both $\itl{t} \, \itl{q}$ and $\itl{s} \, \itl{q}$ are in the disk with center $\itl{t} \, \itl{p}$ and raduis $d(\itl{p}, a)$. In other words, $d$ characterizes local similarity between $\itl{t}$ and $\itl{s}$. In Figure~\ref{fig:discrepancy}, we illustrate how $d$ bounds the distance between $\itl{t}$ and $\itl{s}$ for the case $\itl{A} = \itl{B} = \mathbb{R}$. The monotonicity condition required to $d \in \mathcal{Q}_{\itl{A} \Rightarrow \itl{B}}$ is a novelty of this paper, and the condition is essential to derive results in Section~\ref{sec:relating-qq-metric}.

Formalizing this idea, we can lift ternary relations $\rho_{\itl{o}} \subseteq \itl{o} \times \mathcal{Q}_{\itl{o}} \times \itl{o}$ to ternary relations $\rho_{\itl{A}} \subseteq \itl{A} \times \mathcal{Q}_{\itl{A}} \times \itl{A}$ as follows:
\begin{align*}
  ((\itl{t},\itl{s}),
  (a, b),
  (\itl{p},\itl{q}))
  \in \rho_{\itl{A} \times \itl{B}}
  &\iff
    (\itl{t}, a, \itl{p})
    \in \rho_{\itl{A}} \ \text{and} \
    (\itl{s}, b, \itl{q})
    \in \rho_{\itl{B}}, \\
  (\itl{t}, d, \itl{s})
  \in \rho_{\itl{A} \Rightarrow  \itl{B}}
  &\iff
    \text{for all }
    (\itl{p}, a, \itl{q})
    \in \rho_{\itl{A}}, \,
    \text{we have }
  % \\
  % &\mathrel{\phantom{\iff}}
    (\itl{t} \, \itl{p},
    d(\itl{p}, a),
    \itl{s} \, \itl{q})
    \in \rho_{\itl{B}}.
    % \text{ and }
    % (\itl{t} \, \itl{p},
    % d(\itl{t}, a),
    % \itl{t} \, \itl{q})
    % \in \rho_{\itl{B}}.
\end{align*}
\begin{figure}
  \centering
  \begin{tikzpicture}
    \draw[->] (-5,-2) -- ++ (9,0);
    \draw[->] (-4,-2.5) -- ++ (0,4.5);
    \draw[fill=black] (0,0) circle [radius = 0.05] node (a) {};
    \draw[dotted] (a) -- ++(0, -2)
    node[below] {$\itl{p}$};
    \draw[|-|] (0.1,-1.25) --
    node[below] {$a$} ++(1.4,0);
    \draw[|-|] (-0.1,-1.25) --
    node[below] {$a$} ++(-1.4,0);
    \draw[dotted] (a) -- ++(-4, 0)
    node[left] {$\itl{t} \, \itl{p}$};
    \draw[|-|] (-2, 0.1) --
    node[left] {$d(\itl{p}, a)$} ++(0, 0.9);
    \draw[|-|] (-2, -0.1) --
    node[left] {$d(\itl{p}, a)$} ++(0, -0.9);
    \node[draw, minimum width=3cm, minimum height=2cm]
    at (0, 0)  {};
    \draw (-2, 1.25) to[out=330, in=180] (-1.5,0.9);
    \draw[very thick] (-1.5,0.9) to[out=0, in=180] (0,-0.5)
    to[out=0, in=180] (1,0)
    to[out=0, in=180] (1.5,0.25);
    \draw (1.5,0.25) to[out=0, in=170] (3, -1.5);
    \node at (2.5,-0.75) {$\itl{s}$};
  \end{tikzpicture}
  \caption{The Idea of Differential Logical Relations at Function Types}
  \label{fig:discrepancy}
\end{figure}
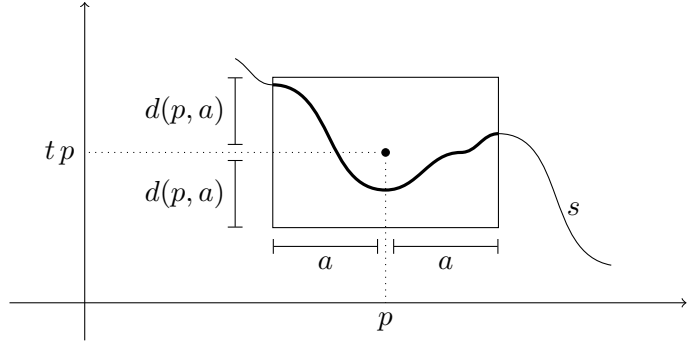
% Observe that the quantale $\mathcal{Q}_{\itl{A}\Rightarrow \itl{B}}$ for the function type is itself a set of functions relating pairs of a term of type $\itl{A}$ and a quantity in $\mathcal{Q}_{\itl{A}}$ with quantities in $\mathcal{Q}_{\itl{B}}$.
As is shown in \cite{DGY19}, a fundamental lemma holds under the following form: for any term $x:\itl{A} \vdash \itl{t} : \itl{B}$, there exists $\prt{\itl{t}} \in \mathcal{Q}_{\itl{A} \Rightarrow \itl{B}}$ such that for all $\itl{p}, \itl{q} \in \itl{A}$ and for all $a \in \mathcal{Q}_{\itl{A}}$,
\begin{equation}\label{eq:log_rel_dlr}
  (\itl{p}, a, \itl{q})
  \in \rho_{\itl{A}}
  \implies
  (\itl{t} \, \itl{p},
  \prt{\itl{t}}(\itl{p}, a),
  \itl{t} \, \itl{q})
  \in \rho_{\itl{B}}
  \tag{Fundamental Lemma}.
\end{equation}
The function $\prt{\itl{t}}$ behaves like some sort of \emph{derivative} of $\itl{t}$: it relates input errors with output errors. This connection is investigated in more detail in \cite{DLG21,PistoneLICS}.

So far, everything seems to work just as in the standard qualitative case. There is, however, an important point on which differential logical relations differ from standard logical relations. This difference is well visible when we consider the quantitative counterpart of the notions of reflexivity and transitivity. Given a ternary relation $r \subseteq X \times Q \times X$ between a set $X$ and a quantale $Q$, we can translate reflexivity and transitivity into the following forms:
\begin{align*}
  r \text{ is reflexive}
  &\iff
    \text{for all $\itl{x} \in \itl{X}$, we have $(\itl{x}, \mathbf{1}, \itl{x}) \in r$},
  \\
  r \text{ is transitive}
  &\iff
    \text{for all $(\itl{x}, a, \itl{y}) \in r$ and $(\itl{y}, b,\itl{z}) \in r$, we have $(\itl{x}, a \otimes b,\itl{z}) \in r$.}
\end{align*}
Here, $\mathbf{1}$ is the unit element and $a \otimes b$ is the multiplication of $Q$. For example, the unit element of the Lawvere quantale $[0,+\infty]$ is $0$, and its multiplication $a \otimes b$ is the addition $a + b$. If we read $(x, a, y) \in r$ as ``the distance between $x$ and $y$ is at most $a$'', the above conditions coincide with the reflexivity condition and the triangle inequality of a metric. It is clear then that we can translate the notion of a preorder into the quantitative setting by following the above translations. This is the so-called \emph{quasi-metric} \cite{Goubault-Larrecq_2013}, essentially, a metric without the symmetry condition, indeed a very well-studied class of metrics.

In this respect, while the construction of logical relations lifts preorders well to all simple types, their quantitative counterparts, the quasi-metrics, are \emph{not} preserved by the higher-order lifting of differential logical relations. Indeed, in the previous section, we observed that an essential ingredient in the lifting of the reflexivity condition is the fundamental lemma; yet, in the framework of differential logical relations, the fundamental lemma only produces, for each term $\itl{t} \in \itl{A}$, the ``reflexivity'' condition $(\itl{t}, \prt{\itl{t}}, \itl{t}) \in \rho_{\itl{A}}$, which differs from standard reflexivity in that the distance $\prt{\itl{t}}$ may \emph{not} be the unit element of $\mathcal{Q}_{\itl{A}}$. In fact, differential logical relations are inherently non-reflexive; as observed in \cite{DGY19}, a function $t \in A \Rightarrow B$ enjoys the reflexivity condition $(t, \mathbf{1}, t) \in \rho_{A \Rightarrow B}$ if and only if $t$ is a constant function. % The following example further illustrates how reflexivity fails: we investigate the self-distance of the sine function. It follows from the definition of differential logical relations that a function $d \colon \mathbb{R} \times [0,+\infty] \to [0,+\infty]$ bounds the self-distance of the sine function if and only if for all reals $x, y \in \mathbb{R}$ and for all $a \in [0,+\infty]$ such that $|x - y| \leq a$, we have $|\sin(x) - \sin(y)| \leq d(x, a)$. In particular, such bound $d$ must satisfy $|\sin(0) - \sin(1)| \leq d(0, 1)$. This means that $d(0, 1)$ must be larger than or equal to $\sin(1)$. Since the unit element $\mathbf{1} \colon \mathbb{R} \times [0,+\infty] \to [0,+\infty]$ is the constant function $\mathbf{1}(x, a) = 0$, we see $(\sin,\mathbf{1},\sin) \notin \rho_{\mathbb{R}}$, that is, reflexivity for the sine function fails. By the same argument, we can show that the self-distance between any non-constant function must be greater than the unit $\mathbf{1}$.

\subsection{Weakening the Reflexivity Condition}
\label{sec:weak-refl-cond}

The observations so far imply that the metric structure naturally arising from differential logical relations cannot be that of standard (quasi-)metrics. Rather, it must be something close to the \emph{partial} metrics \cite{matthews, Stubbe2018}, where the reflexivity condition $d(x,x) = 0$ is replaced by the \emph{left quasi-reflexivity} condition $d(x, x) \leq d(x, y)$ and the \emph{right quasi-reflexivity} condition $d(x, x) \leq d(y, y)$. In what follows, we restrict our attention primarily to left quasi-reflexivity as the presence of both left and right quasi-reflexivity is an obstacle to establishing the fundamental lemma. For brevity, we refer to left quasi-reflexivity simply as quasi-reflexivity when no ambiguity arises.

% and the transitivity condition $d(x, z) \leq d(x, y) + d(y, z)$ is replaced by the \emph{strong transitivity} condition $d(x, z) + d(y, y) \leq d(x, y) + d(y, z)$.

Let us re-examine the lifting of transitivity for logical relations. As we have observed, we can deduce transitivity of $\rho_{A \Rightarrow B}$ from reflexivity of $\rho_{\itl{A}}$, relation-preservation, and transitivity of $\rho_{B}$. However, in fact, in order to derive transitivity of $\rho_{A \Rightarrow B}$, we only need \emph{quasi-reflexivity} of $\rho_{\itl{A}}$ rather than reflexivity of $\rho_{A}$. Here, we say that $\rho_{\itl{A}}$ is quasi-reflexive when for any $(\itl{t}, \itl{s}) \in \rho_{\itl{A}}$, we have $(\itl{t}, \itl{t}) \in \rho_{\itl{A}}$, which intuitively means that only the points smaller than someone are smaller than themselves. From quasi-reflexivity of $\rho_{A}$, we can deduce transitivity of $\rho_{A \Rightarrow B}$ as follows: given $(\itl{t}, \itl{s}) \in \rho_{\itl{A} \Rightarrow \itl{B}}$ and $(\itl{s}, \itl{u}) \in \rho_{\itl{A} \Rightarrow \itl{B}}$, then for any $(\itl{p}, \itl{q}) \in \rho_{\itl{A}}$, it follows from quasi-reflexivity of $\rho_{\itl{A}}$ that $(\itl{p}, \itl{p}) \in \rho_{\itl{A}}$. Then, by relation preservation, we obtain $(\itl{t} \, \itl{p}, \itl{s} \, \itl{p}) \in \rho_{\itl{B}}$ and $(\itl{s} \, \itl{p}, \itl{u} \, \itl{q}) \in \rho_{\itl{B}}$. Finally, by transitivity of $\rho_{\itl{B}}$, we obtain $(\itl{t} \, \itl{p}, \itl{u} \, \itl{q}) \in \rho_{\itl{B}}$.

% In what follows, we use the term quasi-reflexive in place of left-quasi-reflexive as we do not consider \emph{right}-quasi-reflexivity in this paper.
In this way, one can develop a theory of logical relations for \emph{quasi-preorder}s, i.e., quasi-reflexive and transitive relations. The sole delicate point is that, in order to let quasi-reflexive relations be preserved by the construction of function spaces, one has to slightly modify the relation lifting as follows:
\begin{equation}\label{eq:tostar}
  (\itl{t} , \itl{s})
  \in \rho_{\itl{A}\Rightarrow \itl{B}}
  \iff
  \text{for all }
  (\itl{p}, \itl{q})
  \in \rho_{\itl{A}},\,
  \text{we have }
  (\itl{t} \, \itl{p},
  \itl{s} \, \itl{q})
  \in  \rho_{\itl{B}}
  \text{ and }
  (\itl{t} \, \itl{p},
  \itl{t} \, \itl{q})
  \in \rho_{\itl{B}}
  \tag{${\Rightarrow}^{\mathrm{q}}$}.
\end{equation}
Compared to \eqref{eq:to}, the lifting \eqref{eq:tostar} includes the second clause $(\itl{t} \, \itl{p}, \itl{t} \, \itl{q}) \in \rho_{\itl{B}}$ relating the action of $\itl{t}$ on $\itl{p}$ and $\itl{q}$. With this definition, one can check that if $\rho_{\itl{A}}$ and $\rho_{\itl{B}}$ are quasi-preorders, then $\rho_{\itl{A} \Rightarrow \itl{B}}$ is a quasi-preorder as well. In the quantitative setting, we need to change the definition of the differential logical relation $\rho_{A \Rightarrow B}$ as follows:
\begin{multline*}
  \qquad
  (\itl{t}, d, \itl{s})
  \in \rho_{\itl{A} \Rightarrow  \itl{B}}
  \iff
  \text{for all }
  (\itl{p}, a, \itl{q})
  \in \rho_{\itl{A}},
  \\
  (\itl{t} \, \itl{p},
  d(\itl{p}, a),
  \itl{s} \, \itl{q})
  \in \rho_{\itl{B}}
  \text{ and }
  (\itl{t} \, \itl{p},
  d(\itl{p}, a),
  \itl{t} \, \itl{q})
  \in \rho_{\itl{B}}.
  \qquad
\end{multline*}
Notice that the definition of $\rho_{\itl{A}\Rightarrow \itl{B}}$ closely imitates the clause \eqref{eq:tostar} for quasi-preorders. Thanks to this modification, we can show that $\rho_{A \Rightarrow B}$ is quasi-reflexive and transitive whenever $\rho_{A}$ is quasi-reflexive and $\rho_{B}$ is transitive. Here, we say that $\rho_{A}$ is quasi-reflexive when for all $(t, a, s) \in \rho_{A}$, we have $(t, a, t) \in \rho_{A}$.

In this paper, we also refine the definition of differential logical relations in another key respect: we replace the definition of $\mathcal{Q}_{A \Rightarrow B}$ as follows:
\begin{equation*}
  \mathcal{Q}_{A \Rightarrow B} =
  (A, \mathcal{Q}_{A}) \rightarrowtriangle \mathcal{Q}_{B}
\end{equation*}
where the right hand side is the set of functions $d \colon A \times \mathcal{Q}_{A} \to \mathcal{Q}_{B}$ such that for all $x \in X$, $d(x, -) \colon \mathcal{Q}_{A} \to \mathcal{Q}_{B}$ is \emph{monotone}. The monotonicity condition means that larger input errors result in larger output errors. This restriction plays a crucial role in the preservation of variants of strong transitivity and indistancy, as well as a weak form of symmetry by the construction of differential logical relations (See Section~\ref{sec:relating-quasi-quasi-metric}).

%%% Local Variables:
%%% mode: LaTeX
%%% TeX-master: "tmp.tex"
%%% End:

\section{Quasi-Quasi-Metric}
\label{sec:quasi-quasi-metric}

In the following sections, we develop a theory of quantale-valued metrics for differential logical relations, with a specific focus on \emph{quasi}-reflexivity, transitivity, and variants of strong transitivity. Motivated by the observation on the lifting of transitivity, we introduce the notion of quasi-quasi-metrics as quasi-reflexive and transitive quantale-valued predicates, and examine how they are related to the well-established notions of quasi-metrics and partial quasi-metrics.

\subsection{Quantale and the Definition of Quasi-Quasi-Metric}
\label{sec:quant-defin-quasi}

Following the approach of differential logical relations, we utilize quantales to measure discrepancies between programs. We first recall the definition of a quantale.

\begin{defi}
  A \emph{quantale} is a complete lattice $(Q,\sqsubseteq)$ endowed with the structure of a monoid $(Q, \mathbf{1}, \otimes)$ such that for all $x \in X$, both $x \otimes (-)$ and $(-) \otimes x$ preserve arbitrary suprema. A quantale $Q$ is \emph{integral} (cf.~\cite{Hofmann2014}, p.~148) when the unit $\mathbf{1}$ is the greatest element, and $Q$ is \emph{commutative} when the multiplication $(-) \otimes (-)$ is commutative.  
\end{defi}

Throughout this paper, we simply denote a quantale $(Q, \sqsubseteq, \mathbf{1},\otimes)$ by its underlying set $Q$ and always assume that quantales are commutative and integral. We reserve the symbols $\bigsqcap$ and $\bigsqcup$ for the meet and the join operations of quantales, respectively.

Let $Q$ be a quantale. We define the \emph{residual} $x \multimap y$ of $x, y \in Q$ by
\begin{equation*}
  x \multimap y = \bigsqcup \{z \in Q \mid x \otimes z \sqsubseteq y\}. 
\end{equation*}
This is right adjoint to $x \otimes (-)$, and hence, for all $y, z \in Q$, we have
\begin{equation*}
  z \sqsubseteq x \multimap y
  \iff x \otimes z \sqsubseteq y.
\end{equation*}
% A commutative quantale $Q$ is \emph{divisible} \cite{Stubbe2018} if for all $x, y \in Q$, we have $x \sqsubseteq y$ whenever $y \otimes (y \multimap x) = x$. Equivalently, $Q$ is divisible iff, whenever $x \sqsubseteq y$, there exists $z$ such that $x = y \otimes z$. 

\begin{exa}
  The \emph{Lawvere quantale} consists of the set of non-negative extended reals $[0,+\infty]$ and the \emph{reversed} order $a \sqsubseteq b$ iff $a \geq b$ equipped with the addition of real numbers. Notice that the ordering of quantales is \emph{reversed} with respect to usual metric intuitions: the unit $0$ is the greatest element, and joins correspond to infimums. For $a, b \in [0,+\infty]$, the residual $a \multimap b$ is given by
  \begin{equation*}
    a \multimap b =
    \begin{cases}
      b \dotdiv a, & \text{if $a < \infty$}, \\
      0, & \text{if $a = \infty$}
    \end{cases}
  \end{equation*}
  where $b \dotdiv a$ is the truncated subtraction.
\end{exa}

\begin{exa}\label{eg:xqq}
  For a set $X$, and for quantales $Q$ and $P$, we define a quantale $(X, Q) \rightarrowtriangle P$ to be the set of functions $f \colon X \times Q \to P$ such that $f(x, -) \colon Q \to P$ is monotone for all $x \in X$. We equip $(X, Q) \rightarrowtriangle P$ with the pointwise order and the pointwise multiplication. The unit element of $(X, Q) \rightarrowtriangle P$ is the constant function $\mathbf{1}(x, a) = \mathbbm{1}$ where $\mathbbm{1}$ is the unit of $P$. The residual $f \multimap g$ is given by
  \begin{equation*}
    (f \multimap g)(x, a) =
    \bigsqcap_{b \sqsupseteq a}
    f(x, b) \multimap g(x, b).
  \end{equation*}
  In this paper, quantales of this form arise from the interpretation of function types. 
\end{exa}

We then define the notion of quantale-valued quasi-quasi-metrics as the quantitative counterpart of the notion of quasi-preorders (being ``quasi'' both in the sense of quasi-metrics, i.e. the rejection of symmetry, and of quasi-preorders, i.e. the weakening of reflexivity).

\begin{defi}
  Let $X$ be a set, and let $Q$ be a quantale. We say that a ternary relation $r \subseteq X \times Q \times X$ is \emph{reflexive}, \emph{quasi-reflexive}, or \emph{transitive} if the following conditions hold, respectively:
  \begin{align*}
    \text{$r$ is \emph{reflexive}}
    &\iff \text{for all $x \in X$, we have $(x, \mathbf{1}, x) \in r$};
    \\
    \text{$r$ is \emph{quasi-reflexive}}
    &\iff \text{for all $(x, a, y) \in r$, we have $(x, a, x) \in r$};
    \\
    \text{$r$ is \emph{transitive}}
    &\iff \text{for all $(x, a, y) \in r$ and $(y, b, z) \in r$, we have $(x, a \otimes b, z) \in r$.}
  \end{align*}
\end{defi}
\begin{defi}
  For a set $X$ and a quantale $Q$, a \emph{$Q$-valued (pseudo) quasi-quasi-metric} on $X$ is a ternary relation $r \subseteq X \times Q \times X$ that is quasi-reflexive and transitive. A \emph{quasi-quasi-metric space} is a triple $(X, Q, r)$ consisting of a set $X$, a quantale $Q$ and a quasi-quasi-metric $r \subseteq X \times Q \times X$.
\end{defi}
The ``pseudo'' prefix stands for the fact that the usual separation property, namely that $(x, \mathbf{1}, y) \in r$ implies $x = y$, needs not hold. As we do not investigate separation property throughout this paper, all metric notions discussed in the rest of the paper are to be understood as implicitly ``pseudo''. When $Q$ and $X$ are clear from the context, we simply refer to $r$ as a quasi-quasi-metric, rather than a $Q$-valued quasi-quasi-metric on $X$. To provide intuition for quasi-quasi-metrics, we give an example.
% The construction of differential logical relations preserves the structure of quasi-quasi-metrics. Indeed, in Section~\ref{sec:category-quasi-quasi}, we will observe that the argument showing that the quasi-preorders lift to all simple types scales well to the quantitative setting, that is, a quasi-quasi-metric on the ground type induces quasi-quasi-metrics on all simple types.

\begin{exa}
  Let $(X, d)$ be a quasi-metric space in the standard sense: $X$ is a set and $d \colon X \times X \to [0,+\infty]$ is a function satisfying $d(x, x) = 0$ and $d(x, y) + d(y, z) \sqsupseteq d(x, z)$. Then, for any function $w \colon X \to [0,+\infty]$, the ternary relation $r \subseteq X \times [0,+\infty] \times X$ given by
  \begin{equation*}
    (x, a, y) \in r \iff
    w(x) \sqcap d(x, y) \sqsupseteq a
  \end{equation*}
  is a quasi-quasi-metric on $X$. % If $w(x) \neq 0$ for some $x \in X$, then $r$ is not reflexive. Indeed, since $w(x) \sqcap d(x, x) = w(x) \not\sqsupseteq 0$, we see that $(x, 0, x)$ is not an element of $r$.
\end{exa}

The introduction of quasi-quasi-metrics naturally raises several questions: how can we formalize these structures in the form of standard metrics, and how does their structure relate to more established notions like quasi-metrics and partial metrics? In the following sections, we explore these relationships.

\subsection{Quantale-Valued Relations}
\label{sec:quant-valu-relat}

While we introduced quasi-quasi-metrics as ternary relations following the presentation of differential logical relations \cite{DGY19} and quantitative equational theories \cite{Plotk,Dahlqvist2023}, the standard metrics are presented as functions from the product of its underlying set to a quantale of distances. We bridge this gap by defining translations between these two presentations of ``metrics''. We first prepare terminology.

\begin{defi}
  Given a quantale $Q$ and a set $X$, a \emph{$Q$-relation over $X$} is a function $\phi \colon X \times X \to Q$, which can be understood as a matrix with values in $Q$; and a \emph{$Q$-predicate over $X$} is a ternary relation $r \subseteq X \times Q \times X$. % We denote the set of $Q$-relations over $X$ and $Y$ by $\qrel{Q}(X, Y)$ and denote the set of $Q$-predicates between $X$ and $Y$ by $\qpred{Q}(X, Y)$.
\end{defi}

We then define translations between $Q$-relations and $Q$-predicates. Let $X$ be a set, and let $Q$ be a quantale. For a $Q$-relation $\phi \colon X \times X \to Q$, we define a $Q$-predicate $\check{\phi} \subseteq X \times Q \times X$ by
\begin{equation*}
  \check{\phi} =
  \{(x, a, y) \in
  X \times Q \times X
  \mid a \sqsubseteq \phi(x, y)\}.
\end{equation*}
Conversely, for a $Q$-predicate $r \subseteq X \times Q \times X$, we define a $Q$-relation $\hat{r} \colon X \times X \to Q$ by
\begin{equation*}
  \hat{r}(x,y) =
  \bigsqcup
  \{ a \in Q
  \mid
  (x, a, y) \in r\}.
\end{equation*}
Intuitively, $(x, a, y) \in \check{\phi}$ means that $a$ is an approximation of the distance between $x$ and $y$, and $\hat{r}(x,y)$ is the distance between $x$ and $y$. This correspondence becomes bijective when we restrict the translations to \emph{closed} $Q$-predicates.

\begin{defi}
  Let $X$ be a set, and let $Q$ be a quantale. A \emph{closed $Q$-predicate} over $X$ is a $Q$-predicate $r \subseteq X \times Q \times X$ such that
  \begin{itemize}
  \item if $(x, a, y) \in r$ and $b \sqsubseteq a$, then $(x, b, y) \in r$; and
  \item for any family $\{a_{i} \in Q\}_{i \in I}$, if $(x, a_{i}, y) \in r$ for all $i \in I$, then $(x, \bigsqcup_{i \in I} a_{i}, y) \in r$.
  \end{itemize}
\end{defi}

It is straightforward to see that for any $Q$-relation $\phi$, the $Q$-predicate $\check{\phi}$ is closed. Now, we prove that the translations restricted to closed $Q$-predicates form a bijection.
\begin{prop}\label{prop:rhd}
  Let $r \subseteq X \times Q \times X$ be a closed $Q$-predicate. For any $x,y \in X$ and $a \in Q$,
  \begin{equation*}
    a \sqsubseteq \hat{r}(x, y)
    \iff
    (x, a, y) \in r.
  \end{equation*}
  Furthermore, $\check{(-)}$ and $\hat{(-)}$ form a bijection between the set of $Q$-relations over $X$, and the set of closed $Q$-predicates over $X$.
\end{prop}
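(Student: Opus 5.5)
The equivalence is proved in two directions, and the bijection then follows by checking both round trips.

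For the direction from right to left, suppose $(x, a, y) \in r$. Then $a$ belongs to the set $\{b \in Q \mid (x, b, y) \in r\}$, whose join is $\hat{r}(x,y)$. Hence $a \sqsubseteq \hat{r}(x, y)$; this direction does not use closedness.

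For the direction from left to right, I first use the second closedness condition. Taking the family indexed by $S = \{b \in Q \mid (x,b,y) \in r\}$ itself, it gives $(x, \hat{r}(x,y), y) \in r$. Two remarks make this step safe.
\begin{itemize}
\item The family may be empty; then the join is the bottom element. The condition as stated quantifies over arbitrary families, including the empty one, so it still yields $(x, \bot, y) \in r$ for all $x, y$.
\item I will make explicit that this degenerate case is covered by the definition, since this is the only delicate point.
\end{itemize}
Then, given $a \sqsubseteq \hat{r}(x,y)$, downward closure (the first closedness condition) yields $(x, a, y) \in r$.

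For the bijection, I show the two composites are identities.
\begin{itemize}
\item For a $Q$-relation $\phi$, $\hat{\check{\phi}}(x,y) = \bigsqcup\{a \mid a \sqsubseteq \phi(x,y)\} = \phi(x,y)$, because $\phi(x,y)$ is the greatest element of that set.
\item For a closed $Q$-predicate $r$, $(x,a,y) \in \check{\hat{r}}$ iff $a \sqsubseteq \hat{r}(x,y)$ iff $(x,a,y) \in r$, by the first part. Hence $\check{\hat{r}} = r$.
\end{itemize}
Finally, $\check{\phi}$ is closed, as already remarked in the text: downward closure is immediate, and if each $a_i \sqsubseteq \phi(x,y)$ then $\bigsqcup_i a_i \sqsubseteq \phi(x,y)$. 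So $\check{(-)}$ indeed lands in closed predicates. Together these give mutually inverse maps, completing the proof.

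The only real obstacle is the empty-family case in the second direction, which is handled by reading the closure condition as including $I = \emptyset$.
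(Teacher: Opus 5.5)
Your proposal is correct and matches the paper's proof essentially step for step: closure under joins gives $(x,\hat{r}(x,y),y)\in r$, downward closure then gives the left-to-right direction, and you check the two round trips $\hat{\check{\phi}}=\phi$ and $\check{\hat{r}}=r$ exactly as the paper does. Your explicit treatment of the empty family, which yields $(x,\bot,y)\in r$, and your check that $\check{\phi}$ is closed are valid details that the paper leaves implicit.
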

\begin{proof}
  We first show that $a \sqsubseteq \hat{r}(x, y)$ if and only if $(x, a, y) \in r$. Suppose $a \sqsubseteq \hat{r}(x, y)$. It follows from $r$ being closed under joins that $(x, \hat{r}(x, y), y)$ is an element of $r$. Therefore, since $r$ is downward closed, we obtain $(x, a, y) \in r$. The other implication follows from the definition of $\hat{r}(x, y)$. We next check that the constructions $\check{(-)}$ and $\hat{(-)}$ form a bijection. For any $Q$-relation $\phi \colon X \times X \to Q$, we have
  \begin{equation*}
    \hat{\check{\phi}}(x, y)
    = \bigsqcup
    \{a \in Q \mid (x, a, y) \in \check{\phi}\}
    = \bigsqcup
    \{a \in Q \mid a \sqsubseteq \phi(x, y)\}
    = \phi(x, y).
  \end{equation*}
  Conversely, for any closed $Q$-predicate $r \subseteq X \times Q \times X$,
  \begin{equation*}
    (x, a, y) \in \check{\hat{r}}
    \iff a \sqsubseteq \hat{r}(x, y)
    \iff (x, a, y) \in r.
  \end{equation*}
  The latter equivalence follows from the first part of this proposition.
\end{proof}

We say that a $Q$-relation $\phi$ is reflexive, quasi-reflexive, or transitive whenever $\check{\phi}$ is. Similarly, we say that $\phi$ is a quasi-quasi-metric whenever $\check{\phi}$ is a quasi-quasi-metric. From the bijective correspondence between $Q$-relations and closed $Q$-predicates, we obtain the following characterization of reflexivity, quasi-reflexivity and transitivity in terms of quantale-valued relations.
\begin{prop}\label{prop:rqt}
  For any $Q$-relation $\phi \colon X \times X \to Q$, the following equivalences hold:
  \begin{align*}
    \text{$\phi$ is reflexive}
    &\iff
      \text{for all $x \in X$, we have $\mathbf{1} \sqsubseteq \phi(x, x)$},
    \\
    \text{$\phi$ is quasi-reflexive}
    &\iff
      \text{for all $x,y \in X$, we have $\phi(x, y) \sqsubseteq \phi(x, x)$},
    \\
    \text{$\phi$ is transitive}
    &\iff
      \text{for all $x,y,z \in X$, we have $\phi(x, y) \otimes \phi(y, z) \sqsubseteq \phi(x, z)$}.
  \end{align*}
\end{prop}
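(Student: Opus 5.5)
The plan is to unfold each condition on $\check{\phi}$ and reduce it to the stated inequality, using only the defining property $(x,a,y)\in\check{\phi} \iff a \sqsubseteq \phi(x,y)$. Equivalently, I will use the first half of Proposition~\ref{prop:rhd}: $\check{\phi}$ is closed and $\hat{\check{\phi}}=\phi$. In each case the key move is to instantiate the universally quantified $a$ (and $b$) at the largest admissible values $\phi(x,y)$ (and $\phi(y,z)$). Downward closure then recovers the general case.

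\emph{Reflexivity.} $\check{\phi}$ is reflexive iff $(x,\mathbf{1},x)\in\check{\phi}$ for all $x$. By definition of $\check{\phi}$, this says exactly $\mathbf{1}\sqsubseteq\phi(x,x)$, so nothing more is needed.

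\emph{Quasi-reflexivity.} For ($\Leftarrow$), suppose $(x,a,y)\in\check{\phi}$. Then $a\sqsubseteq\phi(x,y)\sqsubseteq\phi(x,x)$, so $(x,a,x)\in\check{\phi}$. For ($\Rightarrow$), note that $(x,\phi(x,y),y)\in\check{\phi}$ trivially. Quasi-reflexivity then gives $(x,\phi(x,y),x)\in\check{\phi}$, which means $\phi(x,y)\sqsubseteq\phi(x,x)$.

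\emph{Transitivity.} For ($\Leftarrow$), suppose $a\sqsubseteq\phi(x,y)$ and $b\sqsubseteq\phi(y,z)$. Since $\otimes$ preserves joins in each argument, it is monotone. Hence $a\otimes b\sqsubseteq\phi(x,y)\otimes\phi(y,z)\sqsubseteq\phi(x,z)$, so $(x,a\otimes b,z)\in\check{\phi}$. For ($\Rightarrow$), apply transitivity to the triples $(x,\phi(x,y),y)$ and $(y,\phi(y,z),z)$, which lie in $\check{\phi}$ trivially. This yields $(x,\phi(x,y)\otimes\phi(y,z),z)\in\check{\phi}$, which is the required inequality.

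There is no real obstacle here. The only point needing a remark is the monotonicity of $\otimes$ in the transitivity step. It follows from preservation of binary joins: if $a\sqsubseteq a'$ then $a'\otimes b=(a\sqcup a')\otimes b=(a\otimes b)\sqcup(a'\otimes b)\sqsupseteq a\otimes b$.
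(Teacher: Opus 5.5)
Your proof is correct and follows the same approach as the paper. For the forward directions you instantiate at the maximal triples $(x,\phi(x,y),y)$ and $(y,\phi(y,z),z)$, and for the converses you use the definition of $\check{\phi}$ together with monotonicity of $\otimes$. The paper writes out only the quasi-reflexive case and calls the others similar; your proposal fills in the reflexive and transitive cases correctly, including the justification of monotonicity of $\otimes$.
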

\begin{proof}
  We only check the second equivalence. We can similarly check other equivalences. Let $x$ and $y$ be elements of $X$. We note that by Proposition~\ref{prop:rhd}, we have $(x, \phi(x, y), y) \in \check{\phi}$. If $\phi$ is quasi-reflexive, then from $(x, \phi(x, y), y) \in \check{\phi}$, we obtain $(x, \phi(x, y), x) \in \check{\phi}$, and therefore, $\phi(x, y) \sqsubseteq \phi(x, x)$. Conversely, if $\phi(x, y) \sqsubseteq \phi(x, x)$, then for any $(x, a, y) \in \check{\phi}$, we obtain $a \sqsubseteq \phi(x, y) \sqsubseteq \phi(x, x)$. Hence, $(x, a, x) \in \check{\phi}$.
\end{proof}

\subsection{Quasi-Metrics and Partial Quasi-Metrics}
\label{sec:quasi-metr-part}

In this section, we use the language of quantale-valued relations and quantale-valued predicates to explore the connections between the notion of quasi-quasi-metrics and more well-established notions of quasi-metrics and partial quasi-metrics. We first recall the notion of quasi-metrics.
\begin{defi}
  Let $Q$ be a quantale. A \emph{quasi-metric} (or a \emph{hemi-metric}) \emph{on $X$} is a reflexive and transitive $Q$-predicate $r \subseteq X \times Q \times X$. We say that a $Q$-relation is a \emph{quasi-metric} when the associated $Q$-predicate is a quasi-metric, i.e., a $Q$-relation $\phi \colon X \times X \to Q$ is a \emph{quasi-metric on $X$} whenever $\phi$ satisfies the reflexivity condition $\mathbf{1} \sqsubseteq \phi(x, x)$ and the transitivity condition $\phi(x, y) \otimes \phi(y, z) \sqsubseteq \phi(x, z)$.
\end{defi}
When $Q$ is the Lawvere quantale $[0,+\infty]$, the above definition coincides with the definition of a quasi-metric given in \cite{Goubault-Larrecq_2013}. Since reflexivity implies quasi-reflexivity, every quasi-metric is a quasi-quasi-metric. Below, we give two examples of quasi-metrics.
\begin{exa}\label{eg:obsmet}
  Let $Q$ be a quantale, and let $\phi \colon X \times X \to Q$ be a $Q$-relation over a set $X$. We define $Q$-relations $\phi^{l}, \phi^{r} \colon X \times X \to Q$ by
  \begin{align*}
    \phi^{l}(x, y)
    &= \bigsqcap_{z \in X}
      \phi(y, z) \multimap \phi(x, z),
    \\
    \phi^{r}(x, y)
    &= \bigsqcap_{z \in X}
      \phi(z, x) \multimap \phi(z, y).
  \end{align*}
  It is straightforward to check that $\phi^{l}$ and $\phi^{r}$ are quasi-metrics on $X$. We call $\phi^{l}$ and $\phi^{r}$ the \emph{left and right observational quasi-metrics} of $\phi$. Intuitively, $\phi^{l}(x, y)$ measures how $x$ is separated from $y$ by observing discrepancies $\phi(y, z)$ and $\phi(x, z)$ for all $z \in X$; similarly, $\phi^{r}(x, y)$ measures how $x$ is separated from $y$ in the other direction.
\end{exa}

We then recall the definition of a non-symmetric variant of the partial metrics from \cite{matthews}, called partial \emph{quasi-}metrics \cite{KUNZI2006}. As we anticipated, these are metrics $\phi$ where the usual reflexivity condition $\phi(x, x) = \mathbf{1}$ is replaced by the quasi-reflexivity condition $\phi(x, y) \sqsubseteq \phi(x, x)$ (and $\phi(x, y) \sqsubseteq \phi(y, y)$). Moreover, unlike quasi-quasi-metrics or quasi-metrics, partial quasi-metrics satisfy a \emph{strong} transitivity condition. For a $[0, +\infty]$-relation $\phi \colon X \times X \to [0,+\infty]$, this condition reads as
\begin{equation*}% \label{eq:pqm}
  \phi(x, y) + \phi(y, z)
   \sqsubseteq \phi(x, z) + \phi(y, y).
  \tag{Strong transitivity in $[0, +\infty]$}
\end{equation*}
This means that the self-distance $\phi(y, y)$ of the central term $y$ is ``added to the distance $\phi(x, z)$ in the triangle inequality''. This strong transitivity implies the following inequality:
\begin{equation*}
  (\phi(y, y) \multimap \phi(x, y)) + \phi(y, z)
  \sqsubseteq \phi(x, z).
\end{equation*}
We call this property \emph{left strong transitivity} and generalize it to arbitrary quantale-valued relations as follows:
\begin{defi}
  We say that a $Q$-relation $\phi \colon X \times X \to Q$ is \emph{left strong transitive} if for all $x, y, z \in X$, we have
  \begin{equation*}
    (\phi(y, y) \multimap \phi(x, y)) \otimes \phi(y, z)
    \sqsubseteq \phi(x, z).
  \end{equation*}
  A $Q$-relation $\phi \colon X \times X \to Q$ is a \emph{left strong quasi-quasi-metric} on $X$ if and only if $\phi$ is quasi-reflexive and left strong transitive.
\end{defi}

It follows from integrality of $Q$ that if a $Q$-relation $\phi \colon X \times X \to Q$ is left strong transitive, then $\phi$ is transitive. Hence, left strong quasi-quasi-metrics are quasi-quasi-metrics. Let us give an example of a left strong quasi-quasi-metric.
\begin{exa}
  Let $Q$ be a quantale. For any function $\omega \colon X \to Q$ and any quasi-metric $\phi \colon X \times X \to Q$, a $Q$-relation $\psi \colon X \times X \to Q$ given by
  \begin{equation*}
    \psi(x, y) = \omega x \otimes \phi(x, y)
  \end{equation*}
  is a left strong quasi-quasi-metric on $X$. A similar construction of partial metrics can be found in \cite{matthews}.
\end{exa}

We can characterize left strong quasi-quasi-metrics in terms of $Q$-predicates and self-distances. It is not clear how to characterize them without using self-distances.
\begin{prop}\label{prop:lpqm}
  For any closed $Q$-predicate $r \subseteq X \times Q \times X$, the associated $Q$-relation $\hat{r}$ is a left strong quasi-quasi-metric if and only if
  \begin{itemize}
  \item
    $r$ is quasi-reflexive; and
  \item
    for any $x, y \in X$ and $a,b \in Q$, if $(x, a \otimes \hat{r}(y, y), y) \in r$ and $(y, b, z) \in r$, then $(x, a \otimes b, z) \in r$.
  \end{itemize}
  We also refer to the second property as \emph{left strong transitivity}, and we call $r$ a \emph{left strong quasi-quasi-metric} whenever $r$ is quasi-reflexive and left strong transitive.
\end{prop}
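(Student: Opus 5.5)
Throughout, write $\phi = \hat{r}$. Since $r$ is closed, Proposition~\ref{prop:rhd} gives $r = \check{\phi}$ and $(x, c, y) \in r \iff c \sqsubseteq \phi(x,y)$. By definition, $\phi$ is quasi-reflexive exactly when $\check{\phi} = r$ is, so the first bullet matches the quasi-reflexivity half of ``left strong quasi-quasi-metric''. It therefore remains to show that left strong transitivity of $\phi$, namely
\begin{equation*}
  (\phi(y,y) \multimap \phi(x,y)) \otimes \phi(y,z) \sqsubseteq \phi(x,z),
\end{equation*}
is equivalent to the second bullet. I will prove each direction separately, using only the residuation adjunction $c \sqsubseteq u \multimap v \iff u \otimes c \sqsubseteq v$ and commutativity of $\otimes$.

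For $(\Rightarrow)$, assume $(x, a \otimes \phi(y,y), y) \in r$ and $(y, b, z) \in r$. These mean $a \otimes \phi(y,y) \sqsubseteq \phi(x,y)$ and $b \sqsubseteq \phi(y,z)$. By commutativity and the adjunction, the first gives $a \sqsubseteq \phi(y,y) \multimap \phi(x,y)$. Since $\otimes$ is monotone (it preserves joins), we get $a \otimes b \sqsubseteq (\phi(y,y) \multimap \phi(x,y)) \otimes \phi(y,z) \sqsubseteq \phi(x,z)$. Hence $(x, a \otimes b, z) \in r$.

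For $(\Leftarrow)$, put $a = \phi(y,y) \multimap \phi(x,y)$ and $b = \phi(y,z)$. The counit of the adjunction gives $\phi(y,y) \otimes (\phi(y,y) \multimap \phi(x,y)) \sqsubseteq \phi(x,y)$, so $(x, a \otimes \phi(y,y), y) \in r$. Also $(y, b, z) \in r$ holds trivially. The second bullet then yields $(x, a \otimes b, z) \in r$, which is exactly the desired inequality.

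There is no real obstacle here. The only points needing care are the commutativity step when applying the residual adjunction, and invoking Proposition~\ref{prop:rhd} to pass between membership in $r$ and inequalities with respect to $\hat{r}$; this is where closedness of $r$ is used.
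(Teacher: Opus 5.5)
Your proof is correct and follows the paper's argument essentially step for step. You reduce to the transitivity clause via $\check{\hat{r}} = r$, prove $(\Rightarrow)$ by residuation and monotonicity of $\otimes$, and prove $(\Leftarrow)$ by instantiating $a = \hat{r}(y,y) \multimap \hat{r}(x,y)$ and $b = \hat{r}(y,z)$; your explicit remarks on commutativity and on where closedness enters only spell out what the paper leaves implicit.
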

\begin{proof}
  Since $\hat{r}$ is quasi-reflexive if and only if $r$ is quasi-reflexive, we only need to check that $\hat{r}$ is left strong transitive if and only if $r$ is left strong transitive. We first suppose that $\hat{r}$ is left strong transitive. If $(x, a \otimes \hat{r}(y, y), y) \in r$ and $(y, b, z) \in r$, then by the definition of $\hat{r}$, we have $a \otimes \hat{r}(y, y) \sqsubseteq \hat{r}(x, y)$ and $b \sqsubseteq \hat{r}(y, z)$. Hence, by left strong transitivity of $\hat{r}$, we obtain
  \begin{equation*}
    a \otimes b \sqsubseteq
    (\hat{r}(y, y) \multimap \hat{r}(x, y)) \otimes \hat{r}(y, z)
    \sqsubseteq \hat{r}(x, z),
  \end{equation*}
  which implies $(x, a \otimes b, z) \in r$. Conversely, if $r$ is left strong transitive, then since
  \begin{equation*}
    (x, (\hat{r}(y, y) \multimap \hat{r}(x, y)) \otimes \hat{r}(y, y), y)
    \in r,
    \qquad
    (y, \hat{r}(y, z), z) \in r,
  \end{equation*}
  from left strong transitivity, we obtain
  \begin{equation*}
    (x, (\hat{r}(y, y) \multimap \hat{r}(x, y)) \otimes \hat{r}(y, z), z)
    \in r.
  \end{equation*}
  Hence, $(\hat{r}(y, y) \multimap \hat{r}(x, y)) \otimes \hat{r}(y, z) \sqsubseteq \hat{r}(x, z)$.
\end{proof}

\subsection{Remarks on Transitivity Conditions}
\label{sec:remark-trans-cond}

While we focus on left strong transitivity, there are several ways to adapt the notion of strong transitivity to the quantale-valued setting, such as
\begin{align*}
  \phi(x, y) \otimes \phi(y, z)
  &\sqsubseteq
    \phi(x, z) \otimes \phi(y, y),
    \tag{ST1}
    \label{eqn:ST1}
  \\
  \phi(x, y) \otimes (\phi(y, y) \multimap \phi(y, z))
  &\sqsubseteq
    \phi(x, z),
    \tag{ST2}
    \label{eqn:ST2}
  \\
  \phi(y, y) \multimap (\phi(x, y) \otimes \phi(y, z))
  &\sqsubseteq
    \phi(x, z),
    \tag{ST3}
    \label{eqn:ST3}
  \\
  (\phi(y, y) \multimap \phi(x, y)) \otimes \phi(y, y) \otimes (\phi(y, y) \multimap \phi(y, z))
  &\sqsubseteq
    \phi(x, z),
    \tag{ST4}
    \label{eqn:ST4}
  \\
  (\phi(y, y) \multimap \phi(x, y)) \otimes \phi(y, z)
  &\sqsubseteq
    \phi(x, z).
    \tag{LST}
    \label{eqn:LST}
\end{align*}
The condition \eqref{eqn:ST1}, a straightforward adoption of the strong transitivity condition, appears in \cite{Geoffroy2020}. The conditions \eqref{eqn:ST2} and \eqref{eqn:ST4} are related to the notion of \emph{diagonal} \cite{Stubbe2018} and can be found in \cite{PistoneLICS} and \cite{Stubbe2018} respectively. The last condition \eqref{eqn:LST} is left strong transitivity. We note that when $\phi$ is a $[0,+\infty]$-relation over $X$ satisfying the left quasi-reflexivity condition $\phi(x, y) \sqsubseteq \phi(x, x)$, we have
\begin{equation*}
  \eqref{eqn:ST3} \implies
  \eqref{eqn:ST2} \implies
  \eqref{eqn:ST1} \implies
  \eqref{eqn:ST4} \iff
  \eqref{eqn:LST}.
\end{equation*}
Moreover, if $\phi$ is both left and right quasi-reflexive, i.e., we have $\phi(x, y) \sqsubseteq \phi(x, x)$ and $\phi(x, y) \sqsubseteq \phi(y, y)$ for all $x,y \in X$, then \eqref{eqn:ST1}, \eqref{eqn:ST2}, \eqref{eqn:ST4}, and \eqref{eqn:LST} are equivalent. If we further assume that $\phi(x, y)$ is finite for all $x, y \in X$, then the above five conditions are equivalent.

We restrict our attention to left strong transitivity \eqref{eqn:LST} for the following reasons:
\begin{itemize}
\item
  All quasi-quasi-metrics that arise from the interpretation of simple types satisfy the left strong transitivity condition (See Section~\ref{sec:relating-quasi-quasi-metric}).
\item
  Strong transitivity conditions \eqref{eqn:ST1}, \eqref{eqn:ST2}, and \eqref{eqn:ST3} fail in the quasi-quasi-metric space $\psem{\real \Rightarrow \real}$, which interprets the function type over the reals (See Example~\ref{eg:st}).
\item
  For any quantale-valued relation $\phi \colon X \times X \to Q$, if $\phi$ is left strong transitive, then $\phi$ satisfies \eqref{eqn:ST4}.
\end{itemize}

Of the various transitivity conditions presented so far, we can characterize transitivity, \eqref{eqn:ST2}, and \eqref{eqn:LST} in terms of \emph{observational quasi-metrics}.

\begin{prop}\label{prop:left-right}
  For any $Q$-relation $\phi \colon X \times X \to Q$, the following statements hold:
  \begin{enumerate}[label=(\roman*)]
  \item
    $\phi$ is transitive iff for all $x, y \in X$, we have $\phi(x, y) \sqsubseteq \phi^{r}(x, y)$;
  \item
    $\phi$ is transitive iff for all $x, y \in X$, we have $\phi(x, y) \sqsubseteq \phi^{l}(x, y)$;
  \item
    $\phi$ satisfies \eqref{eqn:ST2} iff for all $x, y \in X$, we have $\phi(x, x) \multimap \phi(x, y) \sqsubseteq \phi^{r}(x, y)$;
  \item
    $\phi$ satisfies \eqref{eqn:LST} iff for all $x, y \in X$, we have $\phi(y, y) \multimap \phi(x, y) \sqsubseteq \phi^{l}(x, y)$.
  \end{enumerate}
\end{prop}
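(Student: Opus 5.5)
All four items follow one pattern. Each reduces, via the residuation adjunction $z \sqsubseteq x \multimap y \iff x \otimes z \sqsubseteq y$ and the fact that $\bigsqcap$ is a greatest lower bound, to a pointwise inequality quantified over the auxiliary variable $z$. Commutativity of $Q$ is used freely to reorder tensors. For $\phi^{r}$ we have, for any $c \in Q$,
\begin{equation*}
  c \sqsubseteq \phi^{r}(x, y)
  \iff \text{for all } z,\ c \sqsubseteq \phi(z, x) \multimap \phi(z, y)
  \iff \text{for all } z,\ \phi(z, x) \otimes c \sqsubseteq \phi(z, y).
\end{equation*}
Similarly, $c \sqsubseteq \phi^{l}(x, y)$ holds iff $\phi(y, z) \otimes c \sqsubseteq \phi(x, z)$ for all $z$. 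I will instantiate $c$ appropriately in each item and then rename variables.

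For (i), take $c = \phi(x, y)$. The condition $\phi(x,y) \sqsubseteq \phi^{r}(x,y)$ for all $x,y$ becomes $\phi(z, x) \otimes \phi(x, y) \sqsubseteq \phi(z, y)$ for all $x,y,z$. Renaming $(z,x,y) \mapsto (x,y,z)$, this is exactly the characterization of transitivity in Proposition~\ref{prop:rqt}.

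For (ii), take $c = \phi(x, y)$ again. This gives $\phi(y, z) \otimes \phi(x, y) \sqsubseteq \phi(x, z)$, which by commutativity is again the transitivity condition of Proposition~\ref{prop:rqt}.

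For (iii), take $c = \phi(x,x) \multimap \phi(x,y)$. The condition becomes $\phi(z, x) \otimes (\phi(x,x) \multimap \phi(x,y)) \sqsubseteq \phi(z, y)$ for all $x,y,z$. Renaming $(z,x,y) \mapsto (x,y,z)$ gives $\phi(x,y) \otimes (\phi(y,y) \multimap \phi(y,z)) \sqsubseteq \phi(x,z)$, which is literally \eqref{eqn:ST2}.

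For (iv), take $c = \phi(y,y) \multimap \phi(x,y)$. The condition becomes $\phi(y,z) \otimes (\phi(y,y) \multimap \phi(x,y)) \sqsubseteq \phi(x,z)$ for all $x,y,z$. By commutativity this is exactly \eqref{eqn:LST}, with no renaming needed.

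There is no real obstacle here. The only care needed is bookkeeping: making sure the variable renamings in (i) and (iii) land exactly on the stated forms, and that the tensor factors appear on the correct side. The latter is harmless, because $Q$ is assumed commutative throughout the paper. I note also that none of these equivalences needs integrality or quasi-reflexivity; only the adjunction and the universal property of meets are used.
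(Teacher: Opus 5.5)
Your proof is correct and takes essentially the paper's route. The paper packages the same residuation argument as the fact that $\phi^{l}$ and $\phi^{r}$ are the greatest $Q$-relations satisfying $\phi^{l}(x,y)\otimes\phi(y,z)\sqsubseteq\phi(x,z)$ and $\phi(x,y)\otimes\phi^{r}(y,z)\sqsubseteq\phi(x,z)$, and proves each item as two separate implications, whereas your pointwise chain of equivalences gives both directions at once (your renamings in (i), (iii) and uses of commutativity in (ii), (iv) land exactly on the stated forms).
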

\begin{proof}
  By the definition of $\phi^{l}$ and $\phi^{r}$, the following inequalities hold for all $x,y,z \in X$:
  \begin{align*}
    \phi^{l}(x, y) \otimes \phi(y, z)
    & \sqsubseteq \phi(x, z),
    % \tag{left transitivity}
    \\
    \phi(x, y) \otimes \phi^{r}(y, z)
    & \sqsubseteq \phi(x, z).
    % \tag{right transitivity}
  \end{align*}
  Moreover, $\phi^{l}$ and $\phi^{r}$ are the greatest $Q$-relations that satisfy the above conditions. The claims follow from these characterizations. Here, we only prove (i). We can check the other statements in the same manner. If $\phi$ is transitive, then since $\phi(x, y) \otimes \phi(y, z)\sqsubseteq \phi(x, z)$ holds for all $x, y, z \in X$, by the above characterization of $\phi^{r}$, we have $\phi(x, y) \sqsubseteq \phi^{r}(x, y)$. Conversely, if for all $x, y \in X$, we have $\phi(x, y) \sqsubseteq \phi^{r}(x, y)$, then for all $x, y, z \in X$, we have
  \begin{equation*}
    \phi(x, y) \otimes \phi(y, z) \sqsubseteq
    \phi(x, y) \otimes \phi^{r}(y, z) \sqsubseteq \phi(x, z).
  \end{equation*}
  Hence, $\phi$ is transitive.
\end{proof}

The result above suggests that the left strong quasi-quasi-metrics can be seen as limit cases of the quasi-quasi-metrics, namely, those for which the quasi-metric $\phi^{l}(x, y)$ can be written under the simpler form $\phi(x, x) \multimap \phi(x, y)$. Indeed, $\phi(x, x) \multimap \phi(x, y) \sqsubseteq \phi^{l}(x, y)$ follows from Proposition~\ref{prop:left-right}, and $\phi^{l}(x, y) \sqsubseteq \phi(x, x) \multimap \phi(x, y)$ follows from the characterization of $\phi^{l}$. Hence, a quasi-quasi-metric $\phi$ satisfies the left strong transitivity condition if and only if $\phi^{l}(x, y)$ is equal to $\phi(x, x) \multimap \phi(x, y)$.

%%% Local Variables:
%%% mode: LaTeX
%%% TeX-master: "tmp.tex"
%%% End:

\section{The Category of Quasi-Quasi-Metric Spaces}
\label{sec:category-quasi-quasi}

\subsection{Cartesian Closed Structure}
\label{sec:cart-clos-struct}

We introduce a cartesian closed category of quasi-quasi-metrics, observing a correspondence between the cartesian closed structure and the construction of differential logical relations. We define the category $\qqm$ of quasi-quasi-metric spaces as follows:
\begin{itemize}
\item
  objects are quasi-quasi-metric spaces $A = (\uset{A}, \mathcal{Q}_{A}, \rho_{A})$ such that $\rho_{A}$ is closed; and
\item
  morphisms from $A$ to $B$ are triples $(f,d,g)$ consisting of functions $f,g \colon \uset{A} \to \uset{B}$ and
  $d \colon \uset{A} \times \mathcal{Q}_{A} \to \mathcal{Q}_{B}$ subject to the following conditions:
  \begin{itemize}
  \item
    $d \in (\uset{A}, \mathcal{Q}_{A}) \rightarrowtriangle \mathcal{Q}_{B}$, i.e., $d(x, -) \colon \mathcal{Q}_{A} \to \mathcal{Q}_{B}$ is monotone for all $x \in \uset{A}$; and
  \item
    for all $(x,a,y) \in \rho_{A}$, we have $(fx, d(x,a), gy) \in \rho_{B}$ and $(fx, d(x,a), fy) \in \rho_{B}$.
  \end{itemize}
\end{itemize}

The identity morphism on an object $A$ is $(i_{A}, j_{A}, i_{A})$ consisting of the identity function $i_{A}$ on $\uset{A}$, and the second projection $j_{A} \colon \uset{A} \times \mathcal{Q}_{A} \to \mathcal{Q}_{A}$. The composition of $(f, d, g) \colon A \to B$ and $(h, e, k) \colon B \to C$ is $(h \circ f, c, k \circ g)$ where $c \colon \uset{A} \times \mathcal{Q}_{A} \to \mathcal{Q}_{C}$ is given by
\begin{equation*}
  c(x,a) = e(fx, d(x,a)).
\end{equation*}
Intuitively, $c(x, a)$ first computes the error $d(x, a)$ introduced by $f$, and then computes the total error $e(fx, d(x,a))$ produced by $g$ with respect to the intermediate value $fx$ and the error $d(x, a)$.

\begin{prop}\label{prop:ccc}
  The category $\qqm$ is cartesian closed.
\end{prop}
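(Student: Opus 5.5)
The proof proceeds by writing down explicit constructions for the terminal object, binary products and exponentials, checking that each is an object of $\qqm$, and then verifying the universal properties. Two preliminary checks come first: composition is well defined and associative, and the identity laws hold.

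For composition, take $(x,a,y)\in\rho_A$. We get $(fx,d(x,a),gy)\in\rho_B$ and $(fx,d(x,a),fy)\in\rho_B$. Applying the second morphism to each triple gives $(hfx,e(fx,d(x,a)),kgy)\in\rho_C$ and $(hfx,e(fx,d(x,a)),hfy)\in\rho_C$. Monotonicity of $c(x,-)$ holds because it is the composite of the monotone maps $d(x,-)$ and $e(fx,-)$. Associativity and the identity laws are direct computations.

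\textbf{Terminal object and products.} The terminal object is $(\{*\},\{\mathbf 1\},\{(*,\mathbf 1,*)\})$, where the quantale is trivial. The product $A\times B$ is $(\uset A\times\uset B,\ \mathcal Q_A\times\mathcal Q_B,\ \rho_{A\times B})$ with the componentwise relation. Closedness, quasi-reflexivity and transitivity are all inherited componentwise.

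The projections are $(\pi_1,\,((x,y),(a,b))\mapsto a,\,\pi_1)$ and the analogous map for the second factor. The pairing of $(f,d,g)$ and $(f',d',g')$ is $(\langle f,f'\rangle,\langle d,d'\rangle,\langle g,g'\rangle)$. Uniqueness holds because a morphism into a product is determined by its components. Since morphisms are literally triples, equality is strict equality of triples, so uniqueness is immediate.

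\textbf{Exponential.} Define $B^A$ with
\begin{itemize}
\item underlying set the set of pairs $(f,d)$ such that $(f,d,f)$ is a morphism $A\to B$;
\item quantale $(\uset A,\mathcal Q_A)\rightarrowtriangle\mathcal Q_B$ from Example~\ref{eg:xqq};
\item relation $((f,d),e,(g,d'))\in\rho$ iff for all $(x,a,y)\in\rho_A$ we have $(fx,e(x,a),gy)\in\rho_B$ and $(fx,e(x,a),fy)\in\rho_B$.
\end{itemize}
I am unsure whether the underlying set should carry the self-derivative $d$ or just be $\uset A\Rightarrow\uset B$. Plain functions look simpler, but they would make evaluation's $f$ and $g$ components fail to be well typed into the needed relation unless the relation itself forces quasi-reflexivity data, so I will try plain functions first and see whether currying still works.

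The checks for $B^A$ are as follows.
\begin{itemize}
\item \emph{Closedness.} Joins are computed pointwise, and $\rho_B$ is closed.
\item \emph{Quasi-reflexivity.} The second clause for $(f,g)$ is exactly both clauses for $(f,f)$.
\item \emph{Transitivity.} Suppose $(f,e,g)$ and $(g,e',h)$ are related and take $(x,a,y)\in\rho_A$. Quasi-reflexivity of $\rho_A$ gives $(x,a,x)\in\rho_A$, hence $(gx,e'(x,a),hy)\in\rho_B$. Transitivity of $\rho_B$ then gives $(fx,e(x,a)\otimes e'(x,a),hy)\in\rho_B$, and the second clause is proved the same way. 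This step uses exactly the argument from Section~\ref{sec:weak-refl-cond}.
\end{itemize}

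Evaluation $B^A\times A\to B$ is $(\mathrm{ev},\epsilon,\mathrm{ev})$ with $\epsilon((f,x),(e,a))=e(x,a)$. Monotonicity of $\epsilon$ in $(e,a)$ uses both the pointwise order on $e$ and the monotonicity of $e(x,-)$; this is where the refinement $\rightarrowtriangle$ is essential. The second clause for evaluation, namely $(fx,e(x,a),fy)\in\rho_B$, comes from the second clause in $\rho_{B^A}$.

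\textbf{Currying: the main obstacle.} Currying of $(f,d,g)\colon C\times A\to B$ should be
\[
\Lambda=\bigl(z\mapsto f(z,-),\ (z,c)\mapsto \bigl((x,a)\mapsto d((z,x),(c,a))\bigr),\ z\mapsto g(z,-)\bigr).
\]
Given $(z,c,w)\in\rho_C$ and $(x,a,y)\in\rho_A$, the product triple $((z,x),(c,a),(w,y))$ lies in $\rho_{C\times A}$, which yields $(f(z,x),d(\dots),g(w,y))\in\rho_B$. The problem is the second exponential clause, $(f(z,x),\dots,f(z,y))\in\rho_B$. This needs the triple $((z,x),(c,a),(z,y))$, i.e. it needs $(z,c,z)\in\rho_C$, which quasi-reflexivity of $\rho_C$ supplies. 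The remaining condition on morphisms, $(fz,\dots,fw)$ relating $f(z,-)$ to $f(w,-)$, follows analogously using the morphism's second clause.

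Uniqueness of the transpose is the subtle point. Because the $d$-component of a morphism into $B^A$ is only constrained on elements actually used by evaluation, uniqueness must be checked carefully. Since $\epsilon$ recovers $e(x,a)$ from the pair, equality of composites with evaluation forces equality of all three components, so uniqueness should hold on the nose.
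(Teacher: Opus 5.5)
Once you commit to plain functions as the underlying set of $B^A$, which is what all your later checks actually use, your constructions are exactly the paper's. These are the one-point terminal object, the componentwise product, the exponential on $\uset{A}\Rightarrow\uset{B}$ with quantale $(\uset{A},\mathcal{Q}_A)\rightarrowtriangle\mathcal{Q}_B$ and the two-clause relation, and evaluation with derivative $((f,x),(e,a))\mapsto e(x,a)$. The paper only exhibits these data. Your verification of currying and of uniqueness therefore fills in what it leaves implicit, in particular the use of quasi-reflexivity of $\rho_C$ to get $((z,x),(c,a),(z,y))\in\rho_{C\times A}$ for the second clause. Two steps, however, are wrong as written.

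First, you must drop the pair-based definition of $\uset{B^A}$ rather than keep it as an alternative, because it destroys uniqueness of transposes.
\begin{itemize}
\item Closedness of $\rho_B$ (the empty join) gives $(u,\bot,v)\in\rho_B$ for all $u,v$, so $(h,\bot,h)$ is a morphism for every function $h$.
\item A function typically has several self-derivatives; a constant map into $R$ admits both $\bot$ and $\mathbf{1}$.
\item The derivative $e(x,a)$ of your evaluation ignores the stored component, so two transposes $z\mapsto(f(z,-),d_z)$ differing only in $d_z$ have the same composite with evaluation.
\end{itemize}
Your stated reason for distrusting plain functions is also unfounded. The second morphism condition for evaluation, $(fx,e(x,a),fy)\in\rho_B$, is literally the second clause of $\rho_{B^A}$, which is precisely the quasi-reflexivity data you were worried about.

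Second, in the transitivity of $\rho_{B^A}$ the second clause does not follow ``the same way''. Chaining $(fx,e(x,a),gx)$ with the second clause of $(g,e',h)$ ends at $gy$, not $fy$. Instead, use three facts:
\begin{itemize}
\item $(fx,e(x,a),fy)\in\rho_B$;
\item $e(x,a)\otimes e'(x,a)\sqsubseteq e(x,a)\otimes\mathbf{1}=e(x,a)$, by integrality;
\item downward closure of $\rho_B$.
\end{itemize}

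There are also two minor omissions. The identity $(i_A,j_A,i_A)$ is a morphism only because $\rho_A$ is quasi-reflexive. You should also note that the transposed derivative lies in $(\uset{C},\mathcal{Q}_C)\rightarrowtriangle\mathcal{Q}_{B^A}$, which follows from monotonicity of $d((z,x),-)$ on $\mathcal{Q}_C\times\mathcal{Q}_A$.
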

\begin{proof}
  The cartesian closed structure closely matches the construction of differential logical relations in Section~\ref{sec:from-logic-relat}. An object $T$ given by
  \begin{equation*}
    \uset{T} = \{0\},
    \qquad
    \mathcal{Q}_{T} = \{\mathbf{1}\},
    \qquad
    \rho_{T} = \{(0, \mathbf{1}, 0)\},
  \end{equation*}
  is a terminal object. For objects $A$ and $B$, an object $A \times B$ consisting of
  \begin{equation*}
    \uset{A \times B} = \uset{A} \times \uset{B},
    \qquad
    \mathcal{Q}_{A \times B} = \mathcal{Q}_{A} \times \mathcal{Q}_{B},
  \end{equation*}
  and $\rho_{A \times B} \subseteq \uset{A \times B} \times \mathcal{Q}_{A \times B} \times \uset{A \times B}$ given by
  \begin{equation*}
    ((x,y),(a,b),(z,w))
    \in
    \rho_{A \times B}
    \iff
    (x,a,z) \in \rho_{A}
    \text{ and }
    (y,b,w) \in \rho_{B}
  \end{equation*}
  is a product of $A$ and $B$ where we endow $\mathcal{Q}_{A \times B}$ with the componentwise quantale structure. The first projection $(p_{A,B},q_{A,B},p_{A,B}) \colon A \times B \to B$ consists of the projections
  \begin{equation*}
    p_{A,B} \colon \uset{A} \times \uset{B} \to \uset{A},
    \qquad
    q_{A,B} \colon \uset{A} \times \uset{B} \times \mathcal{Q}_{A} \times \mathcal{Q}_{B} \to \mathcal{Q}_{A}
  \end{equation*}
  given by $p_{A,B}(x, y) = x$ and $q_{A,B}(x, y, a, b) = a$. The second projection is defined in the same manner. An object $A \Rightarrow B$ consisting of
  \begin{equation*}
    \uset{A \Rightarrow B}
    = \uset{A} \Rightarrow \uset{B},
    \qquad
    \mathcal{Q}_{A \Rightarrow B}
    = (\uset{A}, \mathcal{Q}_{A}) \rightarrowtriangle \mathcal{Q}_{B},
  \end{equation*}
  and $\rho_{A \Rightarrow B} \subseteq \uset{A \Rightarrow B} \times \mathcal{Q}_{A \Rightarrow B} \times \uset{A \Rightarrow B}$ given by
  \begin{multline*}
    \qquad
    (f,d,g)
    \in
    \rho_{A \Rightarrow B}
    \iff
    \text{for all }
    (x,a,y) \in \rho_{A}, \,
    \\
    % h \in \{f,g\}, \,
    (fx, d(x,a), gy) \in \rho_{B}
    \text{ and }
    (fx, d(x,a), fy) \in \rho_{B}
    \qquad
  \end{multline*}
  is an exponential from $A$ to $B$. Here, $\uset{A} \Rightarrow \uset{B}$ is the set of functions from $\uset{A}$ to $\uset{B}$, and we endow $\mathcal{Q}_{A \Rightarrow B}$ with the pointwise quantale structure. The evaluation morphism $(e_{A,B}, c_{A,B}, e_{A,B}) \colon (A \Rightarrow B) \times A \to B$ consists of
  \begin{equation*}
    e_{A,B} \colon \uset{A \Rightarrow B} \times \uset{A} \to \uset{B},
    \qquad
    c_{A,B} \colon \uset{A \Rightarrow B} \times \uset{A} \times \mathcal{Q}_{A \Rightarrow B} \times \mathcal{Q}_{A} \to \mathcal{Q}_{B}
  \end{equation*}
  given by $e_{A,B}(f, x) = fx$ and $c_{A,B}(f,x,d,a) = d(x,a)$.
\end{proof}

Every metric space induces a quasi-quasi-metric space. The following example corresponds to the Euclidean metric on $\mathbb{R}$.
\begin{exa}\label{eg:R}
  We define $R \in \qqm$ by
  \begin{equation*}
    \uset{R} = \mathbb{R},
    \qquad
    \mathcal{Q}_{R} = [0,+\infty],
    \qquad
    (x, a, y) \in \rho_{R}
    \iff
    |x - y| \sqsupseteq a.
  \end{equation*}
  Observe that the $[0,+\infty]$-relation $\hat{\rho}_{R}: \mathbb{R} \times \mathbb{R} \to [0,+\infty]$ associated to $\rho_{R}$ is just the Euclidean metric $\hat{\rho}_{R}(x, y) = |x - y|$.
\end{exa}

\begin{exa}\label{eg:RR}
  To see how we measure distances between functions, we concretely describe the function space $R \Rightarrow R \in \qqm$. The underlying set of $R \Rightarrow R$ is the set of functions on $\mathbb{R}$, and $\mathcal{Q}_{R \Rightarrow R}$ is the set of functions $d \colon \mathbb{R} \times [0,+\infty] \to [0,+\infty]$ such that for any $x \in \mathbb{R}$, the function $d(x,-) \colon [0,+\infty] \to [0,+\infty]$ is monotone. For functions $f$ and $g$ on $\mathbb{R}$, we have $(f,d,g) \in \rho_{R \Rightarrow R}$ if and only if $d \in \mathcal{Q}_{R \Rightarrow R}$ satisfies
  \begin{equation*}
    d(x,a) \sqsubseteq
    \bigsqcap_{|x - y| \sqsupseteq a}
    |fx - gy|
    \sqcap
    |fx - fy|,
  \end{equation*}
  i.e., $d(x,a)$ bounds gaps between $fx$ and values of $g$ and $f$ on the disk with center $x$ and radius $a$. The unit element $\mathbf{1} \in \mathcal{Q}_{R \Rightarrow R}$ is the constant function $\mathbf{1}(x, a) = 0$, and we have $(f, \mathbf{1}, f) \in \rho_{R \Rightarrow R}$ if and only if $f$ is a constant function. In particular, $\rho_{R \Rightarrow R}$ is not reflexive.
\end{exa}
We can generalize the characterization of $\rho_{R \Rightarrow R}$ given in Example~\ref{eg:RR} as follows.
\begin{prop}\label{lem:arrow}
  Let $A$ and $B$ be quasi-quasi-metric spaces. For any $f, g \in \uset{A \Rightarrow B}$, $x \in \uset{A}$, and $a \in \mathcal{Q}_{A}$, we have
  \begin{equation*}
    \hat{\rho}_{A \Rightarrow B}(f, g)(x, a)
    = \bigsqcap_{(x, a, y) \in \rho_{A}}
    \hat{\rho}_{B}(fx, gy) \sqcap \hat{\rho}_{B}(fx, fy).
  \end{equation*}
\end{prop}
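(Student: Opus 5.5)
The plan is to show that the function on the right-hand side, call it $e \colon \uset{A} \times \mathcal{Q}_{A} \to \mathcal{Q}_{B}$ with
\begin{equation*}
  e(x, a) = \bigsqcap_{(x, a, y) \in \rho_{A}} \hat{\rho}_{B}(fx, gy) \sqcap \hat{\rho}_{B}(fx, fy),
\end{equation*}
is the \emph{greatest} element of $\{d \in \mathcal{Q}_{A \Rightarrow B} \mid (f, d, g) \in \rho_{A \Rightarrow B}\}$. Since $\hat{\rho}_{A \Rightarrow B}(f, g)$ is by definition the join of this set in $\mathcal{Q}_{A \Rightarrow B}$, the join is then $e$ itself. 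This argument does not need closedness of $\rho_{A \Rightarrow B}$; it only uses closedness of $\rho_{A}$ and $\rho_{B}$, which holds because $A$ and $B$ are objects of $\qqm$.

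The first step is to check that $e$ lies in $\mathcal{Q}_{A \Rightarrow B} = (\uset{A}, \mathcal{Q}_{A}) \rightarrowtriangle \mathcal{Q}_{B}$, i.e.\ that $e(x, -)$ is monotone. This is the only point where something beyond unfolding definitions is needed, and it is where the monotonicity restriction in the definition of $\mathcal{Q}_{A \Rightarrow B}$ interacts with closedness. Suppose $a \sqsubseteq a'$. Since $\rho_{A}$ is downward closed, $(x, a', y) \in \rho_{A}$ implies $(x, a, y) \in \rho_{A}$. Hence the index set defining $e(x, a')$ is contained in the one defining $e(x, a)$. A meet over a smaller set is larger, so $e(x, a) \sqsubseteq e(x, a')$.

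The second step is to check that $(f, e, g) \in \rho_{A \Rightarrow B}$. Take any $(x, a, y) \in \rho_{A}$. By definition of $e$ we have $e(x, a) \sqsubseteq \hat{\rho}_{B}(fx, gy)$ and $e(x, a) \sqsubseteq \hat{\rho}_{B}(fx, fy)$. Because $\rho_{B}$ is closed, Proposition~\ref{prop:rhd} converts these inequalities into $(fx, e(x, a), gy) \in \rho_{B}$ and $(fx, e(x, a), fy) \in \rho_{B}$.

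The third step is maximality. Let $(f, d, g) \in \rho_{A \Rightarrow B}$. Then for every $(x, a, y) \in \rho_{A}$ both $(fx, d(x, a), gy)$ and $(fx, d(x, a), fy)$ belong to $\rho_{B}$. Proposition~\ref{prop:rhd} gives $d(x, a) \sqsubseteq \hat{\rho}_{B}(fx, gy) \sqcap \hat{\rho}_{B}(fx, fy)$, and taking the meet over all such $y$ yields $d(x, a) \sqsubseteq e(x, a)$. So $d \sqsubseteq e$ in the pointwise order.

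Combining the three steps, $e$ is a member of the set and an upper bound of it, so the join of the set is exactly $e$. Since joins in $\mathcal{Q}_{A \Rightarrow B}$ are computed pointwise, this gives $\hat{\rho}_{A \Rightarrow B}(f, g)(x, a) = e(x, a)$ for all $x$ and $a$, which is the claimed identity.
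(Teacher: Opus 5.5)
Your proof is correct and is essentially the paper's argument. The paper likewise shows three things: that the right-hand side is monotone in $a$, by downward closure of $\rho_{A}$; that it lies in the set defining $\hat{\rho}_{A \Rightarrow B}(f, g)$, by Proposition~\ref{prop:rhd} for $\rho_{B}$; and that it dominates $\hat{\rho}_{A \Rightarrow B}(f, g)$, by the same pointwise estimate you use. The one difference is where that estimate is applied: the paper applies it to $\hat{\rho}_{A \Rightarrow B}(f, g)$ itself, tacitly using $(f, \hat{\rho}_{A \Rightarrow B}(f, g), g) \in \rho_{A \Rightarrow B}$ (join-closedness of $\rho_{A \Rightarrow B}$), whereas you apply it to every member of the set and so avoid needing that fact.
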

\begin{proof}
  We denote the right hand side by $d(x, a)$. We first check
  \begin{math}
    \hat{\rho}_{A \Rightarrow B}(f, g)(x, a)
    \sqsubseteq d(x, a).
  \end{math}
  By the definition of $\hat{\rho}_{A \Rightarrow B}(f, g)$, for any $(x, a, y) \in \rho_{A}$, we have
  \begin{equation}\label{eq:hatrho}
    (fx, \hat{\rho}_{A \Rightarrow B}(f, g)(x, a), gy) \in \rho_{B},
    \qquad
    (fx, \hat{\rho}_{A \Rightarrow B}(f, g)(x, a), fy) \in \rho_{B}.
  \end{equation}
  Hence,
  \begin{equation*}
    \hat{\rho}_{A \Rightarrow B}(f, g)(x, a) \sqsubseteq
    \hat{\rho}_{B}(fx, gy) \sqcap \hat{\rho}_{B}(fx, fy).
  \end{equation*}
  Since this inequality holds for all $y \in \uset{A}$ satisfying $(x, a, y) \in \rho_{A}$, we obtain
  \begin{equation*}
    \hat{\rho}_{A \Rightarrow B}(f, g)(x, a) \sqsubseteq d(x, a).  
  \end{equation*}
  We next check the other inequality. By the definition of $d$, for any $(x, a, y) \in \rho_{A}$, we have
  \begin{equation*}
    (fx, d(x, a), gy) \in \rho_{B},
    \qquad
    (fx, d(x, a), fy) \in \rho_{B}.
  \end{equation*}
  Moreover, by downward closedness of $\rho_{A}$, for all $x \in \uset{A}$, it holds that $d(x,-)$ is monotone. Hence, $d$ is an element of $\mathcal{Q}_{A \Rightarrow B}$. Since $\hat{\rho}_{A \Rightarrow B}(f, g)$ is the greatest element of $\mathcal{Q}_{A \Rightarrow B}$ satisfying \eqref{eq:hatrho}, we obtain $d(x, a) \sqsubseteq \hat{\rho}_{A \Rightarrow B}(x, a)$.
\end{proof}

\begin{exa}\label{eg:id-sin}
  We observe that $\hat{\rho}_{R \Rightarrow R}$ is not symmetric. Let $\mathrm{id}_{\mathbb{R}}$ be the identity function on $\mathbb{R}$. It follows from Lemma~\ref{lem:arrow} that we have
  \begin{align*}
    \hat{\rho}_{R \Rightarrow R}
    (\sin, \mathrm{id}_{\mathbb{R}})(x, a)
    &=
      \bigsqcap_{|x - y| \sqsupseteq a}
      |\sin(x) - y| \sqcap |\sin(x) - \sin(y)|,
    \\
    \hat{\rho}_{R \Rightarrow R}
    (\mathrm{id}_{\mathbb{R}}, \sin)(x, a)
    &=
    \bigsqcap_{|x - y| \sqsupseteq a}
      |x - \sin(y)|
      \sqcap a.
  \end{align*}
  Since
  \begin{equation*}
    \hat{\rho}_{R \Rightarrow R}
    (\sin, \mathrm{id}_{\mathbb{R}})
    (\pi, 2\pi)
    = 3\pi
    \sqsubsetneq
    2\pi =
    \hat{\rho}_{R \Rightarrow R}
    (\mathrm{id}_{\mathbb{R}}, \sin)
    (\pi, 2\pi),
  \end{equation*}
  $\hat{\rho}_{R \Rightarrow R}$ is not symmetric.
\end{exa}

In the next example, we observe that the quasi-quasi-metric space $R \Rightarrow R$ fails to satisfy the variants \eqref{eqn:ST1}, \eqref{eqn:ST2}, and \eqref{eqn:ST3} of strong transitivity.
\begin{exa}\label{eg:st}
  For brevity, we denote $\hat{\rho}_{R \Rightarrow R}$ by $\phi$, and we write $\phi_{f,g}$ for $\phi(f, g)$. We show that the $\mathcal{Q}_{R \Rightarrow R}$-relation $\phi$ over $\mathbb{R}$ satisfies none of \eqref{eqn:ST1}, \eqref{eqn:ST2}, or \eqref{eqn:ST3}. To this end, we define functions $f,g,h \colon \mathbb{R} \to \mathbb{R}$ by
  \begin{equation*}
    f(x) = 1, \qquad
    g(x) = |x|, \qquad
    h(x) = x.
  \end{equation*}  
  We first check that \eqref{eqn:ST1} for $\phi$ does not hold:
  \begin{equation*}
    \bigl(\phi_{f, g}
    \otimes
    \phi_{g, h}\bigr)(0,2)
    = 1 + 2
    \sqsupsetneq
    3 + 2
    =
    \bigl(\phi_{f, h}
    \otimes
    \phi_{g, g}\bigr)(0,2).
  \end{equation*}
  We next observe a failure of \eqref{eqn:ST3} for $\phi$:
  \begin{align*}
    \bigl(\phi_{g, g} \multimap
    (\phi_{f, g}
    \otimes
    \phi_{g, h})\bigr)(0,2)
    &= \bigsqcap_{a \sqsupseteq 2}
      \bigl(
      \phi_{g, g}(0, a)
      \multimap
      (\phi_{f, g}(0, a)
      + \phi_{g, h}(0, a))
      \bigr)
    \\
    &= \bigsqcap_{a \sqsupseteq 2}
      \bigl(
      a
      \multimap
      (1
      + a)
      \bigr)
    \\
    &= 1 \sqsupsetneq 3 = \phi_{f, h}(0,2).
  \end{align*}
  Since \eqref{eqn:ST2} implies \eqref{eqn:ST3} for any quantale-valued relation, \eqref{eqn:ST2} also fails for $\phi$.
\end{exa}

We show that $\qqm$ has ``\emph{weak}'' coproducts in the following sense. At this point, we do not know whether $\mathbf{Qqm}$ has coproducts or not.
\begin{prop}
  The category $\qqm$ has weak coproducts: for objects $A$ and $B$ in $\qqm$, there is an object $A + B$ in $\qqm$ equipped with pairs of a section and retraction:
  \begin{equation*}
    s_{C} \colon \qqm(A,C) \times \qqm(B,C) \lhd \qqm(A + B,C) :\! r_{C}
    \qquad
    (C \in \qqm)
  \end{equation*}
  such that the retraction $r_{C}$ is natural in $C$.
\end{prop}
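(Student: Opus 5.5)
We construct $A + B$ explicitly, with a quantale large enough to keep a ``flag'' recording whether two points may lie in different summands. The main constraint is closedness. A closed $Q$-predicate contains $(u, \bot, v)$ for \emph{all} $u,v$, since $\bot$ is the empty join. So $\rho_{A+B}$ must relate elements of different summands at $\bot$, and a copairing must send $\bot$-distances to $\bot$. Simply taking $\mathcal{Q}_{A} \times \mathcal{Q}_{B}$ then breaks the retraction equation in degenerate cases, for instance when $\mathcal{Q}_{B}$ is trivial. This is also why we only expect \emph{weak} coproducts.

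\emph{Construction.} Let $\mathbf{2} = \{0 \sqsubseteq 1\}$ be the two-element (Boolean) quantale, with $\otimes = \wedge$. Define
\begin{equation*}
  \uset{A+B} = \uset{A} + \uset{B},
  \qquad
  \mathcal{Q}_{A+B} = \mathcal{Q}_{A} \times \mathcal{Q}_{B} \times \mathbf{2},
\end{equation*}
with the componentwise structure. The relation $\rho_{A+B}$ consists of the following triples:
\begin{itemize}
\item $(\mathrm{inl}\,x, (a,b,i), \mathrm{inl}\,y)$ with $(x,a,y) \in \rho_{A}$;
\item $(\mathrm{inr}\,x, (a,b,i), \mathrm{inr}\,y)$ with $(x,b,y) \in \rho_{B}$;
\item $(u, (a,b,0), v)$ for arbitrary $u, v$ with $a = \bot$ and $b = \bot$.
\end{itemize}
We then verify that $\rho_{A+B}$ is a closed quasi-quasi-metric.
\begin{itemize}
\item \emph{Closedness.} Downward closure and closure under joins are checked componentwise. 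The case of the empty family uses closedness of $\rho_{A}$ and $\rho_{B}$, namely $(x,\bot,y) \in \rho_{A}$.
\item \emph{Quasi-reflexivity.} This is inherited from $\rho_{A}$ and $\rho_{B}$, and it is trivial on the $\bot$-triples.
\item \emph{Transitivity.} Same-summand composites are handled by transitivity of $\rho_{A}$ and $\rho_{B}$. Any composite involving a $\bot$-triple has quantity $\bot$, because $\bot \otimes c = \bot$ (multiplication preserves the empty join). Its flag is $0$, since $0 \wedge i = 0$.
\end{itemize}

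\emph{Retraction.} The injection $\iota_{A} = (\mathrm{inl}, \lambda(x,a).(a,\mathbf{1},1), \mathrm{inl}) \colon A \to A+B$ is a morphism: its distance function is monotone in $a$, and $(x,a,y) \in \rho_{A}$ gives $(\mathrm{inl}\,x, (a,\mathbf{1},1), \mathrm{inl}\,y) \in \rho_{A+B}$. Define $\iota_{B}$ symmetrically, and set
\begin{equation*}
  r_{C}(m) = (m \circ \iota_{A},\, m \circ \iota_{B}).
\end{equation*}
Naturality in $C$ is just associativity of composition in $\qqm$.

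\emph{Section.} Given $(f,d,g) \colon A \to C$ and $(f',d',g') \colon B \to C$, let $s_{C}$ return the triple $([f,f'], e, [g,g'])$ whose distance function is
\begin{align*}
  e(u,(a,b,1)) &= d(x,a) \text{ if } u = \mathrm{inl}\,x,
  \\
  e(u,(a,b,1)) &= d'(x,b) \text{ if } u = \mathrm{inr}\,x,
  \\
  e(u,(a,b,0)) &= \bot.
\end{align*}
Monotonicity of $e(u,-)$ follows from monotonicity of $d(x,-)$ and $d'(x,-)$, together with $\bot$ being least.

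The morphism conditions are checked in three cases:
\begin{itemize}
\item Same-summand triples with flag $1$ follow from the conditions on $(f,d,g)$ and $(f',d',g')$.
\item Triples with flag $0$ have quantity $\bot$, so they lie in the closed relation $\rho_{C}$ automatically. This covers all cross-summand pairs.
\item Same-summand triples with flag $0$ are likewise harmless.
\end{itemize}
Finally, $r_{C} \circ s_{C} = \mathrm{id}$ holds because $e(\mathrm{inl}\,x, (a,\mathbf{1},1)) = d(x,a)$ exactly, and similarly for $\mathrm{inr}$. The flag is what makes this identity hold strictly, even in the degenerate cases above.
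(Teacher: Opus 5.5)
Your proof is correct and follows the paper's construction. In both, the distance quantale is augmented with a flag so that cross-summand pairs are related only at a distance on which the copairing may safely output $\bot$, and genuine distances are recovered exactly by restriction. The only difference is the encoding: you use one global Boolean factor, $\mathcal{Q}_{A} \times \mathcal{Q}_{B} \times \mathbf{2}$, whereas the paper adjoins a fresh bottom to each factor, $L\mathcal{Q}_{A} \times L\mathcal{Q}_{B}$. Writing $r_C$ as precomposition with injection morphisms is a small presentational gain, since naturality then follows directly from associativity.
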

\begin{proof}
  For a quantale $Q$, we define a quantale $LQ$ by
  \begin{equation*}
    LQ = \{\emptyset\} \cup \{\{a\} \mid a \in Q\}
  \end{equation*}
  where we endow $LQ$ with the partial order given by
  \begin{equation*}
    \emptyset \sqsubseteq \{a\},
    \qquad
    \{a\} \sqsubseteq \{b\}
    \iff
    a \sqsubseteq b.
  \end{equation*}
  The multiplication of $LQ$ is an extension of the multiplication of $Q$ given by
  \begin{equation*}
    \{a\} \otimes \{b\} = \{a \otimes b\}, \qquad
    \{a\} \otimes \emptyset = \emptyset \otimes \{a\} = \emptyset \otimes \emptyset = \emptyset.
  \end{equation*}
  For $A$ and $B$ in $\qqm$, let $A + B$ be the object in $\qqm$ given by
  \begin{align*}
    |A + B|
    &= \{0\} \times \uset{A} \cup \{1\} \times \uset{B},
    \\
    \mathcal{Q}_{A + B}
    &= L\mathcal{Q}_{A}
      \times
      L\mathcal{Q}_{B},
    \\
    \rho_{A + B}
    &= \{(u, (\emptyset, \emptyset), v)
      \mid u,v \in |A + B|\}
    \\
    &\phantom{=}
      \quad \cup
      \{((0,x),(\{a\},\emptyset),(0,y))
      \mid (x,a,y) \in \rho_{A}\}
    \\
    &\phantom{=}
      \quad \cup
      \{((1,z),(\emptyset,\{b\}),(1,w))
      \mid (z,b,w) \in \rho_{B}\}
      .
  \end{align*}
  Given a morphism $(f, d, g) \colon A + B \to C$, we define $r_{C}(f, d, g)$ to be the pair of morphisms $(h, e, k) \colon A \to C$ and $(h', e', k') \colon B \to C$ obtained by restricting the domains of $f$, $d$, and $g$: we define $(h, e, k) \colon A \to C$ by
  \begin{equation*}
    h(x) = f(0, x), \qquad
    e(x, a) = d((0,x),(\{a\},\emptyset)), \qquad
    k(x) = g(0, x),
  \end{equation*}
  and we define $(h', e', k')$ in the same manner. Conversely, given morphisms $(f, d, g) \colon A \to C$ and $(h, e, k) \colon B \to C$, we define $s_{C}((f,d,g),(h,e,k))$ to be
  \begin{equation*}
    ([f,h], \lbag d,e \rbag, [g,k]) \colon A + B \to C 
  \end{equation*}
  consisting of the cotuplings $[f,h],[g,k] \colon \uset{A + B} \to \uset{C}$ of $f$ and $h$, and $g$ and $k$, respectively, and the function ${\lbag d,e \rbag} \colon \uset{A + B} \times \mathcal{Q}_{A + B} \to \mathcal{Q}_{C}$ given by
  \begin{align*}
    {\lbag d,e \rbag}((0,x),(\alpha,\beta))
    &=
    \begin{cases}
      d(x, a), & \text{if $\alpha = \{a\}$}, \\
      \bot, & \text{otherwise},
    \end{cases}
    \\
    {\lbag d,e \rbag}((1, z), (\alpha,\beta))
    &=
    \begin{cases}
      e(z, b), & \text{if $\beta = \{b\}$}, \\
      \bot, & \text{otherwise}
    \end{cases}
  \end{align*}
  where $\bot$ is the least element of $\mathcal{Q}_{C}$. It is straightforward to check that $r_{C}$ is a left inverse of $s_{C}$ and is natural in $C$.
\end{proof}

\subsection{Quasi-Quasi-Metric Semantics}
\label{sec:target-language-its}

We introduce a simply typed lambda calculus $\LL$ together with its interpretation in the cartesian closed category $\qqm$. We show that the interpretation of types yields differential logical relations on $\LL$, and furthermore, we derive a fundamental lemma from the categorical interpretation.

\paragraph{Target Calculus}
\label{sec:target-language}

We describe our target calculus $\LL$ that comprises reals and functions of arbitrary arity on $\mathbb{R}$. To define our target calculus, we fix a set $\mathbb{F}$ of multi-arity functions of the form $\alpha \colon \mathbb{R}^{n} \to \mathbb{R}$ for $n \in \mathbb{N}$. We call $n$ the \emph{arity} of $\alpha \in \mathbb{F}$ and denote it by $\mathrm{ar}(\alpha)$.

\begin{figure}
  \footnotesize \centering
  \fbox{$\begin{array}{c}
    \prftree{
    r \in \mathbb{R}
    }{
    \mathsf{\Gamma}
    \vdash
    \underline{r} : \real
    }
    \qquad
    \prftree{
    \mathsf{\Gamma} \vdash
    \mathsf{t}_{1}
    : \real
    }{\ldots}{
    \mathsf{\Gamma} \vdash
    \mathsf{t}_{n}
    : \real
    }{
    \mathrm{ar}(\alpha) = n
    }{
    \mathsf{\Gamma} \vdash
    \alpha
    (\mathsf{t}_{1},
    \ldots,
    \mathsf{t}_{n})
    : \real
    }
    \\[7pt]
    \prftree{
    \mathsf{x}:\mathsf{A} \in \mathsf{\Gamma}
    }{
    \mathsf{\Gamma} \vdash
    \mathsf{x}:\mathsf{A}
    }
    \qquad
    \prftree{
    \mathsf{\Gamma} \vdash
    \mathsf{t}
    : \mathsf{A} \Rightarrow \mathsf{B}
    }{
    \mathsf{\Gamma} \vdash
    \mathsf{s}
    : \mathsf{A}
    }{
    \mathsf{\Gamma} \vdash
    \mathsf{t} \,
    \mathsf{s}
    : \mathsf{B}
    }
    \qquad
    \prftree{
    \mathsf{\Gamma},\mathsf{x}:\mathsf{A} \vdash
    \mathsf{t} : \mathsf{B}
    }{
    \mathsf{\Gamma} \vdash
    \lam{\mathsf{x}}{\mathsf{A}}{\mathsf{t}}
    : \mathsf{A} \Rightarrow \mathsf{B}
    }
    \\[7pt]
    \prftree{
    \mathsf{\Gamma} \vdash
    \mathsf{t} : \mathsf{A}
    }{
    \mathsf{\Gamma} \vdash
    \mathsf{s} : \mathsf{B}
    }{
    \mathsf{\Gamma} \vdash
    \pair{
    \mathsf{t}
    }{
    \mathsf{s}
    }
    : \mathsf{A} \times \mathsf{B}
    }
    \qquad
    \prftree{
    \mathsf{\Gamma} \vdash
    \mathsf{t}
    :\mathsf{A} \times \mathsf{B}
    }{
    \mathsf{\Gamma} \vdash
    \fst(\mathsf{t})
    :\mathsf{A}
    }
    \qquad
    \prftree{
    \mathsf{\Gamma} \vdash
    \mathsf{t}
    :\mathsf{A} \times \mathsf{B}
    }{
    \mathsf{\Gamma} \vdash
    \snd(\mathsf{t})
    :\mathsf{B}
    }
  \end{array}$}
\caption{Typing Rules}
\label{fig:typing-rules}
\end{figure}

Let $\mathsf{Var}$ be a countably infinite set of variables. We define \emph{type}s and \emph{term}s as follows:
\begin{align*}
  (\text{Type})
  && \mathsf{A},\mathsf{B}
  & \coloneq
    \real
    \mid \mathsf{A} \times \mathsf{B}
    \mid \mathsf{A} \Rightarrow \mathsf{B} , \\
  (\text{Term})
  && \mathsf{t},\mathsf{s},\mathsf{u}
  &\coloneq
    \mathsf{x} \in \mathsf{Var}
    \mid \underline{r}
    \mid \alpha
    (\mathsf{t}_{1},\ldots,\mathsf{t}_{n})
    \mid \mathsf{t} \, \mathsf{s}
    \mid \lam{\mathsf{x}}{\mathsf{A}}{\mathsf{t}}
    \mid \pair{\mathsf{t}}{\mathsf{s}}
    \mid \fst(\mathsf{t})
    \mid \snd(\mathsf{s})
  .
\end{align*}
In the definition of terms, $r$ varies over $\mathbb{R}$, and $\alpha$ varies over $\mathbb{F}$. We adopt the standard typing rules given in Figure~\ref{fig:typing-rules}. We write $\mathsf{\Gamma} \vdash \mathsf{t} : \mathsf{A}$ when the judgment is derivable from those rules. In what follows, we also use $\mathsf{p}$ and $\mathsf{q}$ as meta-variables for terms. We denote the set of types by $\type$ and the set of closed terms of type $\mathsf{A}$ by $\term{\mathsf{A}}$.

\paragraph{Notations}

Let $\mathsf{\Gamma}$ be a typing environment of the form $\mathsf{\Gamma} = (\mathsf{x}_{1} : \mathsf{A}_{1}, \ldots, \mathsf{x}_{n} : \mathsf{A}_{n})$. For a type-indexed family of sets $\{X_{\mathsf{A}}\}_{\mathsf{A} \in \type}$, we define a set $X_{\mathsf{\Gamma}}$ by
\begin{equation*}
  X_{\mathsf{\Gamma}} =
  \begin{cases}
    X_{\mathsf{A}_{1}}
    \times \cdots \times
    X_{\mathsf{A}_{n}},
    & \text{if $n \geq 1$}; \\
    \{\ast\}
    & \text{if $n = 0$};
  \end{cases}
\end{equation*}
and given an element $\gamma \in X_{\mathsf{\Gamma}}$, we denote its $i$th element by $\gamma|_{i}$. For example, $\term{\mathsf{\Gamma}}$ denotes the set of finite sequences $\gamma = (\mathsf{t}_{1}, \ldots, \mathsf{t}_{n})$ consisting of closed terms $\gamma|_{i} = \mathsf{t}_{i} \in \term{\mathsf{A}_{i}}$. Furthermore, for a typing environment $\mathsf{\Gamma}$ and a type $\mathsf{B}$, given $\gamma = (x_{1},\ldots,x_{n}) \in X_{\mathsf{\Gamma}}$ and $y \in X_{\mathsf{B}}$, we write $\gamma \add y$ for $(x_{1},\ldots,x_{n},y) \in X_{\mathsf{\Delta}}$ where $\mathsf{\Delta} = (\mathsf{x}_{1}:\mathsf{A},\ldots,\mathsf{x}_{n}:\mathsf{A}_{n},\mathsf{y}:\mathsf{B})$.

Similarly, given a type-indexed family of quantale-valued predicates
\begin{equation*}
  \{r_{\mathsf{A}} \subseteq X_{\mathsf{A}} \times Q_{\mathsf{A}} \times X_{\mathsf{A}}\}_{\mathsf{A} \in \type}, 
\end{equation*}
we define a quantale-valued predicate $r_{\mathsf{\Gamma}} \subseteq X_{\mathsf{\Gamma}} \times Q_{\mathsf{\Gamma}} \times X_{\mathsf{\Gamma}}$ by
\begin{equation*}
  (\gamma, \xi, \epsilon) \in r_{\mathsf{\Gamma}}
  \iff
  \text{for all } i \in \{1,\ldots,n\}, \,
  (\gamma|_{i},\xi|_{i},\epsilon|_{i}) \in r_{\mathsf{A}_{i}}
\end{equation*}
where we endow $Q_{\mathsf{\Gamma}}$ with the componentwise quantale structure. It is straightforward to see that if $r_{\mathsf{A}}$ are quasi-quasi-metrics for all types $\mathsf{A}$, then $r_{\mathsf{\Gamma}}$ is a quasi-quasi-metric on $X_{\mathsf{\Gamma}}$.

\paragraph{Set Theoretic Semantics}
\label{sec:set-theor-semant}

We recall the standard set theoretic interpretation of $\LL$, which is subsumed by the interpretation of $\LL$ in the cartesian closed category $\qqm$. The set theoretic interpretation associates types with sets and terms with functions as follows: we interpret a type $\mathsf{A}$ as a set $\sem{\mathsf{A}}$ inductively given by
\begin{equation*}
  \sem{\real} = \mathbb{R},
  \qquad
  \sem{\mathsf{A} \times \mathsf{B}} =
  \sem{\mathsf{A}} \times \sem{\mathsf{B}},
  \qquad
  \sem{\mathsf{A} \Rightarrow \mathsf{B}} =
  \sem{\mathsf{A}} \Rightarrow \sem{\mathsf{B}};
\end{equation*}
and we interpret a term $\mathsf{\Gamma} \vdash \mathsf{t} : \mathsf{A}$ as a function $\sem{\mathsf{t}}$ from $\sem{\mathsf{\Gamma}}$ to $\sem{\mathsf{A}}$. We give the definition of $\sem{\mathsf{t}}$ in Figure~\ref{fig:semt}. For brevity, given a closed term $\vdash \mathsf{t} : \mathsf{A}$, we identify the interpretation $\sem{\mathsf{t}} \colon \{\ast\} \to \sem{\mathsf{A}}$ with its value $\semp{\mathsf{t}}{\ast} \in \sem{\mathsf{A}}$.

\begin{figure}
  \centering
  \fbox{
    $\begin{aligned}
      \semp{\mathsf{x}_{i}}{\gamma}
      &= \gamma|_{i} \\
      \semp{\underline{r}}{\gamma}
      &= r \\
      \semp{\alpha
      (\mathsf{t}_{1},\ldots,
      \mathsf{t}_{n})}{\gamma}
      &= \alpha(
        \semp{\mathsf{t}_{1}}{\gamma},
        \ldots,
        \semp{\mathsf{t}_{n}}{\gamma})
      \\
      \semp{\mathsf{t} \, \mathsf{s}}{\gamma}     
      &=
        \semp{\mathsf{t}}{\gamma}        
        (\semp{\mathsf{s}}{\gamma})
      \\
      \semp{\lam{\mathsf{x}}
      {\mathsf{A}}{\mathsf{t}}}{\gamma}(x)
      &= \semp{\mathsf{t}}{\gamma \add x}
      \\
      \semp{\fst(\mathsf{t})}{\gamma}
      &=
        \text{the first component of }
        \semp{\mathsf{t}}{\gamma}
      \\
      \semp{\snd(\mathsf{t})}{\gamma}
      &=
        \text{the second component of }
        \semp{\mathsf{t}}{\gamma}
      \\
      \semp{\pair{\mathsf{t}}{\mathsf{s}}}{\gamma}
      &= (
        \semp{\mathsf{t}}{\gamma},
        \semp{\mathsf{s}}{\gamma})
    \end{aligned}$}
  \caption{Set Theoretic Interpretation}
  \label{fig:semt}
\end{figure}

\paragraph{Quasi-Quasi-Metric Semantics}
\label{sec:qqm-semantics}

We now describe the interpretation of $\LL$ in the cartesian closed category $\qqm$. We interpret the ground type $\real$ as $R$ given in Example~\ref{eg:R} and lift the interpretation to all types as follows:
\begin{equation*}
  \psem{\real}
  = R,
  \qquad
  \psem{\mathsf{A} \times \mathsf{B}}
  = \psem{\mathsf{A}} \times \psem{\mathsf{B}},
  \qquad
  \psem{\mathsf{A} \Rightarrow \mathsf{B}}
  = \psem{\mathsf{A}} \Rightarrow \psem{\mathsf{B}}.
\end{equation*}
Below, we simply denote each component of $\psem{\mathsf{A}}$ by $(|\mathsf{A}|, \mathcal{Q}_{\mathsf{A}}, \rho_{\mathsf{A}})$. Since the cartesian closed structure of $\qqm$ subsumes the cartesian closed structure of the category $\mathbf{Set}$ of sets and functions, the underlying set $\uset{\mathsf{A}}$ coincides with its set theoretic interpretation $\sem{\mathsf{A}}$. By expanding the definition of $\psem{\mathsf{A}}$, we obtain the following inductive definition of the quantales $\mathcal{Q}_{\mathsf{A}}$:
\begin{equation*}
  \mathcal{Q}_{\real}
  = [0,+\infty],
  \qquad
  \mathcal{Q}_{\mathsf{A} \times \mathsf{B}}
  = \mathcal{Q}_{\mathsf{A}}
  \times \mathcal{Q}_{\mathsf{B}},
  \qquad
  \mathcal{Q}_{\mathsf{A} \Rightarrow \mathsf{B}}
  =
  (\uset{\mathsf{A}}, \mathcal{Q}_{\mathsf{A}})
  \rightarrowtriangle
  \mathcal{Q}_{\mathsf{B}},
\end{equation*}
and the following inductive definition of the quasi-quasi-metrics $\rho_{\mathsf{A}}$:
\begin{align*}
  (x, a, y) \in \rho_{\real}
  &\iff |x - y| \sqsupseteq a,
  \\
  ((x, y), (a, b), (z, w))
  \in \rho_{\mathsf{A} \times \mathsf{B}}
  &\iff (x, a, z) \in \rho_{\mathsf{A}}
    \text{ and } (y, b, w) \in \rho_{\mathsf{B}},
  \\
  (f, d, g)
  \in \rho_{\mathsf{A} \Rightarrow \mathsf{B}}
  &\iff \text{for all } (x, a, y) \in \rho_{\mathsf{A}},
  \\
  &\mathbin{\phantom{\iff}}
    \qquad
    (fx, d(x, a), gy) \in \rho_{\mathsf{B}}
    \text{ and }
    (fx, d(x, a), fy) \in \rho_{\mathsf{B}}.
\end{align*}
The interpretation $\psem{\mathsf{t}} \colon \psem{\mathsf{\Gamma}} \to \psem{\mathsf{A}}$ of a term $\mathsf{\Gamma} \vdash \mathsf{t} : \mathsf{A}$ is equal to $(\sem{\mathsf{t}}, \fsem{\mathsf{t}}, \sem{\mathsf{t}})$, where $\fsem{\mathsf{t}}$ is called the \emph{derivative} \cite{DLG21}, and we give the definition of $\fsem{\mathsf{t}}$ in Figure~\ref{fig:diff}. The function
\begin{equation*}
  \alpha^{\bullet} \colon R^{n} \times [0,+\infty]^{n} \to [0,+\infty] 
\end{equation*}
in the definition of $\fsem{\alpha(\mathsf{t}_{1},\ldots,\mathsf{t}_{n})}$ is defined by
\begin{equation*}
  \prt{\alpha}(x_{1}, \ldots, x_{n},
  a_{1}, \ldots, a_{n}) =
  \bigsqcap_{|x_{1} - y_{1}| \sqsupseteq a_{1}}
  \cdots
  \bigsqcap_{|x_{n} - y_{n}| \sqsupseteq a_{n}}
  |\alpha(x_{1}, \ldots, x_{n})
  - \alpha(y_{1}, \ldots, y_{n})|.
\end{equation*}
The following fundamental lemma follows from the construction of derivatives.
\begin{thm}[Fundamental Lemma]\label{thm:fundamental-lemma}
  For any $\mathsf{\Gamma} \vdash \mathsf{t} : \mathsf{A}$ and $(\gamma,\xi,\varepsilon) \in \rho_{\mathsf{\Gamma}}$, we have
  \begin{equation*}
    (\sem{\mathsf{t}}_{\gamma},
    \fsem{\mathsf{t}}_{\gamma,\xi},
    \sem{\mathsf{t}}_{\varepsilon})
    \in \rho_{\mathsf{A}}.
  \end{equation*}
  In particular, when $\mathsf{A}$ is equal to $\real$, we have
  \begin{equation*}
    |\sem{\mathsf{t}}_{\gamma} -
    \sem{\mathsf{t}}_{\varepsilon}|
    \sqsupseteq \fsem{\mathsf{t}}_{\gamma, \xi}.
  \end{equation*}
\end{thm}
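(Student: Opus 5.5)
The plan is to derive the statement from the fact that $\psem{\mathsf{t}} = (\sem{\mathsf{t}}, \fsem{\mathsf{t}}, \sem{\mathsf{t}})$ is a morphism $\psem{\mathsf{\Gamma}} \to \psem{\mathsf{A}}$ of $\qqm$. Here $\psem{\mathsf{\Gamma}}$ is the finite product $\psem{\mathsf{A}_1}\times\cdots\times\psem{\mathsf{A}_n}$ (the terminal object $T$ when $n=0$), whose predicate is exactly $\rho_{\mathsf{\Gamma}}$. Once this is known, the defining condition on morphisms gives $(\sem{\mathsf{t}}_{\gamma}, \fsem{\mathsf{t}}_{\gamma,\xi}, \sem{\mathsf{t}}_{\varepsilon}) \in \rho_{\mathsf{A}}$ for every $(\gamma,\xi,\varepsilon)\in\rho_{\mathsf{\Gamma}}$. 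The particular case $\mathsf{A}=\real$ is then just the definition of $\rho_{\real}$.

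To establish that $\psem{\mathsf{t}}$ is a morphism, I would proceed by induction on the typing derivation. In each case I would check that the clauses of Figure~\ref{fig:diff} coincide with the canonical cartesian closed interpretation in $\qqm$ built from Proposition~\ref{prop:ccc}:
\begin{itemize}
\item variables are interpreted by projections $(p,q,p)$;
\item pairs by pairing;
\item $\fst,\snd$ by composition with projections;
\item application by $\langle \cdot,\cdot\rangle$ followed by the evaluation morphism $(e,c,e)$;
\item $\lambda$ by currying.
\end{itemize}
Since identities, composites, tuplings, projections, evaluation and currying of morphisms are all morphisms (this is the content of Proposition~\ref{prop:ccc}), the inductive cases are then automatic. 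It remains to check that the first and third components of each construction are the set-theoretic ones of Figure~\ref{fig:semt}, and this is immediate because the forgetful structure is that of $\mathbf{Set}$.

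Only the base cases need a direct verification. For a constant $\underline{r}$, the triple is $(\mathrm{const}_r, \mathrm{const}_0, \mathrm{const}_r)$ on the terminal part, and $(r,0,r)\in\rho_{\real}$ holds trivially. For $\alpha(\mathsf{t}_1,\ldots,\mathsf{t}_n)$, I reduce by composition to showing that $(\alpha,\prt{\alpha},\alpha)\colon R^n\to R$ is a morphism. Monotonicity of $\prt{\alpha}(\vec x,-)$ holds because enlarging $a_i$ in the quantale order (that is, decreasing it numerically) shrinks the index sets of the meets. For $(\vec x,\vec a,\vec y)\in\rho_{R^n}$, each $y_i$ ranges over the index set, so $\prt{\alpha}(\vec x,\vec a)\sqsubseteq |\alpha(\vec x)-\alpha(\vec y)|$. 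This yields both the required $(\alpha\vec x,\prt{\alpha}(\vec x,\vec a),\alpha\vec y)\in\rho_{\real}$ and the second clause, which coincides with the first since $f=g=\alpha$.

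I expect the main obstacle to be the $\lambda$ case, where one must confirm that the currying in $\qqm$ is well defined. Let $(f,d,f)\colon \Gamma\times A\to B$ and set $\Lambda d(\gamma,\xi)(x,a)=d((\gamma,x),(\xi,a))$. This must lie in $(\uset{A},\mathcal Q_A)\rightarrowtriangle\mathcal Q_B$, which holds by monotonicity of $d$. Given $(\gamma,\xi,\varepsilon)\in\rho_\Gamma$ and $(x,a,y)\in\rho_A$, I need both
\[
  (f(\gamma,x),\, d((\gamma,x),(\xi,a)),\, f(\varepsilon,y))\in\rho_B
  \quad\text{and}\quad
  (f(\gamma,x),\, d((\gamma,x),(\xi,a)),\, f(\gamma,y))\in\rho_B .
\]
The first is direct. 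The second is the delicate one: quasi-reflexivity of $\rho_\Gamma$ gives $(\gamma,\xi,\gamma)\in\rho_\Gamma$, and then applying the morphism property of $(f,d,f)$ to $((\gamma,x),(\xi,a),(\gamma,y))$ gives the result. Monotonicity of $\Lambda d(\gamma,-)$ in $\xi$ also follows from monotonicity of $d$. With this checked, together with the analogous routine check for evaluation, the induction closes.
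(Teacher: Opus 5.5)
Your proposal is correct and follows the paper's proof, which consists only of the observation that $\psem{\mathsf{t}} = (\sem{\mathsf{t}}, \fsem{\mathsf{t}}, \sem{\mathsf{t}})$ is a morphism $\psem{\mathsf{\Gamma}} \to \psem{\mathsf{A}}$ in the cartesian closed category $\qqm$. Your induction spells out what the paper leaves implicit: that the clauses of Figure~\ref{fig:diff} agree with the cartesian closed interpretation, that $(\alpha,\prt{\alpha},\alpha)$ is a morphism, and that currying yields a morphism (using quasi-reflexivity of $\rho_{\mathsf{\Gamma}}$ for the second clause).
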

\begin{proof}
  The claim follows from that $\psem{\mathsf{t}} = (\sem{\mathsf{t}}, \fsem{\mathsf{t}}, \sem{\mathsf{t}})$ is a morphism from $\psem{\mathsf{\Gamma}}$ to $\psem{\mathsf{A}}$.
\end{proof}
\begin{figure}
  \centering
  \fbox{\begin{math}
      \begin{array}{c}
        \begin{aligned}
          \fsemp{\mathsf{x}_{i}}{\gamma}{\xi}
          &= \xi|_{i}
          \\
          \fsemp{\underline{r}}{\gamma}{\xi}
          &= 0
          \\
          \fsemp{\alpha
          (\mathsf{t}_{1},\ldots,\mathsf{t}_{n})}{\gamma}{\xi}
          &= \prt{\alpha}(
            \semp{\mathsf{t}_{1}}{\gamma},
            \ldots,
            \semp{\mathsf{t}_{n}}{\gamma},
            \fsemp{\mathsf{t}_{1}}{\gamma}{\xi},
            \ldots,
            \fsemp{\mathsf{t}_{n}}{\gamma}{\xi})
          \\
          \fsemp{\mathsf{t} \, \mathsf{s}}{\gamma}{\xi}
          &= \fsemp{\mathsf{t}} {\gamma}{\xi}
            (\semp{\mathsf{s}}{\gamma},
            \fsemp{\mathsf{s}}{\gamma}{\xi})
          \\
          \fsemp{\lam{\mathsf{x}}{\mathsf{A}}{\mathsf{t}}}
          {\gamma}{\xi} (x, a)
          &= \fsemp{\mathsf{t}}{\gamma \add x}{\xi \add a}
          \\
          \fsemp{\pair{\mathsf{t}}{\mathsf{s}}}{\gamma}{\xi}
          &= (\fsemp{\mathsf{t}} {\gamma}{\xi},
            \fsemp{\mathsf{s}} {\gamma}{\xi})
          \\
          \fsemp{\fst(\mathsf{t})}{\gamma}{\xi}
          &=
            \text{the first component of }
            \fsemp{\mathsf{t}}{\gamma}{\xi}
          \\
          \fsemp{\snd(\mathsf{t})}{\gamma}{\xi}
          &=
            \text{the second component of }
            \fsemp{\mathsf{t}}{\gamma}{\xi}
        \end{aligned}
      \end{array}
    \end{math}}
  \caption{The Definition of Derivatives}
  \label{fig:diff}
\end{figure}

The fundamental lemma provides a compositional framework for reasoning about distances between the set theoretic interpretations of terms. In the following example, we suppose that $\mathbb{F}$ is the set of all multi-arity functions on $\mathbb{R}$.
\begin{exa}\label{eg:phipsi}
  We fix a positive real $\epsilon > 0$, and define a function $F \colon (\mathbb{R} \Rightarrow \mathbb{R}) \times \mathbb{R} \to \mathbb{R}$ by
  \begin{equation*}
    F(f, x) =
    \frac{
      f(x + \epsilon) - f(x)
    }{
      \epsilon
    }.
  \end{equation*}
  The function $F$ approximate the derivative of $f \colon \mathbb{R} \to \mathbb{R}$ at $x \in \mathbb{R}$ and is equal to the set theoretic interpretation of
  \begin{equation*}
    \mathsf{f} : \real \Rightarrow \real,
    \mathsf{x} : \real \vdash
    \mathsf{let} \ \mathsf{y} \
    \mathsf{be} \
    \mathrm{add} (\mathsf{x}) \
    \mathsf{in} \
    \mathrm{diff}
    (\mathsf{f} \, \mathsf{y}
    ,
    \mathsf{f} \, \mathsf{x})
    : \real
  \end{equation*}
  where $\mathsf{let} \ \mathsf{y} \ \mathsf{be} \ \mathsf{t} \ \mathsf{in} \ \mathsf{s}$ is a syntactic sugar for $(\lam{\mathsf{y}}{\mathsf{A}}{\mathsf{s}}) \, \mathsf{t}$, and $\mathrm{diff} \colon \mathbb{R} \times \mathbb{R} \to \mathbb{R}$ and $\mathrm{add} \colon \mathbb{R} \to \mathbb{R}$ are given by $\mathrm{diff}(x,y) = (x - y)/\epsilon$ and $\mathrm{add}(x) = x + \epsilon$. Then, it follows from the fundamental lemma that
  \begin{equation*}
    D \colon \uset{(R \Rightarrow R) \times R} \times \mathcal{Q}_{(R \Rightarrow R) \times R} \to \mathcal{Q}_{R} 
  \end{equation*}
  given by
  \begin{equation*}
    D((f, x), (d, a))
    =
    \frac{d(x + \epsilon, a)
      + d(x, a)}{\epsilon}
  \end{equation*}
  satisfies $(F, D, F) \in \rho_{((R \Rightarrow R) \times R) \Rightarrow R}$. We note that $D$ depends on the specific term chosen to denote $F$. Since we have
  \begin{equation*}
    (\mathrm{id}_{\mathbb{R}}, \hat{\rho}(\mathrm{id}_{\mathbb{R}}, \sin),
    \sin)
    \in
    \rho_{R \Rightarrow R},
    \qquad
    (0,a,a)
    \in
    \rho_{R},
  \end{equation*}
  we obtain the following bound of the distance between $F(\mathrm{id}_{\mathbb{R}}, 0)$ and $F(\sin, a)$:
  \begin{align*}
    |F(\mathrm{id}_{\mathbb{R}}, 0)
    - F(\sin, a)|
    &\sqsupseteq
      D((\mathrm{id}_{\mathbb{R}},0),
      (\hat{\rho}(\mathrm{id}_{\mathbb{R}},\sin),a))
    \\
    &=
      \dfrac{1}{\epsilon}
      \left(
      a \sqcap
      \left(
      \bigsqcap_{|\epsilon - y| \sqsupseteq a}
      |\epsilon - \sin(y)|
      + \bigsqcap_{|y| \sqsupseteq a} |\sin(y)|
      \right)\right)
  \end{align*}
  where the second equality follows from the explicit presentation of $\hat{\rho}(\mathrm{id}_{\mathbb{R}}, \sin)$ given in Example~\ref{eg:id-sin}. Further simplification leads to the following bound:
  \begin{equation*}
    |F(\mathrm{id}_{\mathbb{R}}, 0)
    - F(\sin, a)|
    \sqsupseteq
    \dfrac{a}{\epsilon}
    \sqcap
    \bigsqcap_{|y| \sqsupseteq a}
    \left|
      1 - \dfrac{\sin(y + \epsilon) - \sin(y)}{\epsilon}
    \right|.
  \end{equation*}
  In particular, when $a = \epsilon^{2}$, the right hand side converges to $0$ as $\epsilon \to 0$.
\end{exa}

We can also leverage the fundamental lemma to approximate the distances between different functions.
\begin{prop}\label{prop:fundamental2}  
  For any closed terms $\vdash \mathsf{t} : \mathsf{A} \Rightarrow \real$ and $\vdash \mathsf{s} : \mathsf{A} \Rightarrow \real$, we have
  \begin{equation*}
    (\sem{\mathsf{t}},
    d,
    \sem{\mathsf{s}})
    \in \rho_{\mathsf{A} \Rightarrow \real}
  \end{equation*}
  where $d \in \mathcal{Q}_{\mathsf{A} \Rightarrow \real}$ is given by
  \begin{equation*}
    d(x, a) =
    \bigl(|\sem{\mathsf{t}}(x) - \sem{\mathsf{s}}(x)|
      +
      \fsem{\mathsf{s}}(x, a)\bigr)
    \sqcap
    \fsem{\mathsf{t}}(x, a).
  \end{equation*}
\end{prop}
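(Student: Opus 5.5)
The plan is to reduce everything to the Fundamental Lemma (Theorem~\ref{thm:fundamental-lemma}) applied to the closed terms $\mathsf{t}$ and $\mathsf{s}$ separately, followed by the triangle inequality in $\mathbb{R}$. Throughout, $\fsem{\mathsf{t}}$ denotes $\fsemp{\mathsf{t}}{\ast}{\ast}$, and I use that in the Lawvere quantale the meet $\sqcap$ is the numerical maximum and $\sqsubseteq$ is the reversed numerical order.

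\textbf{Step 1: the Fundamental Lemma for $\mathsf{t}$ and $\mathsf{s}$.} Apply Theorem~\ref{thm:fundamental-lemma} in the empty context with $(\ast,\ast,\ast) \in \rho_{\mathsf{\Gamma}}$. This gives $(\sem{\mathsf{t}}, \fsem{\mathsf{t}}, \sem{\mathsf{t}}) \in \rho_{\mathsf{A} \Rightarrow \real}$ and $(\sem{\mathsf{s}}, \fsem{\mathsf{s}}, \sem{\mathsf{s}}) \in \rho_{\mathsf{A} \Rightarrow \real}$. In particular $\fsem{\mathsf{t}}, \fsem{\mathsf{s}} \in \mathcal{Q}_{\mathsf{A} \Rightarrow \real}$. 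Unfolding $\rho_{\mathsf{A}\Rightarrow\real}$, for every $(x,a,y) \in \rho_{\mathsf{A}}$ we get, numerically,
\begin{equation*}
  |\sem{\mathsf{t}}(x) - \sem{\mathsf{t}}(y)| \le \fsem{\mathsf{t}}(x,a),
  \qquad
  |\sem{\mathsf{s}}(x) - \sem{\mathsf{s}}(y)| \le \fsem{\mathsf{s}}(x,a).
\end{equation*}

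\textbf{Step 2: $d$ lies in $\mathcal{Q}_{\mathsf{A}\Rightarrow\real}$.} Fix $x$. The maps $\fsem{\mathsf{t}}(x,-)$ and $\fsem{\mathsf{s}}(x,-)$ are monotone. Adding a constant and taking a binary meet both preserve monotonicity in $[0,+\infty]$, so $d(x,-)$ is monotone as well.

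\textbf{Step 3: the two clauses of $\rho_{\mathsf{A}\Rightarrow\real}$.} Let $(x,a,y) \in \rho_{\mathsf{A}}$.
\begin{itemize}
\item For the clause relating $\sem{\mathsf{t}}(x)$ and $\sem{\mathsf{s}}(y)$, the triangle inequality and Step 1 give
\begin{equation*}
  |\sem{\mathsf{t}}(x) - \sem{\mathsf{s}}(y)| \le |\sem{\mathsf{t}}(x) - \sem{\mathsf{s}}(x)| + \fsem{\mathsf{s}}(x,a).
\end{equation*}
The right-hand side is numerically at most the maximum $d(x,a)$. Hence $|\sem{\mathsf{t}}(x) - \sem{\mathsf{s}}(y)| \sqsupseteq d(x,a)$.
\item For the clause relating $\sem{\mathsf{t}}(x)$ and $\sem{\mathsf{t}}(y)$, Step 1 gives $|\sem{\mathsf{t}}(x) - \sem{\mathsf{t}}(y)| \le \fsem{\mathsf{t}}(x,a) \le d(x,a)$ numerically. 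Hence $|\sem{\mathsf{t}}(x) - \sem{\mathsf{t}}(y)| \sqsupseteq d(x,a)$.
\end{itemize}
Together these are exactly $(\sem{\mathsf{t}}, d, \sem{\mathsf{s}}) \in \rho_{\mathsf{A}\Rightarrow\real}$.

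\textbf{Main obstacle.} Mathematically the argument is immediate; the delicate point is the direction of the order. The meet $\sqcap$ in the reversed Lawvere order is the numerical maximum, so one must check that $d$ is numerically above both bounds, which is exactly what each clause requires. One must also remember to check the second, self-distance clause for $\sem{\mathsf{t}}$, and this is precisely why the factor $\fsem{\mathsf{t}}(x,a)$ appears in $d$.
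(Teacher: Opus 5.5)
Your proposal is correct and follows the paper's proof exactly: apply the Fundamental Lemma to $\mathsf{t}$ and $\mathsf{s}$ separately, then use the triangle inequality in $\mathbb{R}$ for the $(\sem{\mathsf{t}}(x),\sem{\mathsf{s}}(y))$ clause, and $d(x,a) \sqsubseteq \fsem{\mathsf{t}}(x,a)$ for the self-distance clause. Your Step~2 check that $d(x,-)$ is monotone is a small addition that the paper leaves implicit when it asserts $d \in \mathcal{Q}_{\mathsf{A} \Rightarrow \real}$.
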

\begin{proof}
  By the fundamental lemma, we have
  \begin{equation*}
    (\sem{\mathsf{t}},
    \fsem{\mathsf{t}},
    \sem{\mathsf{t}})
    \in \rho_{\mathsf{A} \Rightarrow \real},
    \qquad
    (\sem{\mathsf{s}},
    \fsem{\mathsf{s}},
    \sem{\mathsf{s}})
    \in \rho_{\mathsf{A} \Rightarrow \real}.
  \end{equation*}
  Therefore, for any $(x, a, y) \in \rho_{\mathsf{A}}$, the following inequalities hold:
  \begin{align*}
    d(x, a)
    &\sqsubseteq
      |\sem{\mathsf{t}}(x) - \sem{\mathsf{s}}(x)|
      + \fsem{\mathsf{s}}(x, a)
    \\
    &\sqsubseteq
      |\sem{\mathsf{t}}(x) - \sem{\mathsf{s}}(x)|
      + |\sem{\mathsf{s}}(x) - \sem{\mathsf{s}}(y)|
    \\
    &\sqsubseteq
      |\sem{\mathsf{t}}(x) - \sem{\mathsf{s}}(y)|,
  \end{align*}
  and
  \begin{align*}
    d(x, a)
    &\sqsubseteq
      \fsem{\mathsf{t}}(x, a)
      \sqsubseteq 
      |\sem{\mathsf{t}}(x) - \sem{\mathsf{t}}(y)|.
  \end{align*}
  Hence, by the definition of $\rho_{\mathsf{A} \Rightarrow \real}$, we obtain $(\sem{\mathsf{t}}, d, \sem{\mathsf{s}}) \in \rho_{\mathsf{A} \Rightarrow \real}$.
\end{proof}

\subsection{Some Additional Properties in Quasi-Quasi-Metric Semantics}
\label{sec:relating-quasi-quasi-metric}

In this section, we investigate the quasi-quasi-metrics arising from the interpretation of $\LL$. We observe that quasi-quasi-metrics of the form $\rho_{\mathsf{A}}$ for some type $\mathsf{A}$ enjoy several additional properties: namely, a variant of indistancy, left strong transitivity, and a weak form of symmetry.

\begin{thm}\label{thm:self}
  For every type $\mathsf{A}$, the quasi-quasi-metric $\rho_{\mathsf{A}}$ satisfies the following properties.
  \begin{itemize}
  \item
    (Self-indistancy) For any $x,y \in \uset{\mathsf{A}}$,
    if $(x, \sigma_{x}, y) \in \rho_{\mathsf{A}}$, then $x = y$.
  \item
    (Left strong transitivity) For any $x, y \in \uset{\mathsf{A}}$, and for any $a, b \in \mathcal{Q}_{\mathsf{A}}$, if $(x, a \otimes \sigma_{y}, y) \in \rho_{\mathsf{A}}$ and $(y, b, z) \in \rho_{\mathsf{A}}$, then $(x, a \otimes b, z) \in \rho_{\mathsf{A}}$.
  \item
    (Weak symmetry) For any $x,y,z,w \in \uset{\mathsf{A}}$, and for any $a,b \in \mathcal{Q}_{\mathsf{A}}$,
    if $(x, a \otimes \sigma_{w}, y) \in \rho_{\mathsf{A}}$ and $(x, b, z) \in \rho_{\mathsf{A}}$ and $a \otimes b \sqsubseteq \sigma_{y}$, then $(y, a \otimes b, z) \in \rho_{\mathsf{A}}$.
  % \item
  %   (Weak symmetry) For any $x, y, z \in \uset{\mathsf{A}}$, and for any $a, b \in \mathcal{Q}_{\mathsf{A}}$,
  %   if $(x, (\sigma_{x} \sqcap \sigma_{y}) \otimes a, y) \in \rho_{\mathsf{A}}$ and $(x, b, z) \in \rho_{\mathsf{A}}$, then $(y, \sigma_{y} \sqcap (a \otimes b), z) \in \rho_{\mathsf{A}}$.
  \end{itemize}
  Here, for $x \in \uset{\mathsf{A}}$, we define $\sigma_{x} \in \mathcal{Q}_{\mathsf{A}}$ to be $\hat{\rho}_{\mathsf{A}}(x, x)$, i.e.,
  \begin{equation*}
    \sigma_{x} = \bigsqcup \{a \in \mathcal{Q}_{\mathsf{A}}
    \mid (x, a, x) \in \rho_{\mathsf{A}}\}.
  \end{equation*}
  We call $\sigma_{x}$ the \emph{self-distance} of $x$.
\end{thm}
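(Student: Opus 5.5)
We prove the three properties simultaneously by induction on the type $\mathsf{A}$. Two facts are used throughout: $\rho_{\mathsf{A}}$ is closed, so $(x,\sigma_x,x)\in\rho_{\mathsf{A}}$, and $\rho_{\mathsf{A}}$ is downward closed in its quantale argument.

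\emph{Base and product cases.} For $\real$ we have $\sigma_x = 0$, which is the top element of the Lawvere quantale. Self-indistancy then reads $|x-y|\le 0$. Left strong transitivity and weak symmetry (with $\sigma_w=0$) reduce to instances of the triangle inequality, for example $|y-z|\le |y-x|+|x-z|\le a+b$. For $\mathsf{A}\times\mathsf{B}$ everything is componentwise. In particular $\sigma_{(x,y)}=(\sigma_x,\sigma_y)$, since $\rho_{\mathsf{A}\times\mathsf{B}}$ is a product of closed predicates. So all three properties follow directly from the induction hypotheses.

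\emph{Function case, preliminary lemma.} Let $\mathsf{A}\Rightarrow\mathsf{B}$. Proposition~\ref{lem:arrow} gives
\[
\sigma_f(x,a)=\textstyle\bigsqcap_{(x,a,y)\in\rho_{\mathsf{A}}}\hat\rho_{\mathsf{B}}(fx,fy).
\]
Two consequences are needed.
\begin{enumerate}[label=(\alph*)]
\item $\sigma_f(x,a)\sqsubseteq\sigma_{fx}$, obtained by taking $y=x$. This needs $(x,a,x)\in\rho_{\mathsf{A}}$, which holds whenever some $(x,a,y)\in\rho_{\mathsf{A}}$, by quasi-reflexivity.
\item $\sigma_f(x,\sigma_x)=\sigma_{fx}$. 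By self-indistancy of $\mathsf{A}$ (induction hypothesis), the only $y$ with $(x,\sigma_x,y)\in\rho_{\mathsf{A}}$ is $x$ itself.
\end{enumerate}
Also, if $(g,e,h)\in\rho_{\mathsf{A}\Rightarrow\mathsf{B}}$ then $e\sqsubseteq\sigma_g$ pointwise. Indeed, $(gx,e(x,a),gy)\in\rho_{\mathsf{B}}$ for every $(x,a,y)\in\rho_{\mathsf{A}}$, and we then apply Proposition~\ref{lem:arrow}.

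\emph{Function case, self-indistancy.} Suppose $(f,\sigma_f,g)\in\rho_{\mathsf{A}\Rightarrow\mathsf{B}}$. Apply it to $(x,\sigma_x,x)$. Using (b), this gives $(fx,\sigma_{fx},gx)\in\rho_{\mathsf{B}}$. Self-indistancy of $\mathsf{B}$ then yields $fx=gx$ for every $x$, so $f=g$.

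\emph{Function case, the key step: monotonicity.} The hypotheses carry the self-distance factor $\sigma_g(x,a)$, whereas the induction hypothesis for $\mathsf{B}$ needs $\sigma_{gx}$. By (a) these are related in the wrong direction for a direct appeal. The fix is to evaluate at the radius $\sigma_x$ instead of $a$:
\begin{itemize}
\item if $(x,a,y)\in\rho_{\mathsf{A}}$, then $a\sqsubseteq\hat\rho_{\mathsf{A}}(x,y)\sqsubseteq\sigma_x$ by quasi-reflexivity (Proposition~\ref{prop:rqt});
\item hence, by monotonicity of $d(x,-)$ (the $\rightarrowtriangle$ condition), $d(x,a)\sqsubseteq d(x,\sigma_x)$;
\item applying the hypothesis $(f,d\otimes\sigma_g,g)\in\rho_{\mathsf{A}\Rightarrow\mathsf{B}}$ to $(x,\sigma_x,x)$ and using (b) gives $(fx,d(x,\sigma_x)\otimes\sigma_{gx},gx)\in\rho_{\mathsf{B}}$;
\item downward closure then gives $(fx,d(x,a)\otimes\sigma_{gx},gx)\in\rho_{\mathsf{B}}$.
\end{itemize}
This is exactly the shape required by the induction hypothesis for $\mathsf{B}$. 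I expect this to be the main obstacle, and it is where the monotonicity built into $\mathcal{Q}_{\mathsf{A}\Rightarrow\mathsf{B}}$ is indispensable.

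\emph{Function case, left strong transitivity.} Assume $(f,d\otimes\sigma_g,g),(g,e,h)\in\rho_{\mathsf{A}\Rightarrow\mathsf{B}}$ and fix $(x,a,y)\in\rho_{\mathsf{A}}$.
\begin{itemize}
\item First clause: combine the triple from the key step with $(gx,e(x,a),hy)\in\rho_{\mathsf{B}}$ using left strong transitivity of $\mathsf{B}$. This gives $(fx,d(x,a)\otimes e(x,a),hy)\in\rho_{\mathsf{B}}$.
\item Second clause: the hypothesis gives $(fx,d(x,a)\otimes\sigma_g(x,a),fy)\in\rho_{\mathsf{B}}$. Since $e\sqsubseteq\sigma_g$, downward closure gives $(fx,d(x,a)\otimes e(x,a),fy)\in\rho_{\mathsf{B}}$.
\end{itemize}

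\emph{Function case, weak symmetry.} Assume $(f,d\otimes\sigma_k,g),(f,e,h)\in\rho_{\mathsf{A}\Rightarrow\mathsf{B}}$ with $d\otimes e\sqsubseteq\sigma_g$, and fix $(x,a,y)\in\rho_{\mathsf{A}}$.
\begin{itemize}
\item Second clause: $(gx,d(x,a)\otimes e(x,a),gy)\in\rho_{\mathsf{B}}$ follows from $d\otimes e\sqsubseteq\sigma_g$, Proposition~\ref{lem:arrow} and closure.
\item First clause: the same monotonicity trick, now with $w=kx$, gives $(fx,d(x,a)\otimes\sigma_{kx},gx)\in\rho_{\mathsf{B}}$. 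We also have $(fx,e(x,a),hy)\in\rho_{\mathsf{B}}$. Moreover $d(x,a)\otimes e(x,a)\sqsubseteq\sigma_g(x,a)\sqsubseteq\sigma_{gx}$ by (a). Weak symmetry of $\mathsf{B}$ then yields $(gx,d(x,a)\otimes e(x,a),hy)\in\rho_{\mathsf{B}}$.
\end{itemize}
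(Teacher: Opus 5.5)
Your proof is correct and follows the paper's argument essentially step for step. Both use induction on types, the identity $\sigma_f(x,\sigma_x)=\sigma_{fx}$ obtained from self-indistancy of the domain, monotonicity of $d(x,-)$ together with $a\sqsubseteq\sigma_x$, and $e\sqsubseteq\sigma_g$ for the second clauses. The only cosmetic difference is that you get $\sigma_g(x,a)\sqsubseteq\sigma_{gx}$ by instantiating $y=x$ in Proposition~\ref{lem:arrow} (legitimate since $(x,a,x)\in\rho_{\mathsf{A}}$), whereas the paper uses monotonicity of $\sigma_g(x,-)$; both routes are valid.
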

\begin{proof}
  We prove the claim by induction on $\mathsf{A}$. For the base case $\mathsf{A} = \real$, for all $x \in \mathbb{R}$, the self-distances $\sigma_{x}$ are equal to $0$. Therefore, the claim follows from that $\rho_{\real}$ is a metric on $\mathbb{R}$. For the case $\mathsf{A} = (\mathsf{B} \times \mathsf{C})$, for every $(x, y) \in \uset{\mathsf{B} \times \mathsf{C}}$, we have $\sigma_{(x, y)} = (\sigma_{x}, \sigma_{y})$, and the claim follows from the induction hypotheses for $\mathsf{B}$ and $\mathsf{C}$. It remains to check the case $\mathsf{A} = (\mathsf{B} \Rightarrow \mathsf{C})$. Here, we use the fact that for any $f \colon \uset{\mathsf{A}} \to \uset{\mathsf{B}}$ and $x \in \uset{\mathsf{A}}$, we have $\sigma_{f}(x, \sigma_{x}) = \sigma_{fx}$. We defer the proof of this fact to Lemma~\ref{lem:self}.
  \begin{itemize}
  \item
    (Self-indistancy) If $(f, \sigma_{f}, g) \in \rho_{\mathsf{A} \Rightarrow \mathsf{B}}$, then for any $x \in \uset{\mathsf{A}}$,
    \begin{equation*}
      (fx, \sigma_{fx}, gx)
      = (fx, \sigma_{f}(x, \sigma_{x}), gx)
      \in \rho_{\mathsf{B}}.
    \end{equation*}
    Hence, it follows from self-indistancy of $\rho_{\mathsf{B}}$ that $fx$ is equal to $gx$ for all $x \in \uset{\mathsf{A}}$.
  \item
    (Left strong transitivity) We suppose $(f, d \otimes \sigma_{g}, g) \in \rho_{\mathsf{A} \Rightarrow \mathsf{B}}$ and $(g, e, h) \in \rho_{\mathsf{A} \Rightarrow \mathsf{B}}$. From these assumptions, for any $(x, a, y) \in \rho_{\mathsf{A}}$, we obtain
    \begin{equation*}
      (fx, d(x,\sigma_{x}) \otimes \sigma_{gx}, gx)
      =
      (fx,
      d(x,\sigma_{x}) \otimes
      \sigma_{g}(x,\sigma_{x}),
      gx)
      \in \rho_{\mathsf{B}},
      \quad
      (gx, e(x,a), hy) \in \rho_{\mathsf{B}}.
    \end{equation*}
    Then, it follows from left strong transitivity of $\rho_{\mathsf{B}}$ that
    \begin{equation*}
      (fx, d(x, \sigma_{x}) \otimes
      e(x, a), hy) \in \rho_{\mathsf{B}}.
    \end{equation*}
    Since $d$ is monotone and $a \sqsubseteq \sigma_{x}$, we have $d(x, a) \sqsubseteq d(x, \sigma_{x})$. Hence,
    \begin{equation}\label{eq:fx_hy}
      (fx, d(x, a) \otimes
      e(x, a), hy) \in \rho_{\mathsf{B}}.
    \end{equation}
    It remains to check $(fx, d(x, a) \otimes e(x, a), fy)
    \in \rho_{\mathsf{B}}$ to deduce $(f, d \otimes e, h) \in \rho_{\mathsf{A} \Rightarrow \mathsf{B}}$. By applying quasi-reflexivity to the assumptions, we obtain
    \begin{equation*}
      (f, d \otimes \sigma_{g}, f) \in \rho_{\mathsf{A} \Rightarrow \mathsf{B}},
      \qquad
      (g, e, g) \in \rho_{\mathsf{A} \Rightarrow \mathsf{B}}.
    \end{equation*}
    In particular, we have
    \begin{equation*}
      (fx, d(x, a) \otimes \sigma_{g}(x, a), fy)
      \in
      \rho_{\mathsf{A} \Rightarrow \mathsf{B}},
      \qquad
      e \sqsubseteq \sigma_{g}.
    \end{equation*}
    Hence,
    \begin{equation}\label{eq:fx_fy}
      (fx, d(x, a) \otimes e(x, a), fy)
      \in \rho_{\mathsf{B}}.
    \end{equation}
    By \eqref{eq:fx_hy} and \eqref{eq:fx_fy}, we see that $(f, d \otimes e, h)$ is an element of $\rho_{\mathsf{A} \Rightarrow \mathsf{B}}$.
  \item
    (Weak symmetry)
    We suppose $(f, d \otimes \sigma_{k}, g) \in \rho_{\mathsf{A} \Rightarrow \mathsf{B}}$ and $(f, e, h) \in \rho_{\mathsf{A} \Rightarrow \mathsf{B}}$ and $d \otimes e \sqsubseteq \sigma_{g}$. Then, for any $(x, a, y) \in \rho_{\mathsf{A}}$, we have
    \begin{align}
      &
        (fx,
        d(x,\sigma_{x}) \otimes \sigma_{kx}
        , gx)
        =
        (fx, d(x,\sigma_{x}) \otimes \sigma_{k}(x, \sigma_{x}), gx)
        \in \rho_{\mathsf{B}},
        \label{eq:rdg}
      \\
      &
        (fx,e(x,a),hy) \in \rho_{\mathsf{B}}.
        \label{eq:feh}
      % \\
      % &
      %   \sigma_{fx} \otimes d(x,\sigma_{x})
      %   = \sigma_{f}(x, \sigma_{x}) \otimes d(x,\sigma_{x})
      %   \sqsubseteq \sigma_{g}(x, \sigma_{x}) = \sigma_{gx}.
    \end{align}
    Since $a \sqsubseteq \sigma_{x}$, the first clause \eqref{eq:rdg} implies
    \begin{equation}\label{eq:rdga}
      (fx,
      d(x,a) \otimes \sigma_{kx}
      , gx)
      \in \rho_{\mathsf{B}}.
    \end{equation}
    We also have
    \begin{equation*}
      d(x, a) \otimes e(x, a) \sqsubseteq \sigma_{g}(x, a)
      \sqsubseteq \sigma_{g}(x, \sigma_{x}) = \sigma_{gx}
    \end{equation*}
    Therefore, by \eqref{eq:rdga} and \eqref{eq:feh} and weak symmetry of $\rho_{\mathsf{B}}$,
    \begin{equation*}
      (gx, d(x, a) \otimes e(x, a), hy) \in \rho_{\mathsf{B}}.
    \end{equation*}
    Furthermore, since $d \otimes e \sqsubseteq \sigma_{g}$, we obtain
    \begin{equation*}
      (gx, d(x,a) \otimes e(x,a), gy) \in \rho_{\mathsf{B}}.
    \end{equation*}
    Hence, $(g, d \otimes e, h)$ is an element of $\rho_{\mathsf{A} \Rightarrow \mathsf{B}}$.
  \end{itemize}
\end{proof}

The proof of Theorem~\ref{thm:self} relies on the next lemma.
\begin{lem}\label{lem:self}
  Let $\mathsf{A}$ and $\mathsf{B}$ be types such that $\rho_{\mathsf{A}}$ satisfies the self-indistancy condition. Then, for any $f \colon \uset{\mathsf{A}} \to \uset{\mathsf{B}}$ and for any $x \in \uset{\mathsf{A}}$, we have $\sigma_{f}(x, \sigma_{x}) = \sigma_{fx}$.
\end{lem}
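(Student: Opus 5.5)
We prove the two inequalities $\sigma_{f}(x,\sigma_{x}) \sqsubseteq \sigma_{fx}$ and $\sigma_{fx} \sqsubseteq \sigma_{f}(x,\sigma_{x})$ separately. Both rest on Proposition~\ref{lem:arrow}, which gives
\begin{equation*}
  \sigma_{f}(x, a) = \hat{\rho}_{\mathsf{A}\Rightarrow\mathsf{B}}(f,f)(x,a)
  = \bigsqcap_{(x,a,y)\in\rho_{\mathsf{A}}} \hat{\rho}_{\mathsf{B}}(fx,fy),
\end{equation*}
where the two meetands $\hat{\rho}_{\mathsf{B}}(fx,gy)$ and $\hat{\rho}_{\mathsf{B}}(fx,fy)$ coincide because $g=f$. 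Here $\sigma_{f}$ is the self-distance of $f$ in the function space $\psem{\mathsf{A}\Rightarrow\mathsf{B}}$.

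For the inequality $\sigma_{f}(x,\sigma_{x}) \sqsubseteq \sigma_{fx}$, note that $(x,\sigma_{x},x)\in\rho_{\mathsf{A}}$. This holds because $\rho_{\mathsf{A}}$ is closed and $\sigma_{x}=\hat{\rho}_{\mathsf{A}}(x,x)$, so it follows from Proposition~\ref{prop:rhd}. Hence $y=x$ is one of the indices in the meet above, and that meetand is $\hat{\rho}_{\mathsf{B}}(fx,fx)=\sigma_{fx}$. A meet is below each of its meetands, so this direction needs no assumption on $\rho_{\mathsf{A}}$.

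For the inequality $\sigma_{fx} \sqsubseteq \sigma_{f}(x,\sigma_{x})$, it suffices to show that every $y$ with $(x,\sigma_{x},y)\in\rho_{\mathsf{A}}$ satisfies $\sigma_{fx}\sqsubseteq \hat{\rho}_{\mathsf{B}}(fx,fy)$. Self-indistancy of $\rho_{\mathsf{A}}$ says exactly that any such $y$ equals $x$. The only meetand is therefore $\hat{\rho}_{\mathsf{B}}(fx,fx)=\sigma_{fx}$, and the inequality is immediate. Combining the two directions gives $\sigma_{f}(x,\sigma_{x}) = \sigma_{fx}$.

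The only delicate point is the appeal to Proposition~\ref{lem:arrow}, and this is where I expect the main obstacle. That proposition needs $\sigma_{f}$ to be the largest element $d$ of $(\uset{\mathsf{A}},\mathcal{Q}_{\mathsf{A}})\rightarrowtriangle\mathcal{Q}_{\mathsf{B}}$ with $(f,d,f)\in\rho_{\mathsf{A}\Rightarrow\mathsf{B}}$. To confirm this, I would check that $\rho_{\mathsf{A}\Rightarrow\mathsf{B}}$ is closed: it is downward closed and closed under arbitrary joins, because $\rho_{\mathsf{B}}$ is, since objects of $\qqm$ carry closed relations. Once closedness is in hand, Proposition~\ref{prop:rhd} applies, and the pointwise meet formula makes both directions one-liners.
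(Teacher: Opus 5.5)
Your proposal is correct and follows the paper's proof. Both instantiate Proposition~\ref{lem:arrow} with $g = f$, then use self-indistancy of $\rho_{\mathsf{A}}$ to collapse the meet over $\{y \mid (x,\sigma_{x},y)\in\rho_{\mathsf{A}}\}$ to the single meetand $\hat{\rho}_{\mathsf{B}}(fx,fx)=\sigma_{fx}$; your two-inequality split only makes explicit what the paper leaves tacit, namely that $(x,\sigma_{x},x)\in\rho_{\mathsf{A}}$ by closedness and Proposition~\ref{prop:rhd}, so the index set is nonempty and the meet is not the top element.
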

\begin{proof}
  For any $f \colon \uset{\mathsf{A}} \to \uset{\mathsf{B}}$ and for any $x \in \uset{\mathsf{A}}$, we have
  \begin{equation*}
    \sigma_{f}(x, \sigma_{x})
    = \bigsqcap_{(x, \sigma_{x}, y) \in \rho_{\mathsf{A}}}
    \hat{\rho}_{\mathsf{B}}(fx, fy)
    = \hat{\rho}_{\mathsf{B}}(fx, fx)
    = \sigma_{fx}
  \end{equation*}
  where the first equality follows from Lemma~\ref{lem:arrow}, and the second equality follows from the self-indistancy condition of $\rho_{\mathsf{A}}$.
\end{proof}

We provide some explanation of the third condition in Theorem~\ref{thm:self}. We call this condition \emph{weak symmetry} because, roughly speaking, it is a combination of symmetry and transitivity: if we ignore the second components of elements in $\rho_{\mathsf{A}}$, the weak symmetry property states that if $(x, y) \in \rho_{\mathsf{A}}$ and $(x, z) \in \rho_{\mathsf{A}}$, then $(y, z) \in \rho_{\mathsf{A}}$. We also explain the first assumption, $(x, a \otimes \sigma_{w}, y) \in \rho_{\mathsf{A}}$, of the weak symmetry property, noting that $w$ is unrelated to $x,y,z \in \uset{\mathsf{A}}$. Roughly speaking, the assumption means that $a$ bounds the ``pointwise distance'' between $x$ and $y$: for example, for functions $f,g \in \uset{(\real \Rightarrow \real) \Rightarrow \real}$ and $d \in \mathcal{Q}_{(\real \Rightarrow \real) \Rightarrow \real}$, if $(f, d \otimes \sigma_{h}, g) \in \rho_{(\real \Rightarrow \real) \Rightarrow \real}$ for some $h \in \uset{(\real \Rightarrow \real) \Rightarrow \real}$, then for any $u \in \uset{\real \Rightarrow \real}$, we have
\begin{equation*}
  (fu, d(u, \sigma_{u}), gu) =
  (fu, d(u, \sigma_{u}) + \sigma_{hu}, gx) =
  (fu, d(u, \sigma_{u}) + \sigma_{h}(x, \sigma_{u}), gu) \in \rho_{\real}.
\end{equation*}
Since $\rho_{\real}$ is symmetric, we also have $(gu, d(u, \sigma_{u}), fu) \in \rho_{\real}$. In this way, we lift symmetry of $\rho_{\real}$ to weak symmetry of $\rho_{\mathsf{A}}$ for all types $\mathsf{A}$. The following corollary provides a more explicit presentation of weak symmetry.
\begin{cor}\label{cor:sym}
  For any type $\mathsf{A}$, and for any $x, y, z \in \uset{\mathsf{A}}$ and $a \in \mathcal{Q}_{\mathsf{A}}$, if $(x, a \otimes \sigma_{z}, y) \in \rho_{\mathsf{A}}$ and $a \otimes \sigma_{x} \sqsubseteq \sigma_{y}$, then $(y, a \otimes \sigma_{x}, x) \in \rho_{\mathsf{A}}$.
  % \begin{equation*}
  %   a^{\circ} = \sigma_{x} \otimes ((\sigma_{x} \sqcap \sigma_{y}) \multimap a).
  % \end{equation*}
\end{cor}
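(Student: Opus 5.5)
We derive Corollary~\ref{cor:sym} directly from the weak symmetry clause of Theorem~\ref{thm:self}. The plan is to choose the parameters of that clause so that its conclusion $(y, a \otimes b, z') \in \rho_{\mathsf{A}}$ becomes $(y, a \otimes \sigma_{x}, x) \in \rho_{\mathsf{A}}$.

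Concretely, we apply weak symmetry with the roles
\begin{equation*}
  x \mapsto x, \quad y \mapsto y, \quad z \mapsto x, \quad w \mapsto z, \quad a \mapsto a, \quad b \mapsto \sigma_{x}.
\end{equation*}
We then check its three hypotheses in turn.
\begin{enumerate}[label=(\arabic*)]
\item The first hypothesis, $(x, a \otimes \sigma_{z}, y) \in \rho_{\mathsf{A}}$, is exactly the first assumption of the corollary.
\item The second hypothesis is $(x, \sigma_{x}, x) \in \rho_{\mathsf{A}}$. By Proposition~\ref{prop:rhd}, applied to the closed predicate $\rho_{\mathsf{A}}$, this holds because $\sigma_{x} = \hat{\rho}_{\mathsf{A}}(x, x) \sqsubseteq \hat{\rho}_{\mathsf{A}}(x,x)$. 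Closedness of $\rho_{\mathsf{A}}$ is guaranteed since $\psem{\mathsf{A}}$ is an object of $\qqm$.
\item The third hypothesis, $a \otimes \sigma_{x} \sqsubseteq \sigma_{y}$, is exactly the second assumption of the corollary.
\end{enumerate}
Weak symmetry then yields $(y, a \otimes \sigma_{x}, x) \in \rho_{\mathsf{A}}$, which is the claim.

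There is no real obstacle. The only point needing care is step (2): that the self-distance $\sigma_x$ is actually attained, i.e.\ $(x,\sigma_x,x)\in\rho_{\mathsf{A}}$. This relies on closedness of $\rho_{\mathsf{A}}$ under joins. That closedness is part of the definition of objects of $\qqm$, and it is preserved by the product and exponential constructions, as Proposition~\ref{prop:ccc} implicitly asserts.
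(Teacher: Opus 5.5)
Your proposal is correct and matches the paper's proof: the paper likewise instantiates the weak symmetry clause of Theorem~\ref{thm:self} with $b=\sigma_x$ and the second witness $(x,\sigma_x,x)\in\rho_{\mathsf{A}}$. Your justification that $(x,\sigma_x,x)\in\rho_{\mathsf{A}}$ holds by closedness of $\rho_{\mathsf{A}}$ (via Proposition~\ref{prop:rhd}) makes explicit a step the paper leaves implicit.
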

\begin{proof}
  Since $(x, \sigma_{x}, x) \in \rho_{\mathsf{A}}$, the claim follows from weak symmetry.
  % We write $a^{\bullet}$ for
  % \begin{equation*}
  %   (\sigma_{x} \sqcap \sigma_{y}) \otimes ((\sigma_{x} \sqcap \sigma_{y}) \multimap a).
  % \end{equation*}
  % Then, we have $a^{\bullet} \sqsubseteq a$, and therefore, $(x, a^{\bullet}, y) \in \rho_{\mathsf{A}}$.
  % Since $(x, \sigma_{x}, x) \in \rho_{\mathsf{A}}$, it follows from weak symmetry that $(y, \sigma_{y} \sqcap a^{\circ}, x)$ is an element of $\rho_{\mathsf{A}}$.
\end{proof}

Another indistancy property follows from weak symmetry.
\begin{cor}
  For any type $\mathsf{A}$, if $(x, \sigma_{y}, y) \in \rho_{\mathsf{A}}$, then $x = y$.
\end{cor}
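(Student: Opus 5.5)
We derive this from Corollary~\ref{cor:sym} combined with self-indistancy from Theorem~\ref{thm:self}. Suppose $(x, \sigma_{y}, y) \in \rho_{\mathsf{A}}$. The idea is to turn this into a triple of the form $(y, \sigma_{y}, x) \in \rho_{\mathsf{A}}$ by weak symmetry. Self-indistancy applied at the point $y$ then forces $y = x$.

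First we put the hypothesis into the shape required by Corollary~\ref{cor:sym}. We take $a = \mathbf{1}$ and $z = y$, so that $a \otimes \sigma_{z} = \sigma_{y}$ and the first hypothesis $(x, \mathbf{1} \otimes \sigma_{y}, y) \in \rho_{\mathsf{A}}$ is exactly our assumption. The second hypothesis asks for $\sigma_{x} \sqsubseteq \sigma_{y}$. By quasi-reflexivity, $(x,\sigma_{y},y)\in\rho_{\mathsf{A}}$ gives $(x, \sigma_{y}, x) \in \rho_{\mathsf{A}}$, hence only $\sigma_{y} \sqsubseteq \sigma_{x}$, which is the wrong direction.

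This is the main obstacle, and I would handle it by applying weak symmetry of Theorem~\ref{thm:self} directly, with a smaller quantity. Take the theorem's $(x,y,z,w)$ to be $(x,y,x,y)$, with $a = \mathbf{1}$ and $b = \sigma_{y}$. Then the first hypothesis is $(x, \sigma_{y}, y) \in \rho_{\mathsf{A}}$, our assumption. The second is $(x, \sigma_{y}, x) \in \rho_{\mathsf{A}}$, which we obtained from quasi-reflexivity. The side condition $\mathbf{1} \otimes \sigma_{y} \sqsubseteq \sigma_{y}$ holds trivially. Weak symmetry then yields $(y, \sigma_{y}, x) \in \rho_{\mathsf{A}}$.

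Finally, self-indistancy of Theorem~\ref{thm:self}, applied with $y$ in the role of the first point, turns $(y, \sigma_{y}, x) \in \rho_{\mathsf{A}}$ into $y = x$, which concludes. The only delicate point is the choice of instantiation in weak symmetry. We must pick $b$ and the auxiliary $z = x$ so that the side condition $a\otimes b\sqsubseteq\sigma_{y}$ is automatic, and taking $b=\sigma_{y}$ achieves this.
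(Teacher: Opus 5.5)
Your proof is correct and is essentially the paper's argument. The paper also applies weak symmetry directly (not Corollary~\ref{cor:sym}) to $(x, \mathbf{1} \otimes \sigma_{y}, y)$ and the quasi-reflexive triple $(x, \sigma_{y}, x)$, obtains $(y, \sigma_{y}, x) \in \rho_{\mathsf{A}}$, and concludes $x = y$ by self-indistancy.
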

\begin{proof}
  Suppose that $(x, \sigma_{y}, y) \in \rho_{\mathsf{A}}$. Then, by weak symmetry with $(x, \mathbf{1} \otimes \sigma_{y}, y) \in \rho_{\mathsf{A}}$ and $(x, \sigma_{y}, x) \in \rho_{\mathsf{A}}$, we obtain $(y, \sigma_{y}, x) \in \rho_{\mathsf{A}}$. It follows from self-indistancy that $y$ is equal to $x$.
  % \begin{equation*}
  %   \sigma_{x} \otimes ((\sigma_{x} \sqcap \sigma_{y}) \multimap \sigma_{y}) = \sigma_{x},
  % \end{equation*}
  % it follows from Corollary~\ref{cor:sym} that $(y, \sigma_{y} \sqcap \sigma_{x}, x) \in \rho_{\mathsf{A}}$. By quasi-reflexivity of $\rho_{A}$ and $(x, \sigma_{y}, y) \in \rho_{\mathsf{A}}$, we obtain $\sigma_{y} \sqsubseteq \sigma_{x}$. Therefore, $(y, \sigma_{y}, x) \in \rho_{\mathsf{A}}$. By self-indistancy, we obtain $x = y$.
\end{proof}

\begin{rem}
  Let $\mathbf{s}\qqm$ be the full subcategory of $\qqm$ consisting of quasi-quasi-metric spaces satisfying self-indistancy, left strong transitivity, and weak symmetry. Following the argument in the proof of Theorem~\ref{thm:self}, we can establish that $\mathbf{s}\qqm$ is cartesian closed. Our preference for quasi-quasi-metric spaces over those with the additional properties is motivated by their seamless connection to the quantitative equational theory given in Section~\ref{sec:concr-descr-least}. In fact, while it is straightforward to incorporate the axioms of quasi-quasi-metric into inference rules of the quantitative equational theory, it is less clear how to accommodate self-indistancy, left strong transitivity, and weak symmetry due to their dependence of self-distances. This difficulty is essentially equivalent to that of characterizing left strong transitivity in terms of quantale-valued predicates without using self-distances (See Proposition~\ref{prop:lpqm}).
\end{rem}

%%% Local Variables:
%%% mode: LaTeX
%%% TeX-master: "tmp.tex"
%%% End:

\section{A Poset of Differential Prelogical Relations}
\label{sec:towards-lattice-mmms}

The category $\qqm$ is not the only way to measure distances between terms. In this section, we introduce a notion of differential prelogical relations  as ``quasi-quasi-metrics'' on $\LL$. The set of differential prelogical relations has a partial order $R \subseteq S$, which intuitively means that the ``quasi-quasi-metric'' $R$ is finer than the ``quasi-quasi-metric'' $S$. Our purpose of this section is to examine the order theoretic structure of this poset, motivated by the structure of the lattices of program equivalences.

\subsection{Differential Prelogical Relations}
\label{sec:diff-logic-relat}

We introduce differential prelogical relations as type-indexed families of quasi-quasi-metrics over the sets of \emph{definable elements}, following the idea of prelogical relations \cite{HD02}. Here, for a type $\mathsf{A}$, we say that an element $x \in \uset{\mathsf{A}}$ is \emph{$\LL$-definable} when there is a closed term $\mathsf{t}$ of type $\mathsf{A}$ such that $x = \sem{\mathsf{t}}$. We denote the subset of $\uset{\mathsf{A}}$ consisting of $\LL$-definable elements by $\dfn{\mathsf{A}}$.

\begin{defi}
  A \emph{differential prelogical relation} $R$ is a type-indexed family of quasi-quasi-metrics
  \begin{equation*}
    R_{\mathsf{A}}
    \subseteq
    \dfn{\mathsf{A}} \times
    \mathcal{Q}_{\mathsf{A}} \times
    \dfn{\mathsf{A}}
  \end{equation*}
  subject to the following conditions:
  \begin{itemize}
  \item
    For any $(x, a, y) \in \dfn{\real} \times
    \mathcal{Q}_{\real} \times \dfn{\real}$,
    \begin{equation*}
      (x, a, y) \in R_{\real} \iff |x - y| \sqsupseteq a .
    \end{equation*}
  \item
    For every term $\mathsf{\Gamma} \vdash \mathsf{t} : \mathsf{A}$, and for every $(\gamma, \xi, \epsilon) \in R_{\mathsf{\Gamma}}$,
    \begin{equation*}
      (\semp{\mathsf{t}}{\gamma},
      \fsemp{\mathsf{t}}{\gamma}{\xi},
      \semp{\mathsf{t}}{\epsilon})
      \in R_{\mathsf{A}}.
    \end{equation*}
  \end{itemize}
  We refer to the second condition as the \emph{fundamental property}.
\end{defi}
The fundamental property requires that the interpretation $(\sem{\mathsf{t}}, \fsem{\mathsf{t}}, \sem{\mathsf{t}})$ of any term $\mathsf{t}$ preserves differential prelogical relations. It follows from the fundamental property that differential prelogical relations $R$ are closed under function applications, projections, and tuplings:
\begin{itemize}
\item
  If $(f, d, g) \in R_{\mathsf{A} \Rightarrow \mathsf{B}}$ and $(x, a, y) \in R_{\mathsf{A}}$, then $(fx, d(x, a), gy) \in R_{\mathsf{B}}$.
\item
  If $((x, y), (a, b), (z, w)) \in R_{\mathsf{A} \times \mathsf{B}}$, then $(x, a, z) \in R_{\mathsf{A}}$ and $(y, b, w) \in R_{\mathsf{B}}$.
\item
  If $(x, a, z) \in R_{\mathsf{A}}$ and $(y, b, w) \in R_{\mathsf{B}}$, then $((x, y), (a, b), (z, w)) \in R_{\mathsf{A} \times \mathsf{B}}$.
\end{itemize}
% The first clause follows from the fundamental property for $\mathsf{x} : \mathsf{A} \Rightarrow \mathsf{B}, \mathsf{y} : \mathsf{A} \vdash \mathsf{x} \, \mathsf{y} : \mathsf{B}$. We can deduce the second property by using $\mathsf{x} : \mathsf{A} \times \mathsf{B} \vdash \fst(\mathsf{x}) : \mathsf{A}$ and $\mathsf{x} : \mathsf{A} \times \mathsf{B} \vdash \snd(\mathsf{x}) : \mathsf{B}$, and we can deduce the last clause by using $\mathsf{x} : \mathsf{A}, \mathsf{y} : \mathsf{B} \vdash \pair{\mathsf{x}}{\mathsf{y}} : \mathsf{A} \times \mathsf{B}$.

While a prelogical relation $R$ introduced in \cite{HD02} consists of a type-indexed family of relations $\{R_{\mathsf{A}} \subseteq U_{\mathsf{A}} \times V_{\mathsf{A}}\}_{\mathsf{A} \in \type}$ over type-indexed families of sets $\{U_{\mathsf{A}}\}_{\mathsf{A} \in \type}$ and $\{V_{\mathsf{A}}\}_{\mathsf{A} \in \type}$, a differential prelogical relation consists of a type-indexed family of quasi-quasi-metrics over the \emph{fixed family of sets $\{\dfn{\mathsf{A}}\}_{\mathsf{A} \in \type}$}. We restrict the underlying sets of differential prelogical relations in order to simplify the definition and the comparison of differential prelogical relations. It is, however, straightforward to generalize the definition of differential prelogical relations by following the original framework of prelogical relations.

Below, we give a few examples of differential prelogical relations. We defer proving the fundamental properties of these examples to Section~\ref{sec:fundamental-lemma}.

\begin{exa}\label{eg:QDLR}
  We define a differential prelogical relation $\varrho = \{\varrho_{\mathsf{A}} \subseteq \dfn{\mathsf{A}} \times \mathcal{Q}_{\mathsf{A}} \times \dfn{\mathsf{A}}\}_{\mathsf{A} \in \type}$ by restricting the family of quasi-quasi-metrics $\{\rho_{\mathsf{A}}\}_{\mathsf{A} \in \type}$ to $\LL$-definable elements:
  \begin{equation*}
    \varrho_{\mathsf{A}}
    = \rho_{\mathsf{A}} \cap
    (\dfn{\mathsf{A}} \times
    \qt{\mathsf{A}} \times \dfn{\mathsf{A}}).
  \end{equation*}
  The $\mathcal{Q}_{\mathsf{A}}$-valued predicate $\varrho_{\mathsf{A}}$ is a quasi-quasi-metric on $\dfn{\mathsf{A}}$ since $\rho_{\mathsf{A}}$ is a quasi-quasi-metric.
\end{exa}

\begin{exa}\label{eg:LDLR}
  We introduce a differential prelogical relation $\delta$ given by observing distances between functions at \emph{$\LL$-definable} elements: we define a differential prelogical relation $\delta = \{\delta_{\mathsf{A}} \subseteq \dfn{\mathsf{A}} \times \qt{\mathsf{A}} \times \dfn{\mathsf{A}}\}_{\mathsf{A} \in \type}$ by
  \begin{align*}
    (x, a, y)
    \in \delta_{\real}
    &\iff
      |x - y| \sqsupseteq a,
    \\
    ((x, y), (a, b), (z, w))
    \in
    \delta_{\mathsf{A} \times \mathsf{B}}
    &\iff
      (x, a, z)
      \in \delta_{\mathsf{A}}
      \text{ and }
      (y, b, w)
      \in \delta_{\mathsf{B}},
    \\
    (f, d, g) \in
    \delta_{\mathsf{A} \Rightarrow \mathsf{B}}
    &\iff
      \text{for all }
      (x, a, y)
      \in \delta_{\mathsf{A}},
    \\
    &\mathrel{\phantom{\iff}}
      (fx, d(x, a), gy)
      \in \delta_{\mathsf{B}}
      \text{ and }
      (fx, d(x, a), fy)
      \in \delta_{\mathsf{B}}.
  \end{align*}
  We note that in the definition of $\delta_{\mathsf{A} \Rightarrow \mathsf{B}}$, we require $f$ and $g$ to be $\LL$-definable. By induction on $\mathsf{A}$, we can show that $\delta_{\mathsf{A}}$ are quasi-quasi-metrics. A similar construction of a differential logical relation can be found in \cite{DGY19}.
\end{exa}

\begin{exa}\label{eg:aDLR}
  We give a construction of differential prelogical relations that subsumes Example~\ref{eg:QDLR} and Example~\ref{eg:LDLR}. The construction is parameterized by a $\LL$-closed predicate: we call a type-indexed family of subsets
  \begin{equation*}
    P_{\mathsf{A}} \subseteq \uset{\mathsf{A}}
  \end{equation*}
  a \emph{$\LL$-closed predicate} whenever for any term $\mathsf{\Gamma} \vdash \mathsf{t} : \mathsf{A}$ and any $\gamma \in P_{\mathsf{\Gamma}}$, we have $\sem{\mathsf{t}}_{\gamma} \in P_{\mathsf{A}}$. Given a $\LL$-closed predicate $\{P_{\mathsf{A}}\}_{\mathsf{A} \in \type}$, we define a $\mathcal{Q}_{\mathsf{A}}$-predicate $S_{\mathsf{A}} \subseteq P_{\mathsf{A}} \times \qt{\mathsf{A}} \times P_{\mathsf{A}}$ by induction on $\mathsf{A}$ as follows:
  \begin{align*}
    (x, a, y)
    \in S_{\real}
    &\iff
      |x - y| \sqsupseteq a,
    \\
    ((x, y), (a, b), (z, w))
    \in
    S_{\mathsf{A} \times \mathsf{B}}
    &\iff
      (x, a, z)
      \in S_{\mathsf{A}}
      \text{ and }
      (y, b, w)
      \in S_{\mathsf{B}},
    \\
    (f, d, g) \in
    S_{\mathsf{A} \Rightarrow \mathsf{B}}
    &\iff
      \text{for all }
      (x, a, y)
      \in S_{\mathsf{A}},
    \\
    &\mathrel{\phantom{\iff}}
      (fx,
      d(x, a),
      gy)
      \in S_{\mathsf{B}}
      \text{ and }
      (fx,
      d(x, a),
      fy)
      \in S_{\mathsf{B}}.
  \end{align*}
  We can inductively show that for all type $\mathsf{A}$, the $\mathcal{Q}_{\mathsf{A}}$-predicate $S_{\mathsf{A}}$ is a quasi-quasi-metric on $P_{\mathsf{A}}$. We call the type-indexed family $S = \{S_{\mathsf{A}}\}_{\mathsf{A} \in \type}$ the \emph{type-indexed family associated to $P$}. We then define a differential prelogical relation $\{R_{\mathsf{A}}\}_{\mathsf{A} \in \type}$ to be the restriction of $S$ to $\LL$-definable elements, i.e., we define $R_{\mathsf{A}}$ by
  \begin{equation*}
    R_{\mathsf{A}} = S_{\mathsf{A}} \cap
    (\dfn{\mathsf{A}} \times \qt{\mathsf{A}}
    \times \dfn{\mathsf{A}}).
  \end{equation*}
  We call $R$ the \emph{differential prelogical relation associated to $P$}. For example, $\varrho$ is the differential prelogical relation associated to $\{\uset{\mathsf{A}}\}_{\mathsf{A} \in \type}$, and $\delta$ is the differential prelogical relation associated to $\{\dfn{\mathsf{A}}\}_{\mathsf{A} \in \type}$. We give another example of a $\LL$-closed predicate obtained by extending $\LL$ with some functions. For example, we can extend $\LL$ with a term constructor
  \begin{equation*}
    \prftree{
      \mathsf{\Gamma} \vdash \mathsf{t} : \real \Rightarrow \real
    }{
      \mathsf{\Gamma} \vdash F(\mathsf{t}) : \real
    }    
  \end{equation*}
  for some fixed function $F \colon (\mathbb{R} \Rightarrow \mathbb{R}) \to \mathbb{R}$ such as $Ff = \inf_{x \in \mathbb{R}} |fx|$. Let us denote this extended calculus $\LL^{F}$. Then, we can naturally extend the interpretation function $\sem{-}$ to $\LL^{F}$. We now define $\{\dfn{\mathsf{A}}_{F} \subseteq \uset{\mathsf{A}}\}_{\mathsf{A} \in \type}$ to be
  \begin{equation*}
    x \in \dfn{\mathsf{A}}_{F} \iff
    x = \sem{\mathsf{t}} \text{ for some closed term $\vdash \mathsf{t} : \mathsf{A}$ in $\LL^{F}$}.
  \end{equation*}
  It is straightforward to see that $\{\dfn{\mathsf{A}}_{F} \subseteq \uset{\mathsf{A}}\}_{\mathsf{A} \in \type}$ is a $\LL$-closed predicate.
\end{exa}

\subsection{The Fundamental Property}
\label{sec:fundamental-lemma}

We prove the fundamental property for all examples of differential prelogical relations given in Section~\ref{sec:diff-logic-relat}. Since these examples are instances of Example~\ref{eg:aDLR}, we only prove the fundamental property for the differential prelogical relation associated to a $\LL$-closed predicate. We first prove a stronger statement, from which we deduce the fundamental property.

\begin{thm}\label{thm:fl-strong}
  Let $S$ be the type-indexed family associated to a $\LL$-closed predicate $P$. For every term $\mathsf{\Gamma} \vdash \mathsf{t} : \mathsf{A}$, and for every $(\gamma, \xi, \epsilon) \in S_{\mathsf{\Gamma}}$, we have
  \begin{equation*}
    (\semp{\mathsf{t}}{\gamma},
    \fsemp{\mathsf{t}}{\gamma}{\xi},
    \semp{\mathsf{t}}{\epsilon})
    \in S_{{\mathsf{A}}}.
  \end{equation*}
\end{thm}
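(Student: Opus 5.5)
We prove the statement by induction on the derivation of $\mathsf{\Gamma} \vdash \mathsf{t} : \mathsf{A}$, for all $\mathsf{\Gamma}$ and all $(\gamma,\xi,\epsilon) \in S_{\mathsf{\Gamma}}$ at once. Since the sets $S_{\mathsf{A}}$ are defined by the same clauses as the $\rho_{\mathsf{A}}$, we mirror the verification that $\psem{\mathsf{t}}$ is a morphism of $\qqm$. The only new ingredient is membership in $P$. Because $S_{\mathsf{A}} \subseteq P_{\mathsf{A}} \times \qt{\mathsf{A}} \times P_{\mathsf{A}}$, we have $\gamma,\epsilon \in P_{\mathsf{\Gamma}}$. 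Then $\LL$-closedness of $P$ gives $\semp{\mathsf{t}}{\gamma}, \semp{\mathsf{t}}{\epsilon} \in P_{\mathsf{A}}$, so both endpoints always lie in the carrier.

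The easy cases go as follows.
\begin{itemize}
\item \emph{Variables.} We read off the $i$th component of $(\gamma,\xi,\epsilon)$.
\item \emph{Constants.} We use $|r-r| = 0 \sqsupseteq 0$.
\item \emph{Primitives $\alpha(\mathsf{t}_1,\ldots,\mathsf{t}_n)$.} By induction $|\semp{\mathsf{t}_i}{\gamma} - \semp{\mathsf{t}_i}{\epsilon}| \sqsupseteq \fsemp{\mathsf{t}_i}{\gamma}{\xi}$. The infimum defining $\prt{\alpha}$ then ranges over a set containing $(\semp{\mathsf{t}_i}{\epsilon})_i$, so $\prt{\alpha}(\ldots) \sqsubseteq |\alpha(\ldots\gamma\ldots) - \alpha(\ldots\epsilon\ldots)|$.
\item \emph{Pairs and projections.} These are immediate from the componentwise definition of $S_{\mathsf{A}\times\mathsf{B}}$.
\item \emph{Application $\mathsf{t}\,\mathsf{s}$.} The induction hypothesis gives $(\semp{\mathsf{t}}{\gamma}, \fsemp{\mathsf{t}}{\gamma}{\xi}, \semp{\mathsf{t}}{\epsilon}) \in S_{\mathsf{A}\Rightarrow\mathsf{B}}$ and $(\semp{\mathsf{s}}{\gamma}, \fsemp{\mathsf{s}}{\gamma}{\xi}, \semp{\mathsf{s}}{\epsilon}) \in S_{\mathsf{A}}$. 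The first clause of the definition of $S_{\mathsf{A}\Rightarrow\mathsf{B}}$ then yields the result.
\end{itemize}

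The main case is $\lam{\mathsf{x}}{\mathsf{B}}{\mathsf{t}}$. We must show that $f = \semp{\lam{\mathsf{x}}{\mathsf{B}}{\mathsf{t}}}{\gamma}$, $g = \semp{\lam{\mathsf{x}}{\mathsf{B}}{\mathsf{t}}}{\epsilon}$ and $d = \fsemp{\lam{\mathsf{x}}{\mathsf{B}}{\mathsf{t}}}{\gamma}{\xi}$ satisfy $(f,d,g) \in S_{\mathsf{B}\Rightarrow\mathsf{C}}$.

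First, $f,g \in P_{\mathsf{B}\Rightarrow\mathsf{C}}$ holds by closedness, as noted above.

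Second, $d$ must lie in $\qt{\mathsf{B}\Rightarrow\mathsf{C}}$, i.e.\ $d(x,-)$ must be monotone. For this we prove, by a side induction on terms, that $\fsemp{\mathsf{t}}{\gamma}{\xi}$ is monotone in $\xi$. This holds because $\prt{\alpha}$ is monotone in its error arguments (a larger $a_i$ in $\sqsubseteq$ means a smaller radius, hence an infimum over fewer points). For application, it uses that elements of $\qt{\mathsf{A}\Rightarrow\mathsf{B}}$ are monotone in their second argument and are compared pointwise.

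Third, take $(x,a,y) \in S_{\mathsf{B}}$. Then $(\gamma \add x, \xi \add a, \epsilon \add y) \in S_{\mathsf{\Gamma},\mathsf{x}:\mathsf{B}}$, and the induction hypothesis gives $(fx, d(x,a), gy) \in S_{\mathsf{C}}$.

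The delicate point is the second clause, $(fx, d(x,a), fy) \in S_{\mathsf{C}}$. This needs the triple $(\gamma,\xi,\gamma)$ in $S_{\mathsf{\Gamma}}$, which is exactly quasi-reflexivity of each $S_{\mathsf{A}_i}$ applied to $(\gamma|_i,\xi|_i,\epsilon|_i)$. Then $(\gamma \add x, \xi \add a, \gamma \add y) \in S$, and the induction hypothesis yields $(\semp{\mathsf{t}}{\gamma\add x}, \fsemp{\mathsf{t}}{\gamma\add x}{\xi\add a}, \semp{\mathsf{t}}{\gamma\add y}) \in S_{\mathsf{C}}$, as needed.

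We therefore rely on the paper's claim that each $S_{\mathsf{A}}$ is a quasi-quasi-metric, proved by an induction on $\mathsf{A}$ along the lines of the quasi-preorder argument in Section~\ref{sec:weak-refl-cond}. Strictly, only quasi-reflexivity is needed here. We expect this lambda case, and the appeal to quasi-reflexivity of the environment relation, to be the main obstacle; the rest is routine.
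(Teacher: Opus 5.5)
Your proposal is correct and follows essentially the same route as the paper: induction on the typing derivation, with $\LL$-closedness of $P$ putting $\semp{\mathsf{t}}{\gamma}$ and $\semp{\mathsf{t}}{\epsilon}$ in the carrier, the $\alpha$ case read off from the definition of $\prt{\alpha}$, and the $\lambda$ case obtained from the induction hypothesis at the extended environment. You are in fact more complete than the paper, whose $\lambda$ case spells out only the clause $(fx, d(x,a), gy)$; it leaves implicit both the clause $(fx, d(x,a), fy)$, which you obtain by applying the induction hypothesis to $(\gamma,\xi,\gamma)$ via quasi-reflexivity, and the monotonicity of $d(x,-)$.
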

\begin{proof}
  We prove the statement by induction on the derivation of $\mathsf{\Gamma} \vdash \mathsf{t} : \mathsf{A}$. We only check the cases $\mathsf{t} = \alpha(\mathsf{t}_{1},\ldots,\mathsf{t}_{n})$ and $\mathsf{t} = \lam{\mathsf{x}}{\mathsf{A}}{\mathsf{s}}$. It is straightforward to check the remaining cases. We first check the case $\mathsf{\Gamma} \vdash \alpha(\mathsf{t}_{1},\ldots,\mathsf{t}_{n}) : \real$. For every $(\gamma, \xi, \epsilon) \in S_{\mathsf{\Gamma}}$, it follows from the induction hypothesis that for all $i \in \{1,\ldots,n\}$, we have
  \begin{equation*}
    |\semp{\mathsf{t}_{i}}{\gamma} -
    \semp{\mathsf{t}_{i}}{\epsilon}|
    \sqsupseteq
    \fsemp{\mathsf{t}_{i}}{\gamma}{\xi}.
  \end{equation*}
  Therefore, by the definition of $\prt{\alpha}$, we obtain
  \begin{equation*}
    |\semp{\alpha(\mathsf{t}_{1}, \ldots,
      \mathsf{t}_{n})
    }{\gamma}
    - \semp{\alpha(\mathsf{t}_{1},
      \ldots,\mathsf{t}_{n})
    }{\epsilon}|
    \sqsupseteq
    \fsemp{\alpha(\mathsf{t}_{1}, \ldots,
      \mathsf{t}_{n})}{\gamma}{\xi}.
  \end{equation*}
  Hence,
  \begin{equation*}
    (\semp{\alpha(\mathsf{t}_{1}, \ldots, \mathsf{t}_{n})}{\gamma}, \fsemp{\alpha(\mathsf{t}_{1}, \ldots, \mathsf{t}_{n})}{\gamma}{\xi}, \semp{\alpha(\mathsf{t}_{1}, \ldots, \mathsf{t}_{n})}{\epsilon}) \in S_{\real}.
  \end{equation*}
  We next check the case $\mathsf{\Gamma} \vdash \lam{\mathsf{x}}{\mathsf{A}}{\mathsf{s}} : \mathsf{A} \Rightarrow \mathsf{B}$. Let $(\gamma, \xi, \epsilon)$ be an element of $S_{\mathsf{\Gamma}}$. It follows from the induction hypothesis that for any $(x, a, y) \in S_{\mathsf{A}}$, we have
  \begin{equation*}
    (\semp{
      \lam{\mathsf{x}}{\mathsf{A}}{\mathsf{t}}}{\gamma}
    (x),
    \fsemp{
      \lam{\mathsf{x}}{\mathsf{A}}{\mathsf{t}}}{\gamma}{\xi}
    (x, a),
    \semp{
      \lam{\mathsf{x}}{\mathsf{A}}{\mathsf{t}}}{\epsilon}
    (y))
    =
    (\semp{\mathsf{t}}{\gamma \add x},
    \fsemp{\mathsf{t}}{\gamma \add x}{\xi \add a},
    \semp{\mathsf{t}}{\epsilon \add y}) \in S_{\mathsf{B}}.
  \end{equation*}
  Since $P$ is $\LL$-closed, $\semp{\lam{\mathsf{x}}{\mathsf{A}}{\mathsf{t}}}{\gamma}$ and $\semp{\lam{\mathsf{x}}{\mathsf{A}}{\mathsf{t}}}{\epsilon}$ are elements of $P_{\mathsf{A} \Rightarrow \mathsf{B}}$. Hence,
  \begin{equation*}
    (\semp{\lam{\mathsf{x}}{\mathsf{A}}{\mathsf{t}}}{\gamma}, \fsemp{\lam{\mathsf{x}}{\mathsf{A}}{\mathsf{t}}}{\gamma}{\xi}, \semp{\lam{\mathsf{x}}{\mathsf{A}}{\mathsf{t}}}{\epsilon})
  \end{equation*}
  is an element of $S_{\mathsf{A} \Rightarrow \mathsf{B}}$.
\end{proof}

We then derive the fundamental property of the differential prelogical relation associated to a $\LL$-closed predicate.
\begin{cor}[Fundamental Property]\label{cor:fl}
  Let $R$ be the differential prelogical relation associated to a $\LL$-closed predicate $P$. For every term $\mathsf{\Gamma} \vdash \mathsf{t} : \mathsf{A}$, and for every $(\gamma, \xi, \epsilon) \in R_{\mathsf{\Gamma}}$, we have
  \begin{equation*}
    (\semp{\mathsf{t}}{\gamma},
    \fsemp{\mathsf{t}}{\gamma}{\xi},
    \semp{\mathsf{t}}{\epsilon})
    \in R_{{\mathsf{A}}}.
  \end{equation*}
\end{cor}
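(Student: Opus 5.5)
The plan is to derive Corollary~\ref{cor:fl} directly from Theorem~\ref{thm:fl-strong}, using two facts: $R_{\mathsf{A}}$ is, by definition, the restriction of $S_{\mathsf{A}}$ to $\LL$-definable elements, and $\LL$-definable elements are closed under the set-theoretic interpretation of terms.

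\emph{Step 1.} Fix a term $\mathsf{\Gamma} \vdash \mathsf{t} : \mathsf{A}$ and a triple $(\gamma, \xi, \epsilon) \in R_{\mathsf{\Gamma}}$. Since $R_{\mathsf{\Gamma}}$ is defined componentwise from the $R_{\mathsf{A}_{i}}$, each triple $(\gamma|_{i}, \xi|_{i}, \epsilon|_{i})$ lies in $R_{\mathsf{A}_{i}} \subseteq S_{\mathsf{A}_{i}}$. Hence $(\gamma, \xi, \epsilon) \in S_{\mathsf{\Gamma}}$. Theorem~\ref{thm:fl-strong} then gives
\begin{equation*}
  (\semp{\mathsf{t}}{\gamma}, \fsemp{\mathsf{t}}{\gamma}{\xi}, \semp{\mathsf{t}}{\epsilon}) \in S_{\mathsf{A}}.
\end{equation*}

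\emph{Step 2.} It remains to show that $\semp{\mathsf{t}}{\gamma}$ and $\semp{\mathsf{t}}{\epsilon}$ belong to $\dfn{\mathsf{A}}$. Membership in $R_{\mathsf{A}}$ then follows from $R_{\mathsf{A}} = S_{\mathsf{A}} \cap (\dfn{\mathsf{A}} \times \qt{\mathsf{A}} \times \dfn{\mathsf{A}})$. Each $\gamma|_{i}$ is $\LL$-definable, so we may choose closed terms $\mathsf{p}_{i}$ with $\sem{\mathsf{p}_{i}} = \gamma|_{i}$. Consider the closed term obtained by substituting the $\mathsf{p}_{i}$ for the variables of $\mathsf{t}$. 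Its interpretation is $\semp{\mathsf{t}}{\gamma}$, by the standard substitution lemma for the set-theoretic semantics.

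If one prefers to avoid the substitution lemma, use the term $(\lam{\mathsf{x}_{1}}{\mathsf{A}_{1}}{\cdots \lam{\mathsf{x}_{n}}{\mathsf{A}_{n}}{\mathsf{t}}})\,\mathsf{p}_{1} \cdots \mathsf{p}_{n}$ instead. Unfolding the clauses of Figure~\ref{fig:semt} for abstraction and application shows directly that this term denotes $\semp{\mathsf{t}}{\gamma}$. Equivalently, $\{\dfn{\mathsf{A}}\}_{\mathsf{A}}$ is itself a $\LL$-closed predicate, as already asserted in Example~\ref{eg:aDLR}. The same argument applies to $\epsilon$.

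The only point that needs care is this definability step: one must make sure the closed term built from $\mathsf{t}$ really denotes $\semp{\mathsf{t}}{\gamma}$. The $\beta$-redex form makes this a direct computation from Figure~\ref{fig:semt}. Everything else is immediate from Theorem~\ref{thm:fl-strong} and the definition of $R$.
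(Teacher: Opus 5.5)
Your proposal is correct and follows the paper's proof. The paper likewise notes that $\semp{\mathsf{t}}{\gamma}$ and $\semp{\mathsf{t}}{\epsilon}$ are definable because $\gamma$ and $\epsilon$ consist of definable elements, and then concludes from Theorem~\ref{thm:fl-strong} and $R_{\mathsf{A}} = S_{\mathsf{A}} \cap (\dfn{\mathsf{A}} \times \qt{\mathsf{A}} \times \dfn{\mathsf{A}})$; you additionally spell out two steps the paper leaves implicit, namely $R_{\mathsf{\Gamma}} \subseteq S_{\mathsf{\Gamma}}$ and the $\beta$-redex (or substitution) construction witnessing definability.
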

\begin{proof}
  Since $\gamma$ and $\epsilon$ consist of $\LL$-definable elements, $\semp{\mathsf{t}}{\gamma}$ and $\semp{\mathsf{t}}{\epsilon}$ are $\LL$-definable. Hence, it follows from Theorem~\ref{thm:fl-strong} that $(\semp{\mathsf{t}}{\gamma}, \fsemp{\mathsf{t}}{\gamma}{\xi}, \semp{\mathsf{t}}{\epsilon})$ is an element of $R_{{\mathsf{A}}}$.
\end{proof}

\subsection{Poset of Differential Prelogical Relations}
\label{sec:partial-order-non}

Let $\mathfrak{R}$ be the set of differential prelogical relations. For differential prelogical relations $R$ and $S$, we write $R \subseteq S$ whenever for every type $\mathsf{A}$, we have $R_{\mathsf{A}} \subseteq S_{\mathsf{A}}$. This defines a partial order on $\mathfrak{R}$. By the definition of the partial order, $R \subseteq S$ holds if and only if $R$ is finer than $S$, i.e., for any $x, y \in \dfn{\mathsf{A}}$, we have $\hat{R}_{\mathsf{A}}(x, y) \sqsubseteq \hat{S}_{\mathsf{A}}(x, y)$. Our question is whether the poset $\mathfrak{R}$ has an order theoretic structure analogous to that of the lattices of program equivalences; specifically, whether $\mathfrak{R}$ is a complete lattice, and whether $\mathfrak{R}$ has a greatest element and a least element.

It is straightforward to check that the poset $\mathfrak{R}$ has arbitrary non-empty meets.
\begin{prop}\label{prop:meet}
  For any set of differential prelogical relations $\mathfrak{S} \subseteq \mathfrak{R}$, its intersection
  \begin{equation*}
    S_{\mathsf{A}} = \bigcap_{R \in \mathfrak{S}} R_{\mathsf{A}}
  \end{equation*}
  is a differential prelogical relation and is the greatest lower bound of $\mathfrak{S}$ in $\mathfrak{R}$.
\end{prop}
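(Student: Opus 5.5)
The plan is to verify directly that the componentwise intersection $S$ satisfies each clause in the definition of a differential prelogical relation, and then observe that the greatest-lower-bound property is immediate. Throughout, $\mathfrak{S}$ is assumed non-empty, as the surrounding text indicates. The case $\mathfrak{S} = \emptyset$ would force $S_{\real}$ to be everything, which breaks the first clause.

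First I would check that each $S_{\mathsf{A}}$ is a quasi-quasi-metric on $\dfn{\mathsf{A}}$.

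\emph{Quasi-reflexivity.} If $(x,a,y) \in S_{\mathsf{A}}$, then $(x,a,y) \in R_{\mathsf{A}}$ for every $R \in \mathfrak{S}$. Quasi-reflexivity of each $R_{\mathsf{A}}$ gives $(x,a,x) \in R_{\mathsf{A}}$ for every $R$, hence $(x,a,x) \in S_{\mathsf{A}}$.

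\emph{Transitivity.} This is the same argument. From $(x,a,y),(y,b,z) \in S_{\mathsf{A}}$ we get $(x,a \otimes b,z) \in R_{\mathsf{A}}$ for each $R$, so $(x,a\otimes b,z) \in S_{\mathsf{A}}$.

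\emph{Ground clause.} For $\real$, each $R_{\real}$ equals $\{(x,a,y) \mid |x-y| \sqsupseteq a\}$ on definable reals. Hence a non-empty intersection of copies of this set is the same set. This is exactly where non-emptiness is used.

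Next comes the fundamental property, the only step requiring any care. Given $\mathsf{\Gamma} \vdash \mathsf{t} : \mathsf{A}$ and $(\gamma,\xi,\epsilon) \in S_{\mathsf{\Gamma}}$, I first note that $S_{\mathsf{\Gamma}} = \bigcap_{R} R_{\mathsf{\Gamma}}$. This holds because $r_{\mathsf{\Gamma}}$ is defined componentwise: membership in each $S_{\mathsf{A}_i}$ componentwise means membership in each $R_{\mathsf{A}_i}$ for all $R$, which is exactly membership in each $R_{\mathsf{\Gamma}}$. So $(\gamma,\xi,\epsilon) \in R_{\mathsf{\Gamma}}$ for each $R \in \mathfrak{S}$. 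The fundamental property of each $R$ then yields $(\semp{\mathsf{t}}{\gamma}, \fsemp{\mathsf{t}}{\gamma}{\xi}, \semp{\mathsf{t}}{\epsilon}) \in R_{\mathsf{A}}$, and intersecting gives membership in $S_{\mathsf{A}}$. The key point is that the witness $\fsem{\mathsf{t}}$ is fixed and independent of $R$, so the same triple lies in every $R_{\mathsf{A}}$.

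Finally, $S$ is the greatest lower bound. We have $S \subseteq R$ for each $R \in \mathfrak{S}$ by construction. If $T \in \mathfrak{R}$ satisfies $T \subseteq R$ for all $R \in \mathfrak{S}$, then $T_{\mathsf{A}} \subseteq \bigcap_R R_{\mathsf{A}} = S_{\mathsf{A}}$ for every $\mathsf{A}$, so $T \subseteq S$. I expect no real obstacle: the argument only uses that all defining conditions are Horn-like closure conditions with fixed witnesses, which are preserved by intersections.
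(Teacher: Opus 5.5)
Your proof is correct and is the routine verification the paper has in mind; the paper writes out no proof and only remarks that $\mathfrak{R}$ has arbitrary non-empty meets. Each defining condition is a closure condition whose witness (the derivative of $\mathsf{t}$) does not depend on $R$, so it survives intersection, and you handle $S_{\mathsf{\Gamma}} = \bigcap_{R} R_{\mathsf{\Gamma}}$ componentwise as needed. You are also right to require $\mathfrak{S} \neq \emptyset$: for the empty family the intersection is the full relation, which violates the ground clause for $\real$.
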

In particular, since $\mathfrak{R}$ is not empty, the poset $\mathfrak{R}$ has the least differential prelogical relation $\nu$ given by
\begin{equation*}
  \nu_{\mathsf{A}} =
  \bigcap_{R \in \mathfrak{R}} R_{\mathsf{A}}.
\end{equation*}
As anticipated, we can present the least differential prelogical relation $\nu$ by a quantitative equational theory. This is what we prove in Section~\ref{sec:concr-descr-least}. Conversely, contrary to our expectations, we observe in Section~\ref{sec:absence-great-elem} that $\mathfrak{R}$ does not necessarily have a greatest differential prelogical relation.

\subsection{The Least Differential Prelogical Relation as a Quantitative Equational Theory}
\label{sec:concr-descr-least}

We give a concrete presentation of the least differential prelogical relation $\nu \in \mathfrak{R}$. To this end, in Figure~\ref{fig:eqt}, we give a quantitative equational theory $\Theta_{\real}$ adopting the interpretation of $\LL$ in $\qqm$ internally to the language of $\LL$. In the derivation rules, each judgment $\mathsf{\Gamma} \vdash (\mathsf{t}, a, \mathsf{s}) : \mathsf{A}$ consists of a function $a \in (\sem{\mathsf{\Gamma}}, \mathcal{Q}_{\mathsf{\Gamma}}) \rightarrowtriangle \mathcal{Q}_{\mathsf{A}}$ and a pair of terms $\mathsf{\Gamma} \vdash \mathsf{t} : \mathsf{A}$ and $\mathsf{\Gamma} \vdash \mathsf{s} : \mathsf{A}$. While our idea is inspired by the quantitative equational theories given in \cite{Plotk} and \cite{Dahlqvist2023}, it differs in two respects: first, distances $a$ need not be real numbers, but are presented as functions between quantales; second, we replace non-expansiveness with local similarity.

\begin{figure}
  \fbox{
    \begin{math}
      \begin{array}{c}
        % real
        \prftree{
        a(\gamma, \xi) = |x - y|
        }{
        \mathsf{\Gamma} \vdash
        (\underline{x},
        a,
        \underline{y}) : \real
        }
        \qquad
        % var
        \prftree{
        \mathsf{\Gamma} =
        (\mathsf{x}_{1} : \mathsf{A}_{1},
        \ldots,
        \mathsf{x}_{n} : \mathsf{A}_{n})
        }{
        a(\gamma,\xi) = \xi|_{i}
        }{
        \mathsf{\Gamma} \vdash
        (\mathsf{x}_{i},
        a,
        \mathsf{x}_{i})
        : \mathsf{A}
        }
        \\[7pt]
        % function
        \prftree{
        \begin{array}{c}
          \mathsf{\Gamma} \vdash
          (\mathsf{t}_{1},
          a_{1},
          \mathsf{s}_{1})
          : \real
          \ \
          \ldots
          \ \
          \mathsf{\Gamma} \vdash
          (\mathsf{t}_{n},
          a_{n},
          \mathsf{s}_{n}) : \real
          \\[3pt]
          b(\gamma,\xi) =
          \prt{\alpha}(
          \sem{\mathsf{t}_{1}}_{\gamma},
          \ldots,
          \sem{\mathsf{t}_{n}}_{\gamma},
          a_{1}(\gamma,\xi),\ldots,a_{n}(\gamma,\xi)
          )
        \end{array}
        }{
        \mathsf{\Gamma} \vdash
        (
        \alpha
        (\mathsf{t}_{1}, \ldots, \mathsf{t}_{n}),
        b,
        \alpha
        (\mathsf{s}_{1},
        \ldots,
        \mathsf{s}_{n})
        ) : \real
        }
        \\[7pt]
        \prftree{
        \mathsf{\Gamma} , \mathsf{x} : \mathsf{A}
        \vdash
        (\mathsf{t}, a, \mathsf{s})
        : \mathsf{B}
        }{
        b(\gamma, \xi)(x, a)
        = a(\gamma \add x, \xi \add a)
        }{
        \mathsf{\Gamma} \vdash
        (
        \lam{\mathsf{x}}{
        \mathsf{A}}{
        \mathsf{t}},
        b,
        \lam{\mathsf{x}}{
        \mathsf{A}}{
        \mathsf{s}}
        ) :
        \mathsf{A} \Rightarrow \mathsf{B}
        }
        \\[7pt]
        \prftree{
        \mathsf{\Gamma} \vdash
        (\mathsf{t}, a, \mathsf{s})
        : \mathsf{A} \Rightarrow \mathsf{B}
        }{
        \mathsf{\Gamma} \vdash
        (\mathsf{p}, b, \mathsf{q})
        : \mathsf{A}
        }{
        c(\gamma,\xi)
        =
        a(\gamma,\xi)
        (\sem{\mathsf{p}}_{\gamma}, b(\gamma,\xi))
        }{
        \mathsf{\Gamma} \vdash
        (\mathsf{t} \, \mathsf{p},
        c,
        \mathsf{s} \, \mathsf{q})
        : \mathsf{B}
        }
        \\[7pt]
        \prftree{
        \mathsf{\Gamma} \vdash
        (\mathsf{t}, a, \mathsf{s})
        : \mathsf{A}
        }{
        \mathsf{\Gamma} \vdash
        (\mathsf{p}, b, \mathsf{q})
        : \mathsf{B}
        }{
        \mathsf{\Gamma} \vdash
        (\pair{\mathsf{t}}{\mathsf{p}},
        \langle a, b \rangle,
        \pair{\mathsf{s}}{\mathsf{q}})
        : \mathsf{A} \times \mathsf{B}
        }
        \\[7pt]
        \prftree{
        \mathsf{\Gamma} \vdash
        (\mathsf{t},
        \langle a, b \rangle,
        \mathsf{s})
        : \mathsf{A} \times \mathsf{B}
        }{
        \mathsf{\Gamma} \vdash
        (\fst(\mathsf{t}),
        a,
        \fst(\mathsf{s}))
        : \mathsf{A}
        }
        \quad
        \prftree{
        \mathsf{\Gamma} \vdash
        (\mathsf{t},
        \langle a, b \rangle,
        \mathsf{s})
        : \mathsf{A} \times \mathsf{B}
        }{
        \mathsf{\Gamma} \vdash
        (\snd(\mathsf{t}),
        b,
        \snd(\mathsf{s}))
        : \mathsf{B}
        }
        \\[7pt]
        \prftree{
        \mathsf{\Gamma} \vdash
        (\mathsf{t},a,\mathsf{s})
        : \mathsf{A}
        }{
        \mathsf{\Gamma} \vdash
        \mathsf{p} : \mathsf{A}
        }{
        \mathsf{\Gamma} \vdash
        \mathsf{q} : \mathsf{A}
        }{
        \sem{\mathsf{t}} = \sem{\mathsf{p}}
        }{
        \sem{\mathsf{s}} = \sem{\mathsf{q}}
        }{
        \mathsf{\Gamma} \vdash
        (\mathsf{p}, a, \mathsf{q})
        : \mathsf{A}
        }
        \\[7pt]
        \prftree{
        \mathsf{\Gamma} \vdash
        (\mathsf{t}, a, \mathsf{s}) : \mathsf{A}
        }{
        b \sqsubseteq a
        }{
        \mathsf{\Gamma} \vdash
        (\mathsf{t}, b, \mathsf{s}) : \mathsf{A}
        }
        \quad
        \prftree{
        \mathsf{\Gamma} \vdash
        (\mathsf{t}, a_{i}, \mathsf{s})
        : \mathsf{A}
        }{
        \textstyle
        a = \bigsqcup_{i \in I} a_{i}
        }{
        \mathsf{\Gamma} \vdash
        (\mathsf{t}, a, \mathsf{s})
        : \mathsf{A}        
        }
        \\[7pt]
        \prftree{
        \mathsf{\Gamma} \vdash
        (\mathsf{t}, a, \mathsf{s}) : \mathsf{A}
        }{
        \mathsf{\Gamma} \vdash
        (\mathsf{s}, b, \mathsf{u}) : \mathsf{A}
        }{
        \mathsf{\Gamma} \vdash
        (\mathsf{t}, a \otimes b, \mathsf{u}) : \mathsf{A}
        }
        \qquad
        \prftree{
        \mathsf{\Gamma} \vdash
        (\mathsf{t}, a, \mathsf{s}) : \mathsf{A}
        }{
        \mathsf{\Gamma} \vdash
        (\mathsf{t}, a, \mathsf{t}) : \mathsf{A}
        }
      \end{array}
    \end{math}
}
\caption{Derivation Rules}
\label{fig:eqt}
\end{figure}

The quantitative equational theory $\Theta_{\real}$ induces a differential prelogical relation: for every type $\mathsf{A}$, we define a $\mathcal{Q}_{\mathsf{A}}$-predicate $\theta_{\mathsf{A}} \subseteq \dfn{\mathsf{A}} \times \mathcal{Q}_{\mathsf{A}} \times \dfn{\mathsf{A}}$ by
\begin{equation*}
  (x, a, y)
  \in \theta_{\mathsf{A}}
  \iff
  x = \sem{\mathsf{t}}
  \text{ and }
  y = \sem{\mathsf{s}}
  \text{ for some derivable judgment}
  \vdash   
  (\mathsf{t},
  a,
  \mathsf{s}) : \mathsf{A}
\end{equation*}
where we identify $a \colon \{\ast\} \times \{\ast\} \to \mathcal{Q}_{\mathsf{A}}$ with $a(\ast,\ast) \in \mathcal{Q}_{\mathsf{A}}$.
\begin{thm}
  For every type $\mathsf{A}$, we have $\theta_{\mathsf{A}} = \nu_{\mathsf{A}}$.
\end{thm}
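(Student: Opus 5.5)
We prove the two inclusions $\nu_{\mathsf{A}} \subseteq \theta_{\mathsf{A}}$ and $\theta_{\mathsf{A}} \subseteq \nu_{\mathsf{A}}$ separately.

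\textbf{First inclusion.} Since $\nu$ is the least element of $\mathfrak{R}$, it suffices to show that $\theta$ is a differential prelogical relation. Each step is read off from a rule of Figure~\ref{fig:eqt}.
\begin{itemize}
\item Quasi-reflexivity and transitivity of $\theta_{\mathsf{A}}$ come from the last two rules (specialised to the empty context). For transitivity we need the two judgments to share the middle term. Given $\vdash(\mathsf{t},a,\mathsf{s})$ and $\vdash(\mathsf{s}',b,\mathsf{u})$ with $\sem{\mathsf{s}}=\sem{\mathsf{s}'}$, we first apply the conversion rule to replace $\mathsf{s}'$ by $\mathsf{s}$, then compose.
\item The condition at $\real$ has two directions. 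If $|x-y|\sqsupseteq a$ with $x,y$ definable, we derive $\vdash(\underline{x},|x-y|,\underline{y})$ and weaken. Conversely, we show by induction on derivations that every derivable closed judgment at $\real$ satisfies $|\sem{\mathsf{t}}-\sem{\mathsf{s}}|\sqsupseteq a$; this is a special case of the soundness lemma below.
\item For the fundamental property, take $\mathsf{\Gamma}\vdash\mathsf{t}:\mathsf{A}$ and $(\gamma,\xi,\epsilon)\in\theta_{\mathsf{\Gamma}}$, witnessed by closed judgments $\vdash(\mathsf{p}_i,\xi|_i,\mathsf{q}_i)$.
\begin{itemize}
\item By induction on $\mathsf{t}$ we first derive the open judgment $\mathsf{\Gamma}\vdash(\mathsf{t},\fsem{\mathsf{t}},\mathsf{t}):\mathsf{A}$. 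The rules mirror Figure~\ref{fig:diff} exactly: the variable, constant (with $|r-r|=0$), $\alpha$, $\lambda$, application, pairing and projection rules.
\item We then need a substitution lemma. If $\mathsf{\Gamma}\vdash(\mathsf{t},a,\mathsf{s})$ and $\vdash(\mathsf{p}_i,b_i,\mathsf{q}_i)$ are derivable, then
\[
\vdash(\mathsf{t}[\vec{\mathsf{p}}/\vec{\mathsf{x}}],\,a(\sem{\vec{\mathsf{p}}},\vec b),\,\mathsf{s}[\vec{\mathsf{q}}/\vec{\mathsf{x}}])
\]
is derivable.
\item This lemma is cleanest obtained without a syntactic induction on derivations. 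Form $\lambda\vec{\mathsf{x}}.\mathsf{t}$ via the $\lambda$ rule, apply it to the $\mathsf{p}_i,\mathsf{q}_i$ with the application rule, and finally use the conversion rule, since $\beta$-reduction preserves $\sem{-}$. This handles the bureaucracy of open judgments neatly.
\end{itemize}

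\end{itemize}

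\textbf{Second inclusion.} We show $\theta\subseteq R$ for every $R\in\mathfrak{R}$. We generalise to open judgments by proving, by induction on derivations:
\begin{quote}
if $\mathsf{\Gamma}\vdash(\mathsf{t},a,\mathsf{s}):\mathsf{A}$ is derivable, then for every $(\gamma,\xi,\epsilon)\in R_{\mathsf{\Gamma}}$ we have $(\sem{\mathsf{t}}_\gamma,a(\gamma,\xi),\sem{\mathsf{s}}_\epsilon)\in R_{\mathsf{A}}$.
\end{quote}
The cases go as follows.
\begin{itemize}
\item \emph{Weakening and join rules.} These follow once $R_{\mathsf{A}}$ is closed under lowering and joins. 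This is a subtle point, since quasi-quasi-metrics in $\mathfrak{R}$ are not assumed closed. Downward closure can be obtained from the fundamental property itself, but join closure may fail. I would therefore check whether the theorem implicitly requires closedness, or whether $\nu$ happens to be closed because an intersection with closed members suffices. This is the step I expect to be the main obstacle.
\item \emph{Real rule.} This is the condition at $\real$.
\item \emph{Variable, $\alpha$, pairing, projection and application rules.} These follow from the fundamental property applied to suitable terms, for instance $\mathsf{f}\,\mathsf{y}$, $\alpha(\mathsf{y}_1,\dots)$ and $\fst(\mathsf{z})$, combined with the induction hypotheses.
\item \emph{Conversion rule.} This is immediate, since $R$ depends only on $\sem{-}$.
\item \emph{Transitivity and quasi-reflexivity rules.} For the quasi-reflexivity rule we use quasi-reflexivity of $R_{\mathsf{A}}$ directly. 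For the transitivity rule we need $(\sem{\mathsf{s}}_\gamma,b(\gamma,\xi),\sem{\mathsf{u}}_\epsilon)$, but the induction hypothesis for $\mathsf{s}$ only gives it with $\epsilon$. We obtain the $\gamma$-version from quasi-reflexivity of $R_{\mathsf{\Gamma}}$, namely $(\gamma,\xi,\gamma)\in R_{\mathsf{\Gamma}}$, and then compose.
\item \emph{$\lambda$ rule.} This is the delicate case, since $R_{\mathsf{A}\Rightarrow\mathsf{B}}$ is not defined by a logical clause. Here I plan to use the fundamental property directly: from the derivation we recover a term whose derivative dominates $b$ and then weaken. Concretely, I would strengthen the induction to track that every derivable $a$ lies below a derivative of a suitable term combined through the rules.
\end{itemize}
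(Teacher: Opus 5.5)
\textbf{Where the proposal fails.} Your first half, showing that $\theta$ is a differential prelogical relation, is fine. Obtaining closed substitution from the $\lambda$-rule, the application rule and the conversion rule is a legitimate shortcut for the paper's separate substitution lemma, which the paper proves by induction on derivations.

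The second half has a genuine gap: the invariant you induct on is not inductive. Your invariant is that $(\sem{\mathsf{t}}_\gamma, a(\gamma,\xi), \sem{\mathsf{s}}_\epsilon) \in R_{\mathsf{A}}$ for all $(\gamma,\xi,\epsilon)\in R_{\mathsf{\Gamma}}$. Two rules break it.
\begin{itemize}
\item The $\lambda$-rule, which you already flagged. An arbitrary $R_{\mathsf{A}\Rightarrow\mathsf{B}}$ has no logical clause, so pointwise information cannot yield membership in it.
\item The quasi-reflexivity rule, which you treat as immediate. Quasi-reflexivity of $R_{\mathsf{A}}$ turns $(\sem{\mathsf{t}}_\gamma, a(\gamma,\xi), \sem{\mathsf{s}}_\epsilon)$ into $(\sem{\mathsf{t}}_\gamma, a(\gamma,\xi), \sem{\mathsf{t}}_\gamma)$. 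The conclusion $\mathsf{\Gamma}\vdash(\mathsf{t},a,\mathsf{t})$ instead requires $(\sem{\mathsf{t}}_\gamma, a(\gamma,\xi), \sem{\mathsf{t}}_\epsilon)$, and nothing in your hypothesis relates $\sem{\mathsf{t}}_\gamma$ to $\sem{\mathsf{t}}_\epsilon$ at distance $a$.
\end{itemize}
Your proposed repair, that ``every derivable $a$ lies below a derivative of a suitable term'', is too vague to check. It also does not obviously make sense for judgments relating two different terms. Since the converse real condition for $\theta$ is obtained from this second half (applied to, e.g., $\delta$), it inherits the gap.

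\textbf{The missing idea.} State the invariant at the closed, curried level, as the paper does. For every derivable $\mathsf{\Gamma}\vdash(\mathsf{t},a,\mathsf{s}):\mathsf{A}$ with $\mathsf{\Gamma}=(\mathsf{x}_1:\mathsf{A}_1,\ldots,\mathsf{x}_n:\mathsf{A}_n)$, require
$(\sem{\lambda\vec{\mathsf{x}}.\mathsf{t}}, \tilde a, \sem{\lambda\vec{\mathsf{x}}.\mathsf{s}}) \in R_{\mathsf{A}_1\Rightarrow\cdots\Rightarrow\mathsf{A}_n\Rightarrow\mathsf{A}}$, where $\tilde a$ is $a$ curried. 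Each rule is then handled as follows.
\begin{itemize}
\item The $\lambda$-rule becomes trivial, because the invariants for its premise and conclusion coincide.
\item Transitivity, quasi-reflexivity, weakening, join and conversion become the corresponding properties of $R$ at the curried type, since the quantale structure there is pointwise.
\item The remaining rules are instances of the fundamental property for suitable combinators. For application use $\mathsf{F},\mathsf{P}\vdash\lambda\vec{\mathsf{x}}.(\mathsf{F}\,\vec{\mathsf{x}})(\mathsf{P}\,\vec{\mathsf{x}})$, whose derivative is exactly the curried $c$. For the real rule use $\mathsf{w}:\real\vdash\lambda\vec{\mathsf{x}}.\mathsf{w}$. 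For variables use the closed term $\lambda\vec{\mathsf{x}}.\mathsf{x}_i$. The $\alpha$, pairing and projection rules are handled similarly.
\end{itemize}
Your pointwise statement then follows as a corollary, by closure of $R$ under application.

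\textbf{Closedness.} Your claim that downward closure ``can be obtained from the fundamental property itself'' is not justified. The weakening and join rules need each $R_{\mathsf{A}}$ to be closed. The paper tacitly assumes this, for instance when it identifies $R\subseteq S$ with $\hat R_{\mathsf{A}}\sqsubseteq\hat S_{\mathsf{A}}$. Take closedness as part of the definition rather than trying to derive it. In your first half you then also need to show $\theta_{\mathsf{A}}$ is closed, using the weakening, join and conversion rules.
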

\begin{proof}
  We deduce the statement by showing that $\theta$ is the least differential prelogical relation. We first show that for any differential prelogical relation $R$ and any type $\mathsf{A}$, we have $\theta_{\mathsf{A}} \subseteq R_{\mathsf{A}}$. To see this, given a differential prelogical relation $R$, we show that for any typing environment $\mathsf{\Gamma} = (\mathsf{x}_{1}:\mathsf{A}_{1},\ldots,\mathsf{x}_{n}:\mathsf{A}_{n})$ and any derivable judgment $\mathsf{\Gamma} \vdash (\mathsf{t}, a, \mathsf{s}) : \mathsf{A}$, we have
  \begin{equation*}
    (\sem{\lam{\mathsf{x}_{1}}{\mathsf{A}_{1}}{\cdots \lam{\mathsf{x}_{n}}{\mathsf{A}_{n}}{\mathsf{t}}}},
    b,
    \sem{\lam{\mathsf{x}_{1}}{\mathsf{A}_{1}}{\cdots \lam{\mathsf{x}_{n}}{\mathsf{A}_{n}}{\mathsf{s}}}})
    \in R_{\mathsf{A}_{1} \Rightarrow \cdots \Rightarrow \mathsf{A}_{n} \Rightarrow \mathsf{A}}
  \end{equation*}
  where $b \in \mathcal{Q}_{\mathsf{A}_{1} \Rightarrow \cdots \Rightarrow \mathsf{A}_{n} \Rightarrow \mathsf{A}}$ is given by
  \begin{equation*}
    b (x_{1},a_{1}) \cdots (x_{n},a_{n})
    = a((x_{1},\ldots,x_{n}),(a_{1},\ldots,a_{n})).
  \end{equation*}
  We can deduce the above claim by induction on the derivation of $\mathsf{\Gamma} \vdash (\mathsf{t}, a, \mathsf{s}) : \mathsf{A}$, using the fundamental property of $R$. In particular, when $\mathsf{\Gamma}$ is the empty typing environment, we obtain $\theta_{\mathsf{A}} \subseteq R_{\mathsf{A}}$ for arbitrary type $\mathsf{A}$. We next check that $\theta$ is a differential prelogical relation. For each type $\mathsf{A}$, it follows from the last five rules in Figure~\ref{fig:eqt} that $\theta_{\mathsf{A}}$ is a quasi-quasi-metric on $\dfn{\mathsf{A}}$. The fundamental property follows from the following statements.
  \begin{itemize}
  \item
    For any $\mathsf{\Gamma} \vdash \mathsf{t} : \mathsf{A}$,
    \begin{equation*}
      \mathsf{\Gamma} \vdash
      (\mathsf{t}, \fsem{\mathsf{t}}, \mathsf{t})
      : \mathsf{A}.
    \end{equation*}
  \item
    For any typing environments $\mathsf{\Gamma}$ of length $n$ and $\Delta$ of length $m$, if $\mathsf{\Gamma}, \mathsf{x} : \mathsf{A}, \mathsf{\Delta} \vdash (\mathsf{t}, a, \mathsf{s}) : \mathsf{B}$ and $\vdash (\mathsf{p}, b, \mathsf{q}) : \mathsf{A}$, then
    \begin{equation*}
      \mathsf{\Gamma}, \mathsf{\Delta} \vdash
      (\mathsf{t}[\mathsf{p}/\mathsf{x}],
      c,
      \mathsf{s}[\mathsf{q}/\mathsf{x}])
      : \mathsf{B},
      \qquad
      \mathsf{\Gamma}, \mathsf{\Delta} \vdash
      (\mathsf{t}[\mathsf{p}/\mathsf{x}],
      c,
      \mathsf{t}[\mathsf{q}/\mathsf{x}])
      : \mathsf{B}
    \end{equation*}
    where for $\gamma = (x_{1},\ldots,x_{n},y_{1},\ldots,y_{m}) \in \uset{\mathsf{\Gamma}}$ and $\xi = (d_{1},\ldots,d_{n},e_{1},\ldots,e_{m}) \in \mathcal{Q}_{\mathsf{\Gamma}}$, we define $c(\gamma, \xi)$ to be
    \begin{equation*}
      a((x_{1},\ldots,x_{n},\sem{\mathsf{p}},y_{1},\ldots,y_{m}), (d_{1},\ldots,d_{n},b,e_{1},\ldots,e_{m})). 
    \end{equation*}
  \end{itemize}
  We can prove the first statement by induction on $\mathsf{t}$ and the second statement by induction on the derivation of $\mathsf{\Gamma}, \mathsf{x} : \mathsf{A}, \mathsf{\Delta} \vdash (\mathsf{t}, a, \mathsf{s}) : \mathsf{B}$.  It remains to check that $(x, a, y) \in \theta_{\real}$ if and only if $|x - y| \sqsupseteq a$. It follows from the first derivation rule that if $|x - y| \sqsupseteq a$, then $(x, a, y) \in \theta_{\real}$. The other implication follows from that $\theta_{\real}$ is a subset of $\delta_{\real}$.
\end{proof}

\subsection{Absence of a Greatest Differential Prelogical Relation}
\label{sec:absence-great-elem}

The poset $\mathfrak{R}$ does not necessarily have a greatest differential prelogical relation. To see this, we assume that $\mathbb{F}$ is empty and show that $\mathfrak{R}$ lacks a greatest differential prelogical relation. When $\mathbb{F}$ is empty, the set-theoretic interpretation of a term $\vdash \mathsf{t} : \real \Rightarrow \real$ is equal to either a constant function or the identity function. As we need this fact to deduce our claim, here, we sketch the proof. We define a category $\mathcal{B}$ by $\mathrm{obj}(\mathcal{B}) = \{\bullet\}$ and
\begin{equation*}
  \mathcal{B}(\bullet, \bullet) =
  \{f \colon \mathbb{R} \to \mathbb{R} \mid
  f \text{ is a constant function or the identity function}\}.
\end{equation*}
The composition of $\mathcal{B}$ is just the composition of functions. Since the presheaf category $[\mathcal{B}^{\mathrm{op}}, \mathbf{Set}]$ is cartesian closed, $[\mathcal{B}^{\mathrm{op}}, \mathbf{Set}]$ is a model of $\LL$ where we interpret the ground type $\real$ as $\mathcal{B}(-,\bullet)$. Then, by constructing logical relations between $[\mathcal{B}^{\mathrm{op}}, \mathbf{Set}]$ and $\mathbf{Set}$ and establishing the fundamental lemma for these logical relations, we can show that the set theoretic interpretation of $\vdash \mathsf{t} : \real \Rightarrow \real$ is equal to either a constant function or the identity function.

We now proceed to prove that $\mathfrak{R}$ lacks a greatest differential prelogical relation. We first show that the differential prelogical relation $\delta$ is maximal in $\mathfrak{R}$.
\begin{prop}\label{prop:maximal}
  For any differential prelogical relation $R$, if $R \supseteq \delta$, then $R = \delta$.
\end{prop}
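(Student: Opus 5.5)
The plan is to show the reverse inclusion $R_{\mathsf{A}} \subseteq \delta_{\mathsf{A}}$ by induction on the type $\mathsf{A}$. At each step we use both the hypothesis $\delta \subseteq R$ and the closure properties that every differential prelogical relation inherits from the fundamental property, namely closure under application, projections and tupling. Since the inclusion $\delta_{\mathsf{A}} \subseteq R_{\mathsf{A}}$ is assumed, this gives $R = \delta$. The fact about terms of type $\real \Rightarrow \real$ when $\mathbb{F}$ is empty should not be needed here; it is only needed afterwards, to exhibit a second maximal element.

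\emph{Base case $\real$.} Both $R_{\real}$ and $\delta_{\real}$ are determined by the first clause of the definition of a differential prelogical relation: $(x,a,y)$ belongs to either of them iff $|x-y| \sqsupseteq a$. Hence $R_{\real} = \delta_{\real}$.

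\emph{Product case $\mathsf{A} \times \mathsf{B}$.} Let $((x,y),(a,b),(z,w)) \in R_{\mathsf{A}\times\mathsf{B}}$. Closure of $R$ under projections (the fundamental property applied to $\fst(\mathsf{x})$ and $\snd(\mathsf{x})$) gives $(x,a,z) \in R_{\mathsf{A}}$ and $(y,b,w) \in R_{\mathsf{B}}$. The induction hypothesis places these triples in $\delta_{\mathsf{A}}$ and $\delta_{\mathsf{B}}$. By the defining clause of $\delta_{\mathsf{A}\times\mathsf{B}}$, the triple lies in $\delta_{\mathsf{A}\times\mathsf{B}}$.

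\emph{Function case $\mathsf{A} \Rightarrow \mathsf{B}$.} This is the only case with some content, and the key is to use the two inclusions in the right order. Let $(f,d,g) \in R_{\mathsf{A}\Rightarrow\mathsf{B}}$; then $f$ and $g$ are $\LL$-definable because $R$ lives on definable elements. Take any $(x,a,y) \in \delta_{\mathsf{A}}$.
\begin{itemize}
\item By the hypothesis $\delta \subseteq R$ we have $(x,a,y) \in R_{\mathsf{A}}$. Closure of $R$ under application (the fundamental property for $\mathsf{x}\,\mathsf{y}$) yields $(fx, d(x,a), gy) \in R_{\mathsf{B}}$, and the induction hypothesis $R_{\mathsf{B}} \subseteq \delta_{\mathsf{B}}$ puts it in $\delta_{\mathsf{B}}$.
\item For the second clause, quasi-reflexivity of $R_{\mathsf{A}\Rightarrow\mathsf{B}}$ gives $(f,d,f) \in R_{\mathsf{A}\Rightarrow\mathsf{B}}$. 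Applying it to $(x,a,y) \in R_{\mathsf{A}}$ gives $(fx,d(x,a),fy) \in R_{\mathsf{B}} \subseteq \delta_{\mathsf{B}}$.
\end{itemize}
Thus $(f,d,g)$ satisfies the defining clause of $\delta_{\mathsf{A}\Rightarrow\mathsf{B}}$.

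The only point needing care is this function case. The assumption $\delta \subseteq R$ is used contravariantly, on the argument side, while the induction hypothesis is used covariantly, on the result side. Quasi-reflexivity of $R$ supplies the extra $(fx, d(x,a), fy)$ clause in the definition of $\delta$. Everything else is routine.
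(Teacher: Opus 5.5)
Your proof is correct and follows the paper's argument. Both proceed by induction on the type: the base case is forced by the first clause of the definition, the product case uses closure under projections, and the function case uses $\delta_{\mathsf{A}} \subseteq R_{\mathsf{A}}$ on the argument side and the induction hypothesis on the result side. You also make explicit the quasi-reflexivity-plus-application step yielding $(fx, d(x,a), fy) \in R_{\mathsf{B}}$, which the paper's proof asserts without comment.
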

\begin{proof}
  We prove that $R_{\mathsf{A}}$ is equal to $\delta_{\mathsf{A}}$ by induction on $\mathsf{A}$. The base case $\mathsf{A} = \real$ is trivial. For the case $\mathsf{A} = (\mathsf{B} \times \mathsf{C})$, if $((x, y), (a, b), (z, w)) \in R_{\mathsf{B} \times \mathsf{C}}$, then $(x, a, z) \in R_{\mathsf{B}} = \delta_{\mathsf{B}}$ and $(y, b, w) \in R_{\mathsf{C}} = \delta_{\mathsf{C}}$. Hence, $((x, y), (a, b), (z, w))$ is an element of $\delta_{\mathsf{B} \times \mathsf{C}}$. For the case $\mathsf{A} = (\mathsf{B} \Rightarrow \mathsf{C})$, if $(f, d, g) \in R_{\mathsf{B} \Rightarrow \mathsf{C}}$, then for any $(x, a, y) \in R_{\mathsf{B}} = \delta_{\mathsf{B}}$, we have $(fx, d(x, a), gy) \in R_{\mathsf{C}} = \delta_{\mathsf{C}}$ and $(fx, d(x, a), fy) \in R_{\mathsf{C}} = \delta_{\mathsf{C}}$. Since both $f$ and $g$ are $\LL$-definable, we obtain $(f, d, g) \in \delta_{\mathsf{B} \Rightarrow \mathsf{C}}$.
\end{proof}

We next show that $\delta$ is not a greatest differential prelogical relation by proving $\varrho \not\subseteq \delta$. A witness of $\varrho \not\subseteq \delta$ resides in $((\real \Rightarrow \real) \Rightarrow \real) \Rightarrow \real$. As a preparation, we examine $\rho$ on the type $(\real \Rightarrow \real) \Rightarrow \real$. We define functions $O,Z \colon (\mathbb{R} \Rightarrow \mathbb{R}) \to \mathbb{R}$ by
\begin{equation*}
  O(f) = 0,
  \qquad
  Z(f) = f(0).
\end{equation*}
Both $O$ and $Z$ are $\LL$-definable. In the following lemma, for brevity, we simply write $D$ for $\hat{\delta}_{(\real \Rightarrow \real) \Rightarrow \real}(O, Z)$.

\begin{lem}\label{lem:ord2}
  For any $F \colon (\mathbb{R} \Rightarrow \mathbb{R}) \to \mathbb{R}$, if
  \begin{equation*}
    (O, D, F) \in \rho_{(\real \Rightarrow \real) \Rightarrow \real},
  \end{equation*}
  then $F = O$.
\end{lem}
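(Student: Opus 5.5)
The plan is to use \emph{non-definable} arguments. The relation $\delta$ places no constraint at such arguments, while $\rho$ quantifies over all of $\uset{\real \Rightarrow \real}$. So $D$ is forced to be the top element $0$ of the Lawvere quantale there, and this transfers to a zero bound on $F$.

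First I compute $D$ at a non-definable argument. By Proposition~\ref{prop:rhd}, $D$ is the join in $\mathcal{Q}_{(\real \Rightarrow \real) \Rightarrow \real}$ of all $d'$ with $(O, d', Z) \in \delta_{(\real \Rightarrow \real) \Rightarrow \real}$. By the definition of $\delta$ at function types, membership only constrains the values $d'(f, e)$ for which there is some $g$ with $(f, e, g) \in \delta_{\real \Rightarrow \real}$. This forces $f \in \dfn{\real \Rightarrow \real}$.

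Now fix a function $h \colon \mathbb{R} \to \mathbb{R}$ that is neither constant nor the identity, for example $h(x) = x + 1$. By the fact recalled above (with $\mathbb{F}$ empty, closed terms of type $\real \Rightarrow \real$ denote constants or the identity), $h \notin \dfn{\real \Rightarrow \real}$. Given any $d'$ with $(O, d', Z) \in \delta$, define $d''$ to agree with $d'$ except that $d''(h, e) = 0$ for all $e$. Then $d''$ is still monotone in its second argument, it still satisfies every constraint defining $\delta$, and $d' \sqsubseteq d''$. Hence the join satisfies $D(h, e) = 0$, the unit of $[0,+\infty]$, for every $e \in \mathcal{Q}_{\real \Rightarrow \real}$.

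Next I use the hypothesis $(O, D, F) \in \rho_{(\real \Rightarrow \real) \Rightarrow \real}$. Let $g \colon \mathbb{R} \to \mathbb{R}$ be arbitrary and let $\bot$ be the least element of $\mathcal{Q}_{\real \Rightarrow \real}$, the constant function $+\infty$. Then $(h, \bot, g) \in \rho_{\real \Rightarrow \real}$, because every clause in the definition of $\rho_{\real \Rightarrow \real}$ only requires $+\infty \sqsubseteq |\cdot|$, which always holds. The definition of $\rho$ at function type therefore gives
\begin{equation*}
  (O h, D(h, \bot), F g) \in \rho_{\real},
  \qquad\text{i.e.}\qquad
  |0 - F g| \sqsupseteq D(h,\bot) = 0 .
\end{equation*}
In real-number terms this says $|F g| \leq 0$, so $F g = 0 = O g$. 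Since $g$ was arbitrary, $F = O$.

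The main point to get right is the computation $D(h, -) = 0$: one must check that the definition of $\delta_{(\real \Rightarrow \real) \Rightarrow \real}$ quantifies only over triples in $\delta_{\real \Rightarrow \real}$, whose first component is definable. One must also check that the joins in $\mathcal{Q}_{(\real \Rightarrow \real) \Rightarrow \real}$ are computed pointwise, so that modifying $d'$ at $h$ alone is legitimate. Everything else is a direct unfolding of the definitions.
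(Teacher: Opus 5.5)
Your proof is correct and follows the paper's argument. The paper also shows that $D$ is the top element $0$ at every non-definable argument: it writes $D(f,d)$ as a meet over $\{g \mid (f,d,g)\in\delta_{\real\Rightarrow\real}\}$, which is empty for non-definable $f$. It then applies the hypothesis at the non-definable $g(x)=2x$, with the bound $\hat{\rho}_{\real\Rightarrow\real}(g,f)$ where you use $\bot$, to get $|F(f)|\le 0$. The one step you leave implicit is that the family whose join is $D$ is nonempty; without it the join would be $\bot$ rather than $0$ at $h$. The constant-$+\infty$ function witnesses $(O,\bot,Z)\in\delta_{(\real\Rightarrow\real)\Rightarrow\real}$, so nothing is actually missing.
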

\begin{proof}
  Similarly to the proof of Lemma~\ref{lem:arrow}, we can show the following equality:
  \begin{equation*}
    D(f, d) =
    \bigsqcap_{
      (f, d, g) \in
      \delta_{\real \Rightarrow \real}
    }
    |O(f) - Z(g)| =
    \bigsqcap_{
      (f, d, g) \in
      \delta_{\real \Rightarrow \real}
    }
    |g(0)|.
  \end{equation*}
  Assuming $(O, D, F) \in \rho_{(\real \Rightarrow \real) \Rightarrow \real}$, we deduce that for any function $f \colon \mathbb{R} \to \mathbb{R}$, we have $F(f) = 0$. We define a function $g \colon \mathbb{R} \to \mathbb{R}$ by
  \begin{equation*}
    g(x) = 2x.
  \end{equation*}
  Then, since $g$ is not $\LL$-definable and $(g, \hat{\rho}_{\real \Rightarrow \real}(g, f), f) \in \rho_{\real \Rightarrow \real}$, we obtain
  \begin{equation*}
    (0, 0, F(f)) =    
    (O(g), D(g, \hat{\rho}_{\real \Rightarrow \real}(g, f)), F(f))
    \in \rho_{\real}.    
  \end{equation*}
  Hence, $F(f) = 0$.
\end{proof}

\begin{prop}
  \label{prop:ord3}
  Let $k \colon \mathbb{R} \to \mathbb{R}$ be the constant function given by $k(x) = 1$, and we define $J \colon ((\mathbb{R} \Rightarrow \mathbb{R}) \Rightarrow \mathbb{R}) \to \mathbb{R}$ by $J(F) = F(k)$. Then, we have
  \begin{equation*}
    (J, \sigma_{J}, J) \notin \delta_{((\real \Rightarrow \real) \Rightarrow \real) \Rightarrow \real}.
  \end{equation*}
\end{prop}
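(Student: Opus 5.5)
The plan is to exhibit one triple in $\delta_{(\real \Rightarrow \real) \Rightarrow \real}$ at which $\sigma_{J}$ is too small. By definition of $\delta$ at function type, $(J, \sigma_{J}, J) \in \delta_{((\real \Rightarrow \real) \Rightarrow \real) \Rightarrow \real}$ would require $(J F, \sigma_{J}(F, a), J G) \in \delta_{\real}$ for every $(F, a, G) \in \delta_{(\real \Rightarrow \real) \Rightarrow \real}$. That is, it would require $|F(k) - G(k)| \sqsupseteq \sigma_{J}(F, a)$. I will take $F = O$, $G = Z$ and $a = D = \hat{\delta}_{(\real \Rightarrow \real) \Rightarrow \real}(O, Z)$. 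Then the left-hand side is $|0 - 1| = 1$, and I will show that $\sigma_{J}(O, D) = 0$, the top element of the Lawvere quantale. Since $1 \not\sqsupseteq 0$, this gives the contradiction.

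\emph{Step 1: $(O, D, Z) \in \delta_{(\real \Rightarrow \real) \Rightarrow \real}$.} Both $O$ and $Z$ are $\LL$-definable. So it suffices that $\delta_{(\real \Rightarrow \real) \Rightarrow \real}$ is closed, for then $(O, \hat{\delta}(O,Z), Z)$ lies in it by Proposition~\ref{prop:rhd}. Closedness holds by induction on types. For $\delta_{\real}$ it is immediate. At a function type, the defining conditions are $(fx, d(x,a), gy) \in \delta_{\mathsf{B}}$ and $(fx, d(x,a), fy) \in \delta_{\mathsf{B}}$ for each fixed $(x,a,y)$. These are preserved under pointwise joins and under passing to smaller $d$, because $\delta_{\mathsf{B}}$ is closed. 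Pointwise joins and pointwise-smaller functions also remain monotone in the second argument wherever needed, as in the proof of Proposition~\ref{lem:arrow}.

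\emph{Step 2: compute $\sigma_{J}(O, D)$.} Here $\sigma_{J} = \hat{\rho}(J,J)$. Proposition~\ref{lem:arrow} gives
\begin{equation*}
  \sigma_{J}(O, D) = \bigsqcap_{(O, D, F) \in \rho_{(\real \Rightarrow \real) \Rightarrow \real}} \hat{\rho}_{\real}(J O, J F) \sqcap \hat{\rho}_{\real}(J O, J F).
\end{equation*}
By Lemma~\ref{lem:ord2}, the only $F$ in the index set is $F = O$. It does belong to the index set: $(O, D, O) \in \rho$ holds because every required distance is $|0 - 0| = 0 \sqsupseteq D(f,d)$. Note also that $D \in \mathcal{Q}$, since $D$ is monotone by the formula in the proof of Lemma~\ref{lem:ord2}. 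Hence the meet is over a single term, and $\sigma_{J}(O, D) = |O(k) - O(k)| = 0$.

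\emph{Step 3: conclude.} Membership of $(J, \sigma_{J}, J)$ in $\delta$, applied to $(O, D, Z)$, would give $(J O, \sigma_{J}(O,D), J Z) = (0, 0, 1) \in \delta_{\real}$, i.e.\ $|0 - 1| \sqsupseteq 0$. In real terms this says $1 \leq 0$, which is false.

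The main obstacle is Step 2, specifically the index-set identification. It rests entirely on Lemma~\ref{lem:ord2}. That lemma is where the non-definable function $g(x) = 2x$ forces every candidate $F$ to equal $O$. The remaining work is checking that $O$ itself qualifies, which is routine.
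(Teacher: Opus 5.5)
Your proof is correct and follows the paper's argument. You compute $\sigma_{J}(O, D) = 0$ via Proposition~\ref{lem:arrow} and Lemma~\ref{lem:ord2}, then apply the assumed membership at $(O, D, Z) \in \delta_{(\real \Rightarrow \real) \Rightarrow \real}$ to obtain $(0,0,1) \in \delta_{\real}$. Your closedness argument for $(O,D,Z)\in\delta_{(\real \Rightarrow \real) \Rightarrow \real}$ fills in a step the paper leaves implicit. Checking that $O$ itself lies in the index set is harmless but not needed, since an empty meet in the Lawvere quantale would already be $0$.
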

\begin{proof}
  We suppose $(J, \sigma_{J}, J) \in \delta_{((\real \Rightarrow \real) \Rightarrow \real) \Rightarrow \real}$ and derive a contradiction. It follows from Lemma~\ref{lem:arrow} that we have
  \begin{equation*}
    \sigma_{J}(O, D) =
    \bigsqcap_{(O, D, F)
      \in \rho_{(\real \Rightarrow \real) \Rightarrow \real}}
    |JO - JF|
    =
    |JO - JO| = 0
  \end{equation*}
  where the middle equality follows from Lemma~\ref{lem:ord2}. Since $(O, D, Z) \in \delta_{(\real \Rightarrow \real) \Rightarrow \real}$, we obtain
  \begin{equation*}
    (0, 0, 1) =
    (JO, \sigma_{J}(O, D), JZ) \in \delta_{\real},
  \end{equation*}
  a contradiction.
\end{proof}
By Proposition~\ref{prop:maximal} and Proposition~\ref{prop:ord3}, we see that $\delta$ is a maximal differential prelogical relation but is not a greatest differential prelogical relation of $\mathfrak{R}$. Hence, there is no greatest element in $\mathfrak{R}$.

\subsection{Absence of a Greatest Differential Prelogical Relation, Informally}
\label{sec:absence-great-diff}

We demonstrate how an attempt to prove that $\delta$ is the greatest differential prelogical relation fails, identifying the form of the fundamental property as the obstacle. Assuming $\varrho_{\mathsf{A}} \subseteq \delta_{\mathsf{A}}$, we attempt to prove that for any $(\sem{\mathsf{t}}, d, \sem{\mathsf{t}}) \in \varrho_{\mathsf{A} \Rightarrow \real}$, we have $(\sem{\mathsf{t}}, d, \sem{\mathsf{t}}) \in \delta_{\mathsf{A} \Rightarrow \real}$. To this end, we must show that for any $(\sem{\mathsf{p}}, a, \sem{\mathsf{q}}) \in \delta_{\mathsf{A}}$, we have
\begin{equation}\label{eq:dpr}
  (\sem{\mathsf{t} \, \mathsf{p}}, d(\sem{\mathsf{p}}, a), \sem{\mathsf{t} \, \mathsf{q}}) \in \delta_{\real}. 
\end{equation}
However, the assumptions $\varrho_{\mathsf{A}} \subseteq \delta_{\mathsf{A}}$ and $(\sem{\mathsf{t}}, d, \sem{\mathsf{t}}) \in \varrho_{\mathsf{A} \Rightarrow \real}$ are insufficient to deduce the claim \eqref{eq:dpr} since $(\sem{\mathsf{p}}, a, \sem{\mathsf{q}}) \in \delta_{\mathsf{A}}$ is not necessarily an element of $\varrho_{\mathsf{A}}$. Therefore, we must appeal to the fundamental property. By the fundamental property, we have
\begin{equation*}
  (\sem{\mathsf{t}}, \fsem{\mathsf{t}}, \sem{\mathsf{t}})
  \in \delta_{\mathsf{A} \Rightarrow \real},
\end{equation*}
from which we obtain
\begin{equation*}
  (\sem{\mathsf{t} \, \mathsf{p}}, \fsem{\mathsf{t}}(\sem{\mathsf{p}}, a),
  \sem{\mathsf{t} \, \mathsf{q}}) \in \delta_{\real}.
\end{equation*}
This result is close to our goal: if $\fsem{\mathsf{t}} = \sigma_{\sem{\mathsf{t}}}$, then since $d \sqsubseteq \sigma_{\sem{\mathsf{t}}} = \fsem{\mathsf{t}}$, we can deduce the claim \eqref{eq:dpr}. However, this equality is not necessarily true. This constitute the obstacle to this proof: the fundamental property only guarantees a construction of a lower bound $\fsem{\mathsf{t}} \sqsubseteq \sigma_{\sem{\mathsf{t}}}$. In fact, it follows from Proposition~\ref{prop:ord3} that we have $\fsem{\mathsf{t}} \sqsubsetneq \sigma_{\sem{\mathsf{t}}}$ where
\begin{equation*}
  \mathsf{t} = \lam{\mathsf{f}}{(\real \Rightarrow \real) \Rightarrow \real}{\mathsf{f} \, (\lam{\mathsf{x}}{\real}{\underline{1}})}. 
\end{equation*}

\subsection{Contextual Metric?}
\label{sec:observ-metr}

It is well-known that different notions of program equivalence for a given language can be compared, with one equivalence being \emph{coarser} than another one when it identifies \emph{more} programs than the other. Under this ordering, program equivalences do indeed form a complete lattice where we have the coarsest one given by  the contextual equivalence, and the finest one arising from a syntactic equational theory. The former equips program equivalences with computational meaning, and the latter is a finitary reasoning principle for program equivalences. Furthermore, we can also derive program equivalences from denotational semantics, which lie between the contextual equivalence and the syntactic equational theory. These program equivalences enable us to prove contextual equivalences and disprove syntactic equalities.

Here, we examined the poset $\mathfrak{R}$ to explore an analogous situation. As we observed in Section~\ref{sec:concr-descr-least}, we have the least differential prelogical relation as a quantitative counterpart of the syntactic equational theories in the lattices of program equivalences. However, there is no greatest differential prelogical relation in general as shown in Section~\ref{sec:absence-great-elem}. This suggests the absence of a natural ``contextual program metric'', as such metric should be a greatest differential prelogical relation analogously to how the contextual equivalence is the greatest program equivalence.

How can we bridge the gap between program equivalences and differential prelogical relations? The main technical obstacle towards defining ``contextual program metric'' lies in how we measure program distances. Since distances between programs describe dependence of outputs on inputs, when we measure distances between programs, we should observe relationships between distances between outputs and distances between contexts. Therefore, we should define a notion of distances between contexts \emph{before} we define a contextual metric. However, defining distances between contexts requires metrics on terms that contexts contain, which leads to a circularity.

\section{Related Work}

Differential logical relations for a simply typed language were introduced in \cite{DGY19}, and later extended to languages with monads \cite{DG22}, and related to incremental computing \cite{DLG21}. Moreover, a unified framework for operationally-based logical relations, subsuming differential logical relations, was introduced in \cite{lmcs:11041}. The connections with metric spaces and partial metric spaces have been explored already in \cite{Geoffroy2020, PistoneLICS}, on the one hand providing a series of negative results that motivate the present work, and on the other hand producing a class of metric and partial metric models based on a different relational construction.

The literature on the interpretation of linear or graded lambda-calculi in the category of metric spaces and non-expansive functions is ample \cite{Reed2010, Gaboardi2013, Gaboardi2017, Gavazzo2018, dallagoFSCD23, SB26}. A related approach is that of quantitative algebraic theories \cite{Plotk}, which aims at capturing metrics over algebras via an equational presentation. These have been extended both to quantale-valued metrics \cite{Dahlqvist2023} and to the simply typed (i.e., non-graded) languages \cite{Honsell2022}, although in the last case the non-expansivity condition makes the construction of interesting algebras rather challenging.

The literature on partial metric spaces is vast, as well. Introduced by Matthews \cite{matthews}, they have been largely explored for the metrization of domain theory \cite{Bukatin1997, Schellekens2003, Smyth2006} and, more recently, of $\lambda$-theories \cite{maestracci2025}. An elegant categorical description of partial metrics via the quantaloid of \emph{diagonals} is introduced in \cite{Stubbe2018}. As this construction is obviously related to the notion of quasi-reflexivity and left strong transitivity here considered, it would be interesting to look for analogous categorical descriptions of the quasi-quasi-metrics here introduced.

%%% Local Variables:
%%% mode: LaTeX
%%% TeX-master: "tmp.tex"
%%% End:

% !TEX root = main.tex

\section{Conclusion}
\label{sec:conclusion}

In this paper we have explored the connections between the notions of program distance arising from differential logical relations and those defined via quasi-quasi-metric spaces. Our results in Section~\ref{sec:quasi-quasi-metric} and Section~\ref{sec:category-quasi-quasi} provide a conceptual bridge between differential logical relations in the literature and various formulations of quantale-valued metrics. This bridge could be used to exploit methods and results from the vast area of research on quantale-valued relations \cite{Hofmann2014, Stubbe2014} for the study of program distances in higher-order programming languages. For instance, natural directions are the characterization of limits and, more generally, of topological properties via logical relations, as suggested by recent work \cite{BCDG22}, although in a qualitative setting. At the same time, as discussed in Section~\ref{sec:towards-lattice-mmms}, our approach based on quasi-quasi-metric spaces suggests natural questions concerning the definition of mathematically reasonable class of ``metrics'' on the lambda calculus. Namely, is there a class of ``metrics'' on the lambda calculus where we have a greatest and a least ``metrics'' equipped with computational meanings? Our argument in Section~\ref{sec:absence-great-diff} points that the current form of the fundamental lemma is too weak. In fact, the fundamental lemma (Theorem~\ref{thm:fundamental-lemma}) only provides the construction of derivatives as approximations of self-distances, and we have no mathematical characterization of derivatives.

There are several further directions for this work. While we here focused on non-symmetric differential (pre)logical relations, understanding the metric structure of the strictly symmetric case, as in \cite{DGY19}, would be interesting as well. Notice that this would require to abandon quasi-reflexivity, cf.~Remark~\ref{rem:symmetry}. Finally, while in this paper we only considered simple types, it would be interesting to extend our work based on quasi-quasi-metric spaces to account for other constructions like e.g.~monadic types as in \cite{DG22}. It is thus natural to explore the application of methods arising from quasi-metrics, quasi-quasi-metrics, strong quasi-quasi-metrics and partial quasi-metrics for the study of languages with effects like e.g.~probabilistic choice.

\bibliographystyle{alphaurl}
\bibliography{main}

\appendix

\end{document}